\numberwithin{equation}{section}
\newcommand{\bigslant}[2]{{\raisebox{.3em}{$#1$}\big/ \raisebox{-.3em}{$#2$} }}
\title[Linear Bosonic Quantum Fields for Causal Variational Principles]{Linear Bosonic Quantum Field Theories \\
Arising from Causal Variational Principles}
\author[C.\ Dappiaggi]{Claudio Dappiaggi}
\address{Dipartimento di Fisica \\ Universit{\`a} degli Studi di Pavia\\ and INFN, Sezione di Pavia \\ Via Bassi, 6 --  I-27100 Pavia \\ Italy}
\email{claudio.dappiaggi@unipv.it}
\author[F.\ Finster]{Felix Finster}
\address{Fakult\"at f\"ur Mathematik \\ Universit\"at Regensburg \\ D-93040 Regensburg \\ Germany}
\email{finster@ur.de, marco.oppio.r@gmail.com} 
\author[M.\ Oppio]{Marco Oppio \\ \\ December 2021}
\newtheorem{Def}{Definition}[section]
\newtheorem{Thm}[Def]{Theorem}
\newtheorem{Prp}[Def]{Proposition}
\newtheorem{Lemma}[Def]{Lemma}
\newtheorem{Remark}[Def]{Remark}
\newtheorem{Assumption}[Def]{Assumption}
\newcommand{\Thanks}{\vspace*{.5em} \noindent \thanks}
\newcommand{\beq}{\begin{equation}}
\newcommand{\eeq}{\end{equation}}
\newcommand{\Proof}{\begin{proof}}
\newcommand{\QED}{\end{proof} \noindent}
\newcommand{\la}{\langle}
\newcommand{\ra}{\rangle}
\newcommand{\Sl}{\mathopen{\prec}}
\newcommand{\Sr}{\mathclose{\succ}}
\newcommand{\C}{\mathbb{C}}
\newcommand{\R}{\mathbb{R}}
\newcommand{\1}{\mbox{\rm 1 \hspace{-1.05 em} 1}}
\newcommand{\N}{\mathbb{N}}
\renewcommand{\H}{\mathscr{H}}
\newcommand{\lla}{\langle\!\langle}
\newcommand{\rra}{\rangle\!\rangle}
\newcommand{\F}{{\mathscr{F}}}
\newcommand{\Dir}{{\mathcal{D}}}
\renewcommand{\L}{{\mathcal{L}}}
\newcommand{\Sact}{{\mathcal{S}}}
\newcommand{\s}{{\mathfrak{s}}}
\newcommand{\loc}{\text{\rm{loc}}}
\renewcommand{\sc}{\text{\rm{sc}}}
\newcommand{\scrM}{\myscr M}
\newcommand{\J}{\mathfrak{J}}
\newcommand{\Jlin}{\mathfrak{J}^\text{\rm{\tiny{lin}}}}
\newcommand{\Jtest}{\mathfrak{J}^\text{\rm{\tiny{test}}}}
\newcommand{\Jvary}{\mathfrak{J}^\text{\rm{\tiny{vary}}}}
\newcommand{\Ctest}{C^\text{\rm{\tiny{test}}}}
\newcommand{\Gdiff}{\Gamma^\text{\rm{\tiny{diff}}}}
\newcommand{\Gtest}{\Gamma^\text{\rm{\tiny{test}}}}
\newcommand{\Jdiff}{\mathfrak{J}^\text{\rm{\tiny{diff}}}}
\newcommand{\bitem}{\begin{itemize}[leftmargin=2em]}
\newcommand{\eitem}{\end{itemize}}
\newcommand{\itemD}{\item[{\raisebox{0.125em}{\tiny $\blacktriangleright$}}]}
\newcommand{\tb}{|\!|\!|} 
\newcommand{\hol}{\text{\rm{hol}}}
\newcommand{\ah}{\text{\rm{ah}}}
\newcommand{\B}{{\mathscr{B}}}
\DeclareFontFamily{OT1}{rsfso}{}
\DeclareFontShape{OT1}{rsfso}{m}{n}{ <-7> rsfso5 <7-10> rsfso7 <10-> rsfso10}{}
\DeclareMathAlphabet{\myscr}{OT1}{rsfso}{m}{n}
\DeclareMathOperator{\im}{Im}
\DeclareMathOperator{\supp}{supp}
\renewcommand{\u}{\mathfrak{u}}
\renewcommand{\v}{\mathfrak{v}}
\newcommand{\w}{\mathfrak{w}}
\newcommand{\h}{\mathfrak{h}}
\begin{document}

\maketitle

\begin{abstract}
It is shown that the linearized fields of causal variational principles
give rise to linear bosonic quantum field theories.
The properties of these field theories are studied and compared
with the axioms of local quantum physics.
Distinguished quasi-free states are constructed.
\end{abstract} 

\tableofcontents

\section{Introduction}
The theory of {\em{causal fermion systems}} is a novel approach to fundamental physics
(see the reviews~\cite{dice2014, review, dice2018}, the textbook~\cite{cfs} or the website~\cite{cfsweblink}). Recently, the connection to quantum states has been
established~\cite{fockbosonic, fockfermionic}. Moreover, in~\cite{linhyp} the Cauchy problem
for {\em{linearized fields}} has been studied in the more general setting of {\em{causal variational principles}}
with energy methods inspired by the theory of hyperbolic partial differential equations.
Based on these concepts and results, in the present paper we show that causal variational principles
give rise to a class of linear bosonic quantum field theories generated by the
linearized fields.
We formulate these theories in the familiar language of axiomatic quantum field theory
in terms of an {\em{algebra of fields}} satisfying canonical commutation relations.
This makes it possible to verify whether and to specify in which sense
our quantum field theories satisfy the axioms of local quantum physics.
We find that the axiom of microlocality and the time slice axiom
are violated in the strict sense. But they still hold on macroscopic scales in the following sense.
In order to formulate the {\em{time slice property}}, instead of working with field operators
on a Cauchy surface, one must consider a {\em{Cauchy surface layer}}, which can be thought of
as a Cauchy surface times a small time interval.
The axiom of {\em{microlocality}} holds in the sense that the field operators commute
if their supports can be separated by a suitable spacelike surface layer (as shown in Figure~\ref{figurestrongCD}
on page~\pageref{figurestrongCD}).
It is one of the main objectives of this paper to make these statements mathematically precise
and to derive them from properties of minimizing measures of causal variational principles.
Moreover, based on the complex structure on the linearized fields
obtained in~\cite[Section~6.3]{fockbosonic}, distinguished
quasi-free states are constructed.

In order to clarify our concepts, we note that the construction of the algebra and
its Fock representation can be understood as a ``quantization'' of a classical bosonic field theory,
with the classical bosonic fields given by the linearized fields of the causal variational principle.
On the other hand, as worked out in~\cite{fockbosonic, fockfermionic},
the bosonic field algebra as well as a corresponding state also arise in the mathematical analysis of a
minimizing measure of an interacting causal variational principle.
In this sense, the interacting system does not need to be ``quantized,'' but
instead the causal variational principle already incorporates the
full quantum dynamics. We also point out that here we restrict attention to {\em{bosonic}} fields.
Fermionic fields could be treated similarly starting from the causal action principle for
causal fermion systems, working with the dynamical wave equation derived in~\cite{dirac}
and the fermionic algebra as well as the distinguished state constructed
in~\cite[Sections~4.1 and~4.5]{fockfermionic}.

The paper is organized as follows.
After the necessary preliminaries (Section~\ref{secprelim}), the classical theory of
linearized fields is developed, and the properties of the resulting classical dynamics are worked out
(Section~\ref{secclassical}). We proceed by constructing the algebra generated by these fields
and study its properties in comparison with the axioms of local quantum physics
(Section~\ref{secalgebra}). Finally, distinguished quasi-free states
are constructed (Section~\ref{secrepresent}).

\section{Preliminaries} \label{secprelim}
This section provides the necessary background on causal variational principles
and the linearized field equations.

\subsection{Causal Variational Principles in the Non-Compact Setting} \label{seccvp}
We consider causal variational principles in the non-compact setting as
introduced in~\cite[Section~2]{jet}. Thus we let~$\F$ be a (possibly non-compact)
smooth manifold of finite dimension~$m \geq 1$
and~$\rho$ a (positive) Borel measure on~$\F$.
Moreover, we are given a non-negative function~$\L : \F \times \F \rightarrow \R^+_0$
(the {\em{Lagrangian}}) with the following properties:
\bitem
\item[(i)] $\L$ is symmetric: $\L(x,y) = \L(y,x)$ for all~$x,y \in \F$.
\item[(ii)] $\L$ is lower semi-continuous, i.e.\ for all sequences~$x_n \rightarrow x$ and~$y_{n'} \rightarrow y$,
\[ \L(x,y) \leq \liminf_{n,n' \rightarrow \infty} \L(x_n, y_{n'})\:. \]
\eitem
The {\em{causal variational principle}} is to minimize the action
\beq \label{Sact} 
\Sact (\rho) := \int_\F d\rho(x) \int_\F d\rho(y)\: \L(x,y) 
\eeq
under variations of the measure~$\rho$, keeping the total volume~$\rho(\F)$ fixed
({\em{volume constraint}}).
Here the  notion {\em{causal}} in ``causal variational principles'' refers to the fact that
the Lagrangian induces on~$M$ a causal structure. Namely, two spacetime points~$x,y \in M$
are said to be {\em{timelike}} and {\em{spacelike}} separated if~$\L(x,y)>0$ and~$\L(x,y)=0$, respectively.
For more details on this notion of causality, its connection to the causal structure
in Minkowski space and to general relativity we refer to~\cite[Chapter~1]{cfs}, \cite{nrstg}
and~\cite[Sections~4.9 and~5.4]{cfs}.

If the total volume~$\rho(\F)$ is finite, one minimizes~\eqref{Sact}
within the class of all regular Borel measures with the same total volume.
If the total volume~$\rho(\F)$ is infinite, however, it is not obvious how to implement the volume constraint,
making it necessary to proceed as follows.
We need the following additional assumptions:
\bitem
\item[(iii)] The measure~$\rho$ is {\em{locally finite}}
(meaning that any~$x \in \F$ has an open neighborhood~$U$ with~$\rho(U)< \infty$).\label{Cond3}
\item[(iv)] The function~$\L(x,.)$ is $\rho$-integrable for all~$x \in \F$, giving
a lower semi-continuous and bounded function on~$\F$. \label{Cond4}
\eitem
Given a regular Borel measure~$\rho$ on~$\F$, we vary over all
regular Borel measures~$\tilde{\rho}$ with
\beq \label{totvol}
\big| \tilde{\rho} - \rho \big|(\F) < \infty \qquad \text{and} \qquad
\big( \tilde{\rho} - \rho \big) (\F) = 0
\eeq
(where~$|.|$ denotes the total variation of a measure).
For such variations, the difference of the actions~$\Sact(\tilde{\rho})-\Sact(\rho)$
is defined by
\begin{align*}
\big( &\Sact(\tilde{\rho}) - \Sact(\rho) \big) = \int_\F d(\tilde{\rho} - \rho)(x) \int_\F d\rho(y)\: \L(x,y) \\
&\quad + \int_\F d\rho(x) \int_\F d(\tilde{\rho} - \rho)(y)\: \L(x,y) 
+ \int_\F d(\tilde{\rho} - \rho)(x) \int_\F d(\tilde{\rho} - \rho)(y)\: \L(x,y) \:.
\end{align*}
If this difference is non-negative for all~$\tilde{\rho}$, then~$\rho$
is said to be a {\em{minimizer}}. 
The existence theory for such minimizers is developed in~\cite{noncompact}.
Moreover, in~\cite[Lemma~2.3]{jet} it is shown that a minimizer
satisfies the {\em{Euler-Lagrange (EL) equations}} which states
that, for a suitable value of the parameter~$\s>0$,
the lower semi-continuous function~$\ell : \F \rightarrow \R_0^+$ defined by
\beq \label{ldef}
\ell(x) := \int_\F \L(x,y)\: d\rho(y) - \s
\eeq
is minimal and vanishes on spacetime~$M:= \supp \rho$,
\beq \label{EL}
\ell|_M \equiv \inf_\F \ell = 0 \:.
\eeq
For more details we refer to~\cite[Section~2]{jet}.

\subsection{The Restricted Euler-Lagrange Equations} \label{secwEL}
The EL equations~\eqref{EL} are nonlocal in the sense that
they make a statement on~$\ell$ even for points~$x \in \F$ which
are far away from spacetime~$M$.
It turns out that, for the applications we have in mind, it is preferable to
evaluate the EL equations only locally in a neighborhood of~$M$.
This leads to the {\em{restricted EL equations}} introduced in~\cite[Section~4]{jet}.
We here give a slightly less general version of these equations which
is sufficient for our purposes. In order to explain how the restricted EL equations come about,
we begin with the simplified situation that the function~$\ell$ is smooth.
In this case, the minimality of~$\ell$ implies that the derivative of~$\ell$
vanishes on~$M$, i.e.\
\beq \label{ELrestricted}
\ell|_M \equiv 0 \qquad \text{and} \qquad D \ell|_M \equiv 0
\eeq
(where~$D \ell(p) : T_p \F \rightarrow \R$ is the derivative).
In order to combine these two equations in a compact form,
it is convenient to consider a pair~$\u := (a, u)$
consisting of a real-valued function~$a$ on~$M$ and a vector field~$u$
on~$T\F$ along~$M$, and to denote the combination of 
multiplication and directional derivative by
\begin{equation} \label{Djet}
\nabla_{\u} \ell(x) := a(x)\, \ell(x) + \big(D_u \ell \big)(x) \:.
\end{equation}
Then the equations~\eqref{ELrestricted} imply that~$\nabla_{\u} \ell(x)$
vanishes for all~$x \in M$.
The pair~$\u=(a,u)$ is referred to as a {\em{jet}}.

In the general lower-continuous setting, one must be careful because
the directional derivative~$D_u \ell$ in~\eqref{Djet} need not exist.
Our method for dealing with this problem is to restrict attention to vector fields
for which the directional derivative is well-defined.
Moreover, we must specify the regularity assumptions on~$a$ and~$u$.
To begin with, we always assume that~$a$ and~$u$ are {\em{smooth}} in the sense that they
have a smooth extension to the manifold~$\F$. Thus the jet~$\u$ should be
an element of the jet space
\beq \label{Jdef}
\J := \big\{ \u = (a,u) \text{ with } a \in C^\infty(M, \R) \text{ and } u \in \Gamma(M, T\F) \big\} \:,
\eeq
where~$C^\infty(M, \R)$ and~$\Gamma(M,T\F)$ denote the space of real-valued functions and vector fields
on~$M$, respectively, which admit smooth extensions to~$\F$.

Clearly, the fact that a jet~$\u$ is smooth does not imply that the functions~$\ell$
or~$\L$ are differentiable in the direction of~$\u$. This must be ensured by additional
conditions which are satisfied by suitable subspaces of~$\J$,
which we now introduce.
First, we let~$\Gdiff$ be formed by those vector fields for which the
directional derivative of the function~$\ell$ exists,
\[ \Gdiff(M,T\F) = \big\{ u \in \Gamma(M, T\F) \;\big|\; \text{$D_{u} \ell(x)$ exists for all~$x \in M$} \big\} \:. \]
This gives rise to the jet space
\[ 
\Jdiff := C^\infty(M, \R) \oplus \Gdiff(M,T\F) \;\subset\; \J \:. 
\]
For the jets in~$\Jdiff$, the combination of multiplication and directional derivative
in~\eqref{Djet} is well-defined. 
Next, we choose a linear subspace~$\Jtest \subset \Jdiff$ with the properties
that its scalar and vector components are both vector spaces,
\[ \Jtest = \Ctest(M, \R) \oplus \Gtest \;\subseteq\; \Jdiff \:, \]
and that the scalar component is nowhere trivial in the sense that
\beq \label{Cnontriv}
\text{for all~$x \in M$ there is~$a \in \Ctest(M, \R)$ with~$a(x) \neq 0$}\:.
\eeq
Then the {\em{restricted EL equations}} read (for details see~\cite[(eq.~(4.10)]{jet})
\[ 
\nabla_{\u} \ell|_M = 0 \qquad \text{for all~$\u \in \Jtest$}\:. \]
For brevity, a solution of the restricted EL equations is also referred to as a {\em{critical measure}}.
We remark that, in the literature, the restricted EL equations are sometimes also referred to as
the {\em{weak}} EL equations. Here we prefer the notion ``restricted'' in order to avoid confusion
with weak solutions of these equations (as constructed in~\cite{linhyp}; see also
Section~\ref{secweak} below).
The purpose of introducing~$\Jtest$ is that it gives the freedom to restrict attention to the part of
information in the EL equations which is relevant for the application in mind.
For example, if one is interested only in the macroscopic dynamics, one can choose~$\Jtest$
to be composed of jets pointing in directions where the 
microscopic fluctuations of~$\ell$ are disregarded.

We conclude this section by introducing a few other jet spaces 
and by specifying differentiability conditions which will be needed later on.
When taking higher derivatives on~$\F$, we are facing the general difficulty
that, from a differential geometric perspective, one needs to introduce a connection on~$\F$.
While this could be done, we here use the simpler method that higher derivatives on~$\F$
are defined as partial derivatives carried out in {\em{distinguished charts}}.
More precisely, around each point~$x \in \F$ we select a distinguished chart and
carry out derivatives as partial derivatives acting on each tensor component in this chart.
We remark that, in the setting of causal fermion systems, an atlas of distinguished charts is provided
by the so-called symmetric wave charts (for details see~\cite[Section~6.1]{gaugefix} or~\cite[Section~3]{banach}).

We now define the spaces~$\J^\ell$, where~$\ell \in \N \cup \{\infty\}$ can be
thought of as the order of differentiability if the derivatives act  simultaneously on
both arguments of the Lagrangian:
\begin{Def} \label{defJvary}
For any~$\ell \in \N_0 \cup \{\infty\}$, the jet space~$\J^\ell \subset \J$
is defined as the vector space of test jets with the following properties:
\bitem
\item[\rm{(i)}] For all~$y \in M$ and all~$x$ in an open neighborhood of~$M$,
directional derivatives
\beq \label{derex}
\big( \nabla_{1, \v_1} + \nabla_{2, \v_1} \big) \cdots \big( \nabla_{1, \v_p} + \nabla_{2, \v_p} \big) \L(x,y)
\eeq
(computed component-wise in charts around~$x$ and~$y$)
exist for all~$p \in \{1, \ldots, \ell\}$ and all~$\v_1, \ldots, \v_p \in \J^\ell$.
\item[\rm{(ii)}] The functions in~\eqref{derex} are $\rho$-integrable
in the variable~$y$, giving rise to locally bounded functions in~$x$. More precisely,
these functions are in the space
\[ L^\infty_\loc\Big( M, L^1\big(M, d\rho(y) \big); d\rho(x) \Big) \:. \]
\item[\rm{(iii)}] Integrating the expression~\eqref{derex} in~$y$ over~$M$
with respect to the measure~$\rho$,
the resulting function (defined for all~$x$ in an open neighborhood of~$M$)
is continuously differentiable in the direction of every jet~$\u \in \Jtest$.
\eitem
\end{Def} \noindent
Here and throughout this paper, we use the following conventions for partial derivatives and jet derivatives:
\bitem
\itemD Partial and jet derivatives with an index $i \in \{ 1,2 \}$, as for example in~\eqref{derex}, only act on the respective variable of the function~$\L$.
This implies, for example, that the derivatives commute,
\[ 
\nabla_{1,\v} \nabla_{1,\u} \L(x,y) = \nabla_{1,\u} \nabla_{1,\v} \L(x,y) \:. \]
\itemD The partial or jet derivatives which do not carry an index act as partial derivatives
on the corresponding argument of the Lagrangian. This implies, for example, that
\[ \nabla_\u \int_\F \nabla_{1,\v} \, \L(x,y) \: d\rho(y) =  \int_\F \nabla_{1,\u} \nabla_{1,\v}\, \L(x,y) \: d\rho(y) \:. \]
\eitem
We point out that, in contrast to the method and conventions used in~\cite{jet},
{\em{jets are never differentiated}}.

We denote the $\ell$-times continuously differentiable test jets by~$\Jtest \cap \J^\ell$.
Moreover, compactly supported jets are denoted by a subscript zero, like for example
\beq \label{J0def}
\Jtest_0 := \{ \u \in \Jtest \:|\: \text{$\u$ has compact support} \} \:.
\eeq
In order to make sure that surface layer integrals exist (see Section~\ref{secosi} below), one needs
differentiability conditions of a somewhat different type (for details see~\cite[Section~3.5]{osi}):
\begin{Def} \label{defslr}
The jet space~$\Jtest$ is {\bf{surface layer regular}}
if~$\Jtest \subset \J^2$ and
if for all~$\u, \v \in \Jtest$ and all~$p \in \{1, 2\}$ the following conditions hold:
\begin{itemize}[leftmargin=2.5em]
\item[\rm{(i)}] The directional derivatives
\beq \nabla_{1,\u} \,\big( \nabla_{1,\v} + \nabla_{2,\v} \big)^{p-1} \L(x,y) \label{Lderiv1}
\eeq
exist.
\item[\rm{(ii)}] The functions in~\eqref{Lderiv1} are $\rho$-integrable
in the variable~$y$, giving rise to locally bounded functions in~$x$. More precisely,
these functions are in the space
\[ L^\infty_\text{\rm{loc}}\Big( L^1\big(M, d\rho(y) \big), d\rho(x) \Big) \:. \]
\item[\rm{(iii)}] The $\u$-derivative in~\eqref{Lderiv1} may be interchanged with the $y$-integration, i.e.
\[ \int_M \nabla_{1,\u} \,\big( \nabla_{1,\v} + \nabla_{2,\v} \big)^{p-1} \L(x,y)\: d\rho(y)
= \nabla_\u \int_M \big( \nabla_{1,\v} + \nabla_{2,\v} \big)^{p-1} \L(x,y)\: d\rho(y) \:. \]
\eitem
\end{Def} \noindent
The precise regularity assumptions needed for our applications will be specified below
whenever we need them.

\subsection{The Linearized Field Equations}
The EL equations~\eqref{EL} (and similarly the restricted EL equations~\eqref{ELrestricted})
are nonlinear equations because they involve the measure~$\rho$
in a twofold way: first, the measure comes up as the integration measure in~\eqref{ldef},
and second, the function~$\ell$ is evaluated on the support of this measure.
Following the common procedure in mathematics and physics, one can simplify the
problem by considering linear perturbations about a given solution.
Demanding that these linear perturbations preserve the EL equations gives rise
to the {\em{linearized field equations}}.
More precisely, in our context we consider families of measures
which satisfy the restricted EL equations.
In order to obtain these families of solutions, we want to vary a given
measure~$\rho$ (typically a minimizer or a solution of the restricted EL equations)
without changing its general structure.
To this end, we multiply~$\rho$ by a weight function
and apply a diffeomorphism, i.e.
\beq \label{rhoFf}
\tilde{\rho} = F_* \big( f \,\rho \big) \:,
\eeq
where~$F \in C^\infty(M, \F)$ and~$f \in C^\infty(M, \R^+)$ are smooth mappings
(as defined before~\eqref{Jdef}). A variation of~$\rho$ is described by a
family~$(f_\tau, F_\tau)$ with~$\tau \in (-\delta, \delta)$ and~$\delta>0$. Infinitesimally, the variation
is again described by a jet
\beq \label{vinfdef}
\v = (b,v) := \frac{d}{d\tau} (f_\tau, F_\tau) \big|_{\tau=0}\:.
\eeq
The property of the family of measures~$\tilde{\rho}_\tau$ of the form~\eqref{rhoFf}
to be critical for a family~$(f_\tau, F_\tau)$ for all~$\tau$
means infinitesimally in~$\tau$ that the jet~$\v$ defined by~\eqref{vinfdef}
satisfies the {\em{linearized field equations}} (for the derivation see~\cite[Section~3.3]{perturb}
and~\cite[Section~4.2]{jet})
\beq \label{eqlinlip}
\la \u, \Delta \v \ra|_M = 0 \qquad \text{for all~$\u \in \Jtest$} \:,
\eeq
where
\beq \label{eqlinlip2}
\la \u, \Delta \v \ra(x)
:= \nabla_{\u} \bigg( \int_M \big( \nabla_{1, \v} + \nabla_{2, \v} \big) \L(x,y)\: d\rho(y) - \nabla_\v \:\s \bigg) \:.
\eeq
In order for the last expression to be well-defined, we always assume that~$\v \in \J^1$.
We denote the vector space of all solutions of the linearized field equations by~$\Jlin \subset \J^1$.

\subsection{The Inhomogeneous Linearized Field Equations} \label{secweaklinear}
For the analysis of the linearized field equations~\eqref{eqlinlip}, it is preferable to 
allow for an inhomogeneity~$\w$.
One method is to regard the inhomogeneity as a vector in the dual space of~$\Jtest$,
making it possible to insert on the right hand side of~\eqref{eqlinlip} the dual pairing~$\la \u, \w\ra|_M$
(for details see~\cite[Section~2.3]{linhyp}). For simplicity, we here avoid dual jets
and introduce instead a scalar product on the test jets, making it possible to identify
jets with dual jets. This procedure is of advantage also because the scalar product
on the jets will be needed later on for the construction of weak solutions.

Thus we choose a {\em{Riemannian metric}}~$g$ on the manifold~$\F$.
We denote the subspace spanned by the
test jets at the point~$x\in M$ by
\[ 
 \J_x := \big\{ \u(x) \:|\: \u \in \Jtest \big\} = \R \times \Gamma_x \;\subset\;  \R\times T_x\F\:. \]
The Riemannian metric induces a metric on~$\J_x$ by
\beq \label{vsprod}
\la \v, \tilde{\v} \ra_x := b(x)\, \tilde{b}(x) + g_x \big(v(x),\tilde{v}(x) \big) \:.
\eeq
We denote the corresponding norm by~$\|.\|_x$.
We point out that the choice of the Riemannian metric is {\em{not canonical}}.
The freedom in choosing the Riemannian metric can be used in order to satisfy the
hyperbolicity conditions needed for proving existence of solutions (as explained
after~\cite[Definition~3.3]{linhyp}). Since we will use these existence results
later on, we assume that the Riemannian metric in~\eqref{vsprod} has been chosen in agreement with the hyperbolicity conditions.

Having a scalar product at our disposal, we can formulate the inhomogeneous equations
by modifying~\eqref{eqlinlip} to
\beq \label{inhom}
\la \u, \Delta \v \ra(x) = \la \u, \w \ra_x \qquad \text{for all~$\u \in \Jtest$ and~$x \in M$}
\eeq
with an inhomogeneity~$\w \in \Jtest$. The linear functional~$\u(x) \mapsto \la \u, \Delta \v \ra(x)$
on the left hand side of this equation uniquely determines a vector~$(\Delta \v)(x) \in \J_x$ via the relation
\beq \label{frechet}
\la \u(x), (\Delta \v)(x)\ra_x = \la \u, \Delta \v \ra(x) \qquad \text{for all~$\u \in \Jtest$}
\eeq
(here we use the Fr{\'e}chet-Riesz theorem in finite dimensions).
Using this identification, we can also formulate the inhomogeneous equations in the shorter form
\beq \label{strong1}
\Delta \v = \w \qquad \text{on~$M$} \:.
\eeq
In order to ensure that the left side of this equation is well-defined,
we always assume (exactly as in~\eqref{eqlinlip}) that the jet~$\v$ lies in~$\J^1$.
We also point out that, using that the scalar components of the test jets is nowhere trivial~\eqref{Cnontriv},
the inhomogeneous equation~\eqref{inhom} implies that the scalar components in~\eqref{strong1} 
on both sides coincide pointwise. The same is true for the vector components, if one keeps in
mind that, by definition~\eqref{frechet}, the vector component of~$(\Delta \v)(x)$ always lies in~$\Gamma_x$.

\subsection{Linearized Fields in a Simple Example and in the Physical Context} \label{secexamples}
We now explain the concept of linearized fields and outline the connection to the physical applications.
We begin with a simple mathematical example first given in~\cite[Section~5.2]{static}
(see also~\cite[Section~20.2]{intro}). 
We let~$\F=\R^2$ and choose the Lagrangian as
\beq \label{gauss2}
\L(x,y; x',y') = \frac{1}{\sqrt{\pi}}\: e^{-(x-x')^2} \big( 1 + y^2 \big)\big( 1 +y'^2 \big) \:,
\eeq
where~$(x,y), (x',y') \in \F$. Using Fourier transform in the first variable
and a direct estimate, one sees that the measure
\beq \label{mingauss2}
d\rho = dx \times \delta_y
\eeq
(where~$\delta_y$ is the Dirac measure)
is the unique minimizer of the causal action principle under variations of
finite volume~\eqref{totvol} (for details see~\cite[Lemma~5.2]{static}
or~\cite[Lemma~20.2.1]{intro}). This result can be understood directly from the
fact that the function~$1+y^2$ is minimal at~$y=0$, which leads the minimizing measure~$\rho$
to be supported at~$y=0$. The Gaussian in the variable~$x$, on the other hand,
can be understood as a repelling potential, giving rise to a uniform distribution of the form
of the Lebesgue measure~$dx$.

The function~$\ell$ in~\eqref{ldef} is computed by
\[ \ell(x,y) = \int_\F \L(x,y; x',y') \: d\rho(x',y') - \s = 1+ y^2 -\s \:. \]
Therefore, the EL equations~\eqref{EL} are indeed satisfied for~$\s=1$.
{\em{Linearized fields}} describe linear perturbations of the measure of the form~\eqref{rhoFf}.
According to~\eqref{vinfdef}, they can be described by a jet~$\v$, consisting of a scalar function~$b$
and a vector field~$v$. In our example,
\beq \label{Jdiffex}
\Jdiff = \J = C^\infty(\R) \oplus C^\infty(\R, \R^2) \:,
\eeq
where~$C^\infty(\R, \R^2)$ should be regarded as the space of two-dimensional
vector fields along the $x$-axis. For clarity, we point out that the vector field~$v$
does not need to be tangential to~$M$. Instead, if the second component of~$v$
is non-zero, the vector field~$v$ is transversal to~$M$.
In this case, the first variation also describes an infinitesimal change of the support of the measure.
The linearized field equations~\eqref{eqlinlip} read
\beq \label{lininter}
\nabla_{\u} \bigg( \int_{-\infty}^\infty \big( \nabla_{1, \v} + \nabla_{2, \v} \big) 
e^{-(x-x')^2} \big( 1 + y^2 \big)\big( 1 +y'^2 \big)
\: dx' - \nabla_\v \,\sqrt{\pi} \bigg) \bigg|_{y=y'=0} = 0 \:.
\eeq
This is an integral equation which tells us about the freedom in perturbing the measure
while preserving the EL equations.

The prime example of a causal variational principle is the {\em{causal action principle}}
for causal fermion systems. This physical example also serves as the motivation
for physical notions like ``spacetime'' or ``fields.'' In order to avoid excessive repetitions
with previous work, here we shall not enter the definition of the causal action principle
and its connection to causal variational principles,
but refer instead to~\cite[Section~2]{dirac} or to the text books~\cite{cfs, intro}.
Instead, we only make a few general remarks with should convey the correct qualitative picture.
In simple examples of causal fermion systems describing a physical system in
Minkowski space or in a globally hyperbolic manifold, the set~$\F$ is a manifold
of very high dimension (possibly even infinite-dimensional), whereas spacetime~$M$ 
is a four-dimensional manifold. This means that the support of the measure~$\rho$
is highly singular, in the sense that it is supported on a subset of~$\F$ of very large co-dimension.
The situation is a bit similar to that in the above example~\eqref{mingauss2},
except that the number of transverse directions is not one, but very large.
The linearized fields describe infinitesimal changes of the measure in all these directions.
More specifically, in a causal fermion system the set~$\F$ consists of linear operators on
a Hilbert space~$(\H, \la .|. \ra_\H)$. In simple physical examples, one chooses~$\H$ as a subspace of the
Hilbert space of solutions of the Dirac equation in Minkowski space or in
a globally hyperbolic Lorentzian spacetime
(for details see for example~\cite[Section~5.4]{intro} or~\cite[Section~1]{nrstg}).
To every point~$x \in \scrM$ of our classical spacetime, we associate
a spacetime point operator~$F(x) \in \F$ constructed as 
the local correlation operator of the Dirac wave functions, i.e.\ it is defined by the relations
\[ \la \psi \,|\, F(x)\, \phi \ra_\H := - \Sl \psi(x) | \phi(x) \Sr_x \qquad \text{for all~$\psi, \phi \in \H$} \:, \]
where~$\Sl .|. \Sr_x$ denotes the inner product on the Dirac spinors at~$x$.
Consequently, the vector~$v(x)$ describing first variations of the operator~$F(x)$
can be described by first variations of the Dirac wave functions.
Thus the vector field~$v$ of a jet~$\v=(b,v)$ corresponds to a first variation of all the Dirac wave functions
of the system. Likewise, the linearized field equations tell us which first variations of the wave functions
preserve the EL equations of the causal action principle.
In order to describe first variations of Dirac wave functions, it is most convenient to insert
a bosonic potential~$\B$ into the Dirac equation,
\[ \big( \Dir + \B - m ) \psi = 0 \:. \]
First variations of the bosonic potential~$\B$ give rise to first variations of the Dirac wave functions,
which in turn give rise to the vector field~$v$ of a corresponding jet~$\v$.
Here the potential~$\B$ can be chosen as arbitrary operator acting on Dirac wave functions.
For example, it can be chosen to describe an electromagnetic or gravitational field.
In this formulation, it becomes clear that the linearized field equations are a very general concept
which allows to describe arbitrary physical fields of any spin.

\section{The Classical Dynamics of Linearized Fields} \label{secclassical}
In this section we recall a few properties of solutions of the linearized field equations.
We report on results first obtained in~\cite{linhyp} combined with methods developed similarly
for the dynamical wave equation in~\cite{dirac}. Our presentation is less general than in~\cite{linhyp}.
Instead, we aim at presenting a setting which is convenient but nevertheless sufficiently general for
the applications in mind. The main simplification compared to~\cite{linhyp} is that we assume the
existence of a global foliation by surface layers. But we show that our results are independent of the
choice of this foliation. In order to ``localize'' the solutions, we impose suitable shielding conditions.
These assumptions will be justified in the forthcoming paper~\cite{localize}.

\subsection{Global Foliations and Surface Layers} \label{secosi} $\;$
The linearized field operator~$\Delta$ in~\eqref{eqlinlip2} is a nonlocal integral operator.
In this paper, we shall always make the simplifying assumption that the range of this integral
operator is finite in the following sense.
\begin{Def}\label{compactrange}
The Lagrangian is said to have {\bf{compact range}} on~$M$ if for any compact~$K\subset M$ there are a
compact~$K'\subset M$ as well as open neighborhoods~$\Omega \supset K$ and~$\Omega' \supset K'$
of~$\F$ such that
$$
\L(x,y)=0 \quad \text{if~$x \in \Omega$ and~$y\not\in \Omega'$} \:.
$$
\end{Def} \noindent
For a variant of this definition and its usefulness we refer to~\cite{noncompact}.

Next, we shall assume that there is a {\em{global foliation}}. In analogy to
foliations by hypersurfaces of equal time in Lorentzian geometry, the idea is to cover spacetime by a family of surface layer integrals parametrized by a variable~$t$ which can again be thought of as the time
of a global observer.
\begin{Def}\label{defglobalfoliation}
A {\bf{global foliation}} is a family
\[ 
\eta \in \C^\infty(\R \times M, \R)\quad\mbox{with}\quad 0 \leq \eta \leq 1 \]
with the following properties:\\[-0.8em] 
\begin{itemize}[leftmargin=2.5em]
\item[{\rm{(i)}}] The function~$\theta(t,.) := \partial_t \eta(t,.)$ is non-negative. 
\item[{\rm{(ii)}}] The surface layers cover all of~$M$ in the sense that
\beq \label{2eq}
M = \bigcup_{t \in \R} \supp \,\theta(t,.) \:.
\eeq
\item[{\rm{(iii)}}] {\bf{Separation property}}: Let~$s_1,s_2\in\R$ and let~$K\subset M$ be open and relatively compact with the property that~$\eta(s_1, .)|_K \equiv 1$ and~$\eta(s_2,.)|_K\equiv 0$. Then
\[ \eta(s_1,x) \,\eta(s_2,x) = \eta(s_2,x) \qquad \text{for all~$x \in M$}\:. \] 
\end{itemize}
We also write~$\eta(t,x)$ as~$\eta_t(x)$ and~$\theta(t,x)$ as~$\theta_t(x)$.
\end{Def} \noindent
With this notion at our disposal, we can introduce {\bf{time strips}} simply by restricting the union in~\eqref{2eq}
to finite intervals,
$$
L_{s}^{t}:=\bigcup_{r\in [s,t]}\supp \,\theta(r,.) \:.
$$

Next, we introduce corresponding {\em{surface layer integrals}}.
In order to have the largest possible flexibility, we shall work with a subspace
\[ 
\Jvary \subset \Jtest \:, \]
which we can choose arbitrarily. 
Similar to~\eqref{J0def}, the space of jets in~$\Jvary$ with
compact support are denoted by~$\Jvary_0$.
For any~$t \in I$ we introduce the two bilinear forms
\begin{align}
(.,.)^t &\::\: \Jvary_0 \times \Jvary_0 \rightarrow \R \:, \notag \\
(\u, \v)^t &= \int_U d\rho(x)\: \eta_t(x) \int_U d\rho(y)\: \big(1-\eta_t(y)\big)
\: \Big( \nabla_{1,\u} \nabla_{1,\v} - \nabla_{2,\u} \nabla_{2,\v} \Big) \L(x,y) \label{jipdef} \\
\sigma^t(.,.) &\::\: \Jvary_0 \times \Jvary_0 \rightarrow \R \:, \notag \\
\sigma^t(\u,\v)&= \int_U d\rho(x)\: \eta_t(x) \int_U d\rho(y)\: \big(1-\eta_t(y)\big)
\: \Big( \nabla_{1,\u} \nabla_{2,\v} - \nabla_{1,\v} \nabla_{2,\u} \Big) \L(x,y) \label{sympdef} \:,
\end{align}
referred to as the {\em{surface layer inner product}} and the {\em{symplectic form}}, respectively.
In order to ensure that the integrals are well-defined, we assume throughout
this section that~$\Jtest$ is {\em{surface layer regular}} (see Definition~\ref{defslr}).
These surface layer integrals are ``softened versions'' of the surface layer integrals
introduced in~\cite{osi}.
In \cite{action} it was shown that for Dirac sea configurations in Minkowski space,
the inner product~$(.,.)^t$ is positive definite on the Dirac wave functions and on the Maxwell field tensor. With this in mind, it is sensible to assume that~$(\v,\v)^t$ is positive.
In a more quantitative form, this will be the content of the \textit{hyperbolicity conditions} below.

In order to introduce the hyperbolicity conditions we need to introduce the following bilinear operator~$\Delta_2$, which appears in the second variation of the causal action (for more details see \cite[Section 3.2]{osi}),
\[ \begin{split}
&\Delta_2: \Jvary_0 \times \Jvary_0 \rightarrow L^\infty_\loc(M, \R) \:,\\
&\Delta_2[\v_1,\v_2]:= \int_M \big( \nabla_{1, \v_1} + \nabla_{2, \v_1} \big)\big( \nabla_{1, \v_2} + \nabla_{2, \v_2} \big) \L(x,y)\: d\rho(y) - \nabla_{\v_1}\nabla_{\v_2} \:\s \:.
\end{split} \]
Then we have the following \textit{energy identity}, which can be proved as in~\cite[Lemma 3.2]{linhyp}.
\begin{Lemma} {\bf{(energy identity)}}
	For all~$\v\in \Jvary_0$,
	\begin{equation*}
	\frac{d}{dt}(\v,\v)^t=2\int_{M}\langle \v,\Delta \v\rangle\,d\rho_t-2\int_M \Delta_2[\v,\v]\,d\rho_t+\s \int_M b^2\,d\rho_t
	\end{equation*}
\end{Lemma}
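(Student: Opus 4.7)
The plan is to compute the left-hand side directly by differentiating under the double integral in~\eqref{jipdef} and then massaging the result. The $t$-dependence of $(\v,\v)^t$ enters only through $\eta_t(x)$ and $1-\eta_t(y)$, whose $t$-derivatives are $\theta_t(x)$ and $-\theta_t(y)$ respectively. The interchange of $d/dt$ with the $x$- and $y$-integrations is justified by the surface layer regularity of~$\Jtest$ (Definition~\ref{defslr}), the compact-range hypothesis (Definition~\ref{compactrange}) and the compact support of~$\v$, which together give uniform absolute integrability on a neighborhood of the value of~$t$ of interest.

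Writing $K(x,y):=\bigl(\nabla_{1,\v}\nabla_{1,\v}-\nabla_{2,\v}\nabla_{2,\v}\bigr)\L(x,y)$, the symmetry $\L(x,y)=\L(y,x)$ makes $K$ antisymmetric under $x\leftrightarrow y$. A renaming of dummy variables in the term arising from $\partial_t(1-\eta_t(y))$ combined with this antisymmetry converts that term into one in which the $y$-weight is~$\eta_t(y)$ rather than~$-\eta_t(y)$. Adding it to the term from $\partial_t\eta_t(x)$ produces the factor $(1-\eta_t(y))+\eta_t(y)=1$, so the $y$-weight collapses and one is left with
\[
\frac{d}{dt}(\v,\v)^t \;=\; \int_M d\rho_t(x)\int_M d\rho(y)\,K(x,y)\:,
\]
where $d\rho_t(x)=\theta_t(x)\,d\rho(x)$.

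The next step is to relate this pointwise integrand to $\la\v,\Delta\v\ra(x)$ and $\Delta_2[\v,\v](x)$. The key is the elementary algebraic identity
\[
\nabla_{1,\v}^2-\nabla_{2,\v}^2 \;=\; 2\,\nabla_{1,\v}\bigl(\nabla_{1,\v}+\nabla_{2,\v}\bigr)-\bigl(\nabla_{1,\v}+\nabla_{2,\v}\bigr)^2\:,
\]
valid because the index-$1$ and index-$2$ derivatives commute on~$\L$. Integrating the two resulting pieces against $d\rho(y)$, the first becomes, by the interchange convention stated after Definition~\ref{defJvary}, the unindexed $\nabla_\v$ applied to $\int_M(\nabla_{1,\v}+\nabla_{2,\v})\L(x,y)\,d\rho(y)$, and hence reproduces $\la\v,\Delta\v\ra$ from~\eqref{eqlinlip2} up to the constant $\s$-correction. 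The second piece is exactly the integral part of $\Delta_2[\v,\v]$. Collecting the $\s$-contributions and using that $\s$ is a constant together with the convention that \emph{jets are never differentiated}, the only surviving contribution is the purely multiplicative term $\s\,b^2$.

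Integrating the resulting pointwise identity against $d\rho_t$ yields the claim. The main technical obstacle is bookkeeping: keeping track of which derivatives carry an index, of the interchange of unindexed $\nabla_\v$ with the $y$-integration (guaranteed by item~(iii) of Definition~\ref{defslr}), and of the precise treatment of the constant term~$\s$ under the ``jets are never differentiated'' convention. The algebraic and measure-theoretic steps are each routine once these conventions are applied consistently.
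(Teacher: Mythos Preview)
Your approach is correct and is precisely the method of~\cite[Lemma~3.2]{linhyp} to which the paper defers (the paper does not reproduce the argument here): differentiate under the double integral, exploit the antisymmetry of~$K$ under~$x\leftrightarrow y$ to collapse the $\eta_t$-weights to~$1$, apply the algebraic identity $\nabla_{1,\v}^2-\nabla_{2,\v}^2=2\nabla_{1,\v}(\nabla_{1,\v}+\nabla_{2,\v})-(\nabla_{1,\v}+\nabla_{2,\v})^2$, and track the~$\s$-contributions using the convention that jets are never differentiated.
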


We now introduce  a stronger and more quantitative version of positivity.
\begin{Def}\label{globfoliate}
	The global foliation in Definition~\ref{defglobalfoliation} is said to fulfill  the {\bf{hyperbolicity conditions}} if for every~$T>0$ there is a constant~$C>0$ such that
	\begin{equation*}
	(\v, \v)^t \geq \frac{1}{C} \int_M \Big( \|\v(x)\|_x^2\: + \big|\Delta_2[\v, \v]\big| \Big) \: d\rho_t(x)
	\label{hypcondM}
	\end{equation*}
	 for all~$t \in [-T,T]$ and all~$\v \in \Jvary_0$.
\end{Def} \noindent
The hyperbolicity conditions imply that the surface layer inner product at time~$t$ is
a scalar product. We denote the corresponding norm by
\[ 
\|\v\|^t := \sqrt{(\v,\v)^t} \:. \]

The next result can be proved as~\cite[Proposition 3.5]{linhyp}.
\begin{Prp} {\bf{(energy estimate)}} \label{prpees}
Let~$C$ be as in Definition~\ref{globfoliate}. Then, choosing
\[ \Gamma:= 2C e^{2C^2(1+\s^2/2)(t-s)}(t-s) \:, \]
the following estimate holds on the time strip~$L:=L_s^t$,
\beq \label{energyestimate}
\|\v\|_{L^2(L)}\le \Gamma \|\Delta \v\|_{L^2(L)}\qquad \text{for all~$\v\in \Jvary_0$ with~$\|\v\|^{s}=0$} \:.
\eeq
\end{Prp}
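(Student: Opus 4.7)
The plan is to set $f(t) := (\v,\v)^t$ and derive a first-order differential inequality of the Gr\"onwall type for $f$, then integrate once more in $t$ to pass from surface layer bounds to the $L^2$ estimate on the strip $L = L_s^t$. I would begin with the energy identity of the preceding lemma, which expresses $f'(t)$ as the sum of three terms: a coupling term with $\Delta\v$, a quadratic $\Delta_2$-term, and a scalar $\s b^2$ term. The aim is to bound the last two by $f(t)$ itself and to extract from the first a forcing term proportional to the pointwise $L^2$-norm of $\Delta\v$.

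The second and third terms are handled directly by the hyperbolicity condition: writing
\[
\Big|2\int_M \Delta_2[\v,\v]\,d\rho_t\Big| \le 2\int_M \big|\Delta_2[\v,\v]\big|\,d\rho_t \le 2C\,f(t)
\]
and $\s\int_M b^2\,d\rho_t \le \s\int_M \|\v\|_x^2\,d\rho_t \le \s C f(t)$. For the coupling term I would apply Cauchy--Schwarz followed by a weighted Young inequality,
\[
\Big|2\int_M \la \v,\Delta \v\ra\,d\rho_t\Big| \le \alpha \int_M \|\v\|_x^2\,d\rho_t + \frac{1}{\alpha}\int_M \|\Delta\v\|_x^2\,d\rho_t \le \alpha C\, f(t) + \frac{1}{\alpha}\,g(t),
\]
with $g(t) := \int_M \|\Delta\v\|_x^2\,d\rho_t$ and $\alpha>0$ chosen to balance the constants so that the coefficient of $f(t)$ collapses to $4C^2(1+\s^2/2)$. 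Combining the three estimates yields a differential inequality $f'(t) \le \lambda f(t) + \alpha^{-1} g(t)$, and since $\|\v\|^s = 0$ forces $f(s)=0$, Gr\"onwall gives $f(t) \le \alpha^{-1} e^{\lambda(t-s)}\int_s^t g(r)\,dr$.

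To convert this into the claimed $L^2$ bound, I would integrate once more in $t$ on $[s,t]$. Using the partition-of-unity structure of the foliation to identify $\int_L \|\v\|_x^2\,d\rho$ with $\int_s^t \int_M \|\v\|_x^2\,\theta_r(x)\,d\rho\,dr$ and applying the hyperbolicity bound $\int_M \|\v\|_x^2\,d\rho_r \le C f(r)$ pointwise in $r$, the left-hand side is dominated by $C\int_s^t f(r)\,dr$. A double-integral estimate against $e^{\lambda(t-s)}$ then produces the factor $(t-s)$ sitting outside the exponential in $\Gamma$, while the right-hand side becomes $\|\Delta\v\|^2_{L^2(L)}$ by the same identification applied to $g(r)$.

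The main obstacle, as usual for energy estimates, is the bookkeeping: fixing $\alpha$ correctly so that the exponent in Gr\"onwall is exactly $4C^2(1+\s^2/2)$ rather than something larger, and carefully matching the $\theta_r$-weighted ``sliced'' norms with the ambient spacetime $L^2$ norm on $L$ so that the final prefactor reads $2C(t-s)$ and not merely $\sqrt{C(t-s)}$. All of this is structurally parallel to the argument of \cite[Proposition~3.5]{linhyp}, the only new input being the presence of the (softened) cutoff $\eta_t$ in the surface layer integrals; once the foliation identities are in place, the Gr\"onwall step is routine.
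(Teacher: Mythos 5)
Your plan — feed the energy identity into a Gr\"onwall inequality for $f(r) := (\v,\v)^r$, then integrate once more in time to pass to the $L^2$ norm on the strip — is structurally the right one and is indeed what underlies~\cite[Proposition~3.5]{linhyp}. The treatment of the $\Delta_2$- and $\s b^2$-terms via the hyperbolicity condition is fine. The genuine gap is in how you absorb the coupling term $2\int_M \la\v,\Delta\v\ra\,d\rho_r$. Applying Young's inequality with a fixed parameter $\alpha$ turns it into $\alpha C f(r) + \alpha^{-1} g(r)$ with $g(r) := \int_M \|\Delta\v\|_x^2\,d\rho_r$, and Gr\"onwall then produces $f(r) \lesssim \alpha^{-1} e^{\lambda(r-s)} \int_s^r g$. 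Integrating in $r$ and taking the square root inevitably yields a prefactor of order $\sqrt{\alpha^{-1}(t-s)}$ in front of $\|\Delta\v\|_{L^2(L)}$, which behaves like $\sqrt{t-s}$ for any choice of $\alpha$ that keeps the exponent $\lambda$ bounded; you cannot recover the linear factor $(t-s)$ appearing in $\Gamma$ this way. Forcing $\alpha \sim (t-s)^{-1}$ fixes the prefactor but blows up the exponent, so the two constraints are incompatible for small $t-s$. You flag exactly this as ``the main obstacle,'' but the fix is not just bookkeeping — the Young step is the wrong move.

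The correct route, and the one that matches the constant structure of the stated $\Gamma$, is to apply Cauchy--Schwarz to the coupling term without splitting it:
\[
\Big|2\int_M \la\v,\Delta\v\ra\,d\rho_r\Big| \;\le\; 2\,\Big(\int_M\|\v\|_x^2\,d\rho_r\Big)^{1/2}\Big(\int_M\|\Delta\v\|_x^2\,d\rho_r\Big)^{1/2} \;\le\; 2\sqrt{C\,f(r)}\,\sqrt{g(r)}\:,
\]
and then pass to $h(r) := \sqrt{f(r)}$. The differential inequality becomes \emph{linear} in $h$,
\[
h'(r) \;\le\; \tfrac{\lambda}{2}\,h(r) + \sqrt{C\,g(r)}\:,\qquad h(s)=0\:,
\]
with $\lambda$ collecting the $\Delta_2$ and $\s b^2$ contributions. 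Gr\"onwall now gives $h(r) \le e^{\lambda(r-s)/2}\sqrt{C}\int_s^r \sqrt{g(\tau)}\,d\tau$; a Cauchy--Schwarz in $\tau$ bounds the last integral by $\sqrt{(t-s)}\,\|\Delta\v\|_{L^2(L)}$, so $h(r) \le e^{\lambda(t-s)/2}\sqrt{C(t-s)}\,\|\Delta\v\|_{L^2(L)}$. Integrating $h(r)^2$ against the slicing as you describe (using the hyperbolicity bound $\int_M\|\v\|_x^2\,d\rho_r \le C f(r)$) now produces $\|\v\|^2_{L^2(L)} \le C^2(t-s)^2 e^{\lambda(t-s)}\|\Delta\v\|^2_{L^2(L)}$, and the square root gives exactly the linear prefactor $C(t-s)$. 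Choosing the constants in $\lambda$ generously enough (and noting that one may take $C \ge 1$ without loss of generality) then lands inside the stated $\Gamma = 2Ce^{2C^2(1+\s^2/2)(t-s)}(t-s)$. In short: keep the coupling term as a cross term and Gr\"onwall $\sqrt{f}$, not $f$.
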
 \noindent
This result is the basis for the proof of existence of weak solutions on finite time strips, as will be
outlined in the next section.

\subsection{Weak Solutions of the Cauchy Problem}\label{secweak}
We now turn attention to the {\em{Cauchy problem}} for the linearized field equations
in a given time strip~$L:=L^t_s$.
It suffices to consider {\em{zero initial data}} at the initial time~$s$, because otherwise one can extend the initial
data to a jet~$\v_0$ in the time strip and consider the Cauchy problem for the difference~$\v - \v_0$
(for details on this standard procedure see for example~\cite[Section~3.5]{linhyp}).
It remains to be specified what it should mean for the solution to vanish at time~$s$.
The energy estimate~\eqref{energyestimate} suggests that one should demand that
the surface layer norm~$\|\v\|^s$ vanishes. For what follows, it is preferable
to work with the stronger condition that both surface layer integrals~\eqref{jipdef} and~\eqref{sympdef}
should vanish initially. Moreover, we demand that both~$\v$ and~$\Delta \v$ vanish
pointwise in the past of the surface layer at time~$t_0$. We thus introduce the jet space
\beq \label{Junder}
\begin{split}
\underline{\J}_s:=\big\{\v\in\Jvary_0\:|\:
\eta_s \v &= 0 = \eta_s \Delta\v \qquad \text{and}\\
(\u, \v )^s &=0 = \sigma^s(\u, \v) \qquad \text{for all~$\u\in \Jvary_0$} \big\} \:.
\end{split}
\eeq
Then a jet~$\v \in \Jvary_0$ is referred to as a {\em{strong solution}} of the Cauchy problem if
\beq \label{strong}
\text{$\Delta\v=\w$ in~$L$} \qquad \text{and} \qquad \v \in \underline{\J}_s \:.
\eeq
The energy estimate in Proposition~\ref{prpees} ensures that the strong solution is unique:
\begin{Prp}
	The Cauchy problem~\eqref{strong} has at most one solution.
\end{Prp}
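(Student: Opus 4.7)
The plan is to reduce uniqueness to the homogeneous case by linearity and then invoke the energy estimate of Proposition~\ref{prpees}. Suppose~$\v_1, \v_2 \in \Jvary_0$ are two strong solutions of~\eqref{strong} for the same inhomogeneity~$\w$. Since~$\Delta$ is linear and~$\underline{\J}_s$ is a linear subspace (every defining condition in~\eqref{Junder} is linear in~$\v$), their difference~$\v := \v_1 - \v_2$ satisfies~$\v \in \underline{\J}_s$ and~$\Delta \v = 0$ throughout the time strip~$L = L_s^t$. The goal is then to show~$\v \equiv 0$.

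The key bridge between the definition of~$\underline{\J}_s$ and the hypothesis of the energy estimate is the identity~$(\v,\v)^s = 0$. This follows immediately upon choosing~$\u = \v$ in the condition~$(\u,\v)^s = 0$ for all~$\u \in \Jvary_0$ appearing in~\eqref{Junder}, using that~$\underline{\J}_s \subset \Jvary_0$. By the hyperbolicity conditions of Definition~\ref{globfoliate}, the surface layer inner product~$(.,.)^s$ is positive definite, so~$\|\v\|^s = \sqrt{(\v,\v)^s} = 0$. Hence~$\v$ meets exactly the vanishing hypothesis required by Proposition~\ref{prpees}.

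Applying that proposition on the strip~$L$ now yields
\[
\|\v\|_{L^2(L)} \;\le\; \Gamma \,\|\Delta \v\|_{L^2(L)} \;=\; 0\:,
\]
so~$\v$ vanishes as an element of~$L^2(L)$. Combined with the pointwise condition~$\eta_s \v = 0$ in~\eqref{Junder} (which pins~$\v$ down in the past of the initial surface layer) and with the compact support of~$\v \in \Jvary_0$, we obtain~$\v \equiv 0$ throughout~$M$, hence~$\v_1 = \v_2$.

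There is no real obstacle here: the substantive work -- establishing the energy identity and upgrading it to the quantitative energy estimate under the hyperbolicity conditions -- has already been absorbed into Proposition~\ref{prpees} (mirroring~\cite[Proposition~3.5]{linhyp}). The only care needed is to verify that the strong linear conditions defining~$\underline{\J}_s$ are genuinely strong enough to trigger that estimate, which is precisely what the pairing with~$\u = \v$ accomplishes. Accordingly, the argument is short and essentially mechanical once the machinery of Section~\ref{secosi} is in place.
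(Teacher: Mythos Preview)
Your proof is correct and follows exactly the approach the paper indicates (the paper merely states that the energy estimate in Proposition~\ref{prpees} ensures uniqueness, without spelling out the argument). One small overreach: from~$\|\v\|_{L^2(L)}=0$ together with~$\eta_s\v=0$ you get~$\v\equiv 0$ on the strip and in its past, but compact support of~$\v$ does not force vanishing in the future of the strip~$L_s^t$, where the Cauchy problem~\eqref{strong} imposes no condition on~$\Delta\v$; uniqueness here should be read as uniqueness within~$L$, and you can simply stop after the energy estimate yields~$\v=0$ in~$L^2(L)$.
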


For the existence theory, we need to formulate the Cauchy problem in the weak sense.
Similar as in the theory of partial differential equations, the weak formulation is obtained
by multiplying with a test function and integrating by parts.
In our context, we work with the following~$L^2$-scalar product in the time strip~$L:=L_s^t$,
\begin{equation*}
\la \u,\v\ra_{L^2(L)}:=\int_M \la \u,\v\ra_x\,\eta_{[s,t]}(x)\,d\rho(x),\quad L^2(L):=\big\{
\v\in L^2_\loc(M)\:\big|\: \|\v\|_{L^2(L)}<\infty \big\} \:.
\end{equation*}
with~$\eta_{[t_0,t_1]} := \eta_{t_1} - \eta_{t_0}$. The analog of ``integration by parts'' is provided by the
following lemma, which can be proved exactly as~\cite[Lemma 3.13]{linhyp}.

\begin{Lemma} {\bf{(Green's formula)}} \label{green}
For any time strip~$L:=L_s^t$ and any~$\u,\v\in\Jvary_0$,
\beq \label{eqgreen}
\la \u,\Delta \v\ra_{L^2(L)}=\la \Delta\u, \v\ra_{L^2(L)}-\sigma^{t}(\u,\v)+\sigma^{s}(\u,\v) \:.
\eeq
\end{Lemma}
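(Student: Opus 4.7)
The plan is to expand both sides of the claimed identity using the definition \eqref{eqlinlip2} of $\Delta$ together with the identification \eqref{frechet}, then reduce the difference $\la \u, \Delta\v\ra_{L^2(L)} - \la \Delta\u, \v\ra_{L^2(L)}$ to a double integral whose Lagrangian kernel is antisymmetric under the swap $x \leftrightarrow y$, and finally recognize the result as $\sigma^s(\u,\v) - \sigma^t(\u,\v)$ via the definition \eqref{sympdef}.

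First I would unfold the definitions. Property~(iii) of surface-layer regularity (Definition~\ref{defslr}) permits commuting the outer jet derivative $\nabla_\u$ with the $y$-integration, so that
\[
\la \u, \Delta\v\ra(x) = \int_M \nabla_{1,\u}\big(\nabla_{1,\v}+\nabla_{2,\v}\big)\L(x,y)\,d\rho(y) - \nabla_\u\nabla_\v \s \:.
\]
Multiplying by $\eta_{[s,t]}(x)$ and integrating against $d\rho(x)$ produces $\la \u, \Delta\v\ra_{L^2(L)}$, and a mirrored manipulation with $\u$ and $\v$ interchanged produces $\la \Delta\u, \v\ra_{L^2(L)}$. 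The Fubini exchange needed here is licensed by the compact support of $\u, \v \in \Jvary_0$, the compact range of $\L$ (Definition~\ref{compactrange}), and property~(ii) of Definition~\ref{defslr}.

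Next, since jet derivatives acting on the same argument of $\L$ commute, the symmetric contribution $\nabla_{1,\u}\nabla_{1,\v}\L$ cancels in the difference, leaving only the mixed kernel $A(x,y) := \big(\nabla_{1,\u}\nabla_{2,\v} - \nabla_{1,\v}\nabla_{2,\u}\big)\L(x,y)$ together with the residual scalar commutator $\nabla_\v\nabla_\u \s - \nabla_\u\nabla_\v \s$. The symmetry $\L(x,y)=\L(y,x)$ translates into $\nabla_{1,\u}\L(y,x) = \nabla_{2,\u}\L(x,y)$, which immediately yields $A(y,x) = -A(x,y)$. Any weight of the form $\eta_r(x)\,\eta_r(y)$ is symmetric in $(x,y)$ and therefore vanishes against $A$ under the double integral, so
\[
\int_M d\rho(x)\,\eta_r(x)\int_M d\rho(y)\,\big(1-\eta_r(y)\big)\,A(x,y) = \int_M \eta_r(x)\,d\rho(x)\int_M A(x,y)\,d\rho(y) \:.
\]
Specialising to $r=t$ and $r=s$, subtracting, and invoking $\eta_{[s,t]} = \eta_t - \eta_s$, the antisymmetric part of the computation in step one is identified with $\sigma^s(\u,\v) - \sigma^t(\u,\v)$ via \eqref{sympdef}.

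The remaining scalar bracket $\nabla_\u\nabla_\v\s - \nabla_\v\nabla_\u\s$ is a commutator of jet derivatives applied to the constant $\s$; a short product-rule computation reduces it to a divergence on $M$ that integrates to zero by the compact support of $\u$ and $\v$. The main obstacle is precisely the book-keeping in the antisymmetrisation step, i.e.\ tracking the sign so that the boundary contribution is $\sigma^s - \sigma^t$ rather than its negative, together with the careful justification of the two interchanges (Fubini in $x,y$ and commuting $\nabla_\u$ under the $y$-integral), both of which hinge on surface-layer regularity, the compact range of $\L$, and the compact support of jets in $\Jvary_0$; the whole argument then runs parallel to \cite[Lemma~3.13]{linhyp}.
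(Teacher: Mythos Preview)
Your proposal is correct and follows exactly the approach the paper defers to (namely \cite[Lemma~3.13]{linhyp}): expand $\la \u,\Delta\v\ra(x)-\la \v,\Delta\u\ra(x)$ using surface-layer regularity to pull $\nabla_\u$ under the $y$-integral, cancel the symmetric contribution $\nabla_{1,\u}\nabla_{1,\v}\L$, and integrate the antisymmetric kernel $A(x,y)=\big(\nabla_{1,\u}\nabla_{2,\v}-\nabla_{1,\v}\nabla_{2,\u}\big)\L(x,y)$ against $\eta_{[s,t]}(x)$ to recover the boundary terms via~\eqref{sympdef}.

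One small correction on the last step: under the paper's explicit convention that \emph{jets are never differentiated}, one has $\nabla_\u\nabla_\v\,\s = a(x)\,b(x)\,\s = \nabla_\v\nabla_\u\,\s$ pointwise, so the scalar commutator vanishes identically and no divergence argument is needed. Your caution about the sign bookkeeping is well placed; once you compute carefully you will find $\la \u,\Delta\v\ra_{L^2(L)}-\la \Delta\u,\v\ra_{L^2(L)} = \int_M\eta_{[s,t]}(x)\int_M A(x,y)\,d\rho(y)\,d\rho(x)$, and matching this against~\eqref{sympdef} is precisely where the sign convention must be pinned down.
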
 \noindent
Assume that we are given a strong solution~$\v\in \underline{\J}_s$. Then the Green's formula
shows that for any~$\u\in \Jvary_0$,
\[ \langle \u,\w\rangle_{L^2(L)}=\langle \u,\Delta \v\rangle_{L^2(L)}=\langle \Delta\u,\v\rangle_{L^2(L)}
-\sigma^{t}(\u,\v)+\sigma^{s}(\u,\v) \:. \]
Using the last equation in~\eqref{Junder}, the symplectic form at time~$s$ vanishes.
In order for the boundary term at time~$t$ to vanish, we choose the test jet~$\u$ in the space~$\overline{\J}^t$
defined in analogy to~\eqref{Junder} by
\beq \label{Jupper}
\begin{split}
\overline{\J}^t:=\big\{\v\in\Jvary_0\:|\:
(1-\eta_t) \v &=0= (1-\eta_t) \Delta\v \qquad \text{and}\\
( \u, \v )^t &=0 = \sigma^t(\u, \v) \quad \text{for all~$\u\in \Jvary_0$} \big\} \:.
\end{split}
\eeq
We thus obtain the equation
\begin{equation}\label{weak}
\langle \Delta\u, \v\rangle_{L^2(L)} = \langle \u,\w\rangle_{L^2(L)} \qquad
\mbox{for all }\u\in \overline{\J}^{t}.
\end{equation}
We take this equation as the {\em{definition}} of the weak solution of the Cauchy problem
(for a more detailed explanation see~\cite[Section~3.6]{linhyp}).
Since only~$L^2$-products are involved, one can allow for inhomogeneities and solutions which are merely square integrable.
\begin{Def}
Let~$\w\in L^2(L)$. A jet~$\v\in L^2(L)$ satisfying \eqref{weak} is said to be a {\bf{weak solution}} of \eqref{strong}.
\end{Def}
We now come to the existence proof of weak solutions. In preparation,
on~$\overline{\J}^t$ we introduce the scalar product
\[ \lla \u,\v\rra:=\langle \Delta\u,\Delta\v\rangle_{L^2(L)} \:. \]
Note that~$\lla .,. \rra$ is bilinear and positive definite. The non-degeneracy is a direct consequence of the energy estimate~\eqref{energyestimate} (with the time direction reversed), which yields
\beq \label{ees}
\|\v\|_{L^2(L)}\le \Gamma \|\Delta\v\|_{L^2(L)}=\tb \v\tb\quad\mbox{for all~$\v\in \overline{\J}^{t}$}.
\eeq
Forming the completion, we obtain a real Hilbert space, which we
denoted by~$(\overline{\mathcal{H}}^t, \lla .,. \rra)$ or, in order to clarify the dependence on the time strip,
by~$(\overline{\mathcal{H}}(L), \lla .,. \rra_L)$, i.e.\
\begin{equation}\label{energyspace}
\overline{\mathcal{H}}^t = \overline{\mathcal{H}}(L) :=\text{completion of } \big(\overline{\J}^t, \lla .,.\rra \big)\qquad \text{with} \qquad  \lla \u,\v\rra:=\langle \Delta\u,\Delta\v\rangle_{L^2(L)}.
\end{equation}

Using again the energy estimate~\eqref{ees}, this Hilbert space can be
characterized as follows: It is formed by all vectors~$V \in L^2(L)$
for which there is a sequence~$(V_n)_{n \in \N}$ in~$\overline{\J}^t$ with~$V_n \rightarrow V$ in~$L^2(L)$
and the sequence~$\Delta V_n$ is Cauchy in~$L^2(L)$. Denoting the limit of this Cauchy sequence by~$\Delta V$,
we conclude that the operator~$\Delta$ extends to a unique bounded linear operator
from~$\overline{\mathcal{H}}^t$ to~$L^2(L)$.

In view of this result, the linear functional~$\la \w , .\, \ra_{L^2(L)}$ extends to a
bounded linear functional on the Hilbert space~$\overline{\mathcal{H}}^t$. Representing this functional
with the help of the Fr\'echet-Riesz theorem by a vector~$V$ of this Hilbert space
gives the following result (for more details see~\cite[proof of Theorem~3.15]{linhyp}):
\begin{Thm} \label{thmexistence}
Given~$\w\in L^2(L)$, there exists a unique solution of~\eqref{weak} of the form~$\v = \Delta V$ with~$V\in \overline{\mathcal{H}}^t$.
\end{Thm}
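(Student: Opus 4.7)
The plan is to reduce the existence statement to a direct application of the Fr\'echet--Riesz representation theorem on the Hilbert space $(\overline{\mathcal{H}}^t, \lla .,.\rra)$. With the ansatz $\v = \Delta V$ for $V \in \overline{\mathcal{H}}^t$, and using the bounded extension of $\Delta$ from $\overline{\mathcal{H}}^t$ to $L^2(L)$ described just before the theorem, the weak formulation~\eqref{weak} becomes
\[
\lla \u, V\rra = \langle \Delta \u, \Delta V\rangle_{L^2(L)} = \langle \u, \w\rangle_{L^2(L)} \qquad \text{for all } \u\in \overline{\J}^t.
\]
So the entire problem reduces to representing the linear functional $\Phi: \u \mapsto \langle \u,\w\rangle_{L^2(L)}$ by a vector of $\overline{\mathcal{H}}^t$ with respect to the inner product $\lla .,.\rra$.

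First I would verify that $\Phi$ extends to a bounded linear functional on $\overline{\mathcal{H}}^t$. Indeed, the Cauchy--Schwarz inequality in $L^2(L)$ combined with the energy estimate~\eqref{ees} gives, for every $\u \in \overline{\J}^t$,
\[
|\Phi(\u)| \;\leq\; \|\u\|_{L^2(L)}\, \|\w\|_{L^2(L)} \;\leq\; \Gamma\, \|\w\|_{L^2(L)}\; \tb\u\tb\:,
\]
so $\Phi$ is $\tb.\tb$-bounded on $\overline{\J}^t$, and by density it extends uniquely to a bounded linear functional on the completion $\overline{\mathcal{H}}^t$. The Fr\'echet--Riesz theorem then furnishes a unique $V \in \overline{\mathcal{H}}^t$ with $\lla \u, V\rra = \Phi(\u)$ for all $\u \in \overline{\mathcal{H}}^t$, and in particular for all $\u \in \overline{\J}^t$. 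Setting $\v := \Delta V \in L^2(L)$ and unwinding the definition of $\lla .,.\rra$ via the extended operator $\Delta$ yields precisely~\eqref{weak}.

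For uniqueness, suppose $\Delta V_1 = \Delta V_2$ in $L^2(L)$ for two elements $V_1, V_2 \in \overline{\mathcal{H}}^t$ solving the problem. Then $\tb V_1 - V_2\tb = \|\Delta(V_1 - V_2)\|_{L^2(L)} = 0$, so $V_1 = V_2$ in $\overline{\mathcal{H}}^t$, and consequently $\v = \Delta V$ is uniquely determined. Here one uses exactly that the extended operator $\Delta : \overline{\mathcal{H}}^t \to L^2(L)$ is an isometry (by construction of the norm~$\tb .\tb$), which is why the Hilbert space $\overline{\mathcal{H}}^t$ was designed in~\eqref{energyspace} with this norm in the first place.

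I do not anticipate a real obstacle: the nontrivial analytic input---the energy estimate~\eqref{energyestimate} applied in reversed time to test jets in $\overline{\J}^t$---has already been invoked in~\eqref{ees} to make $\tb .\tb$ a norm and to extend $\Delta$ continuously to $\overline{\mathcal{H}}^t$. The only point requiring a moment of care is that the representation via Fr\'echet--Riesz produces an abstract element $V$ of the completion rather than a jet in $\overline{\J}^t$, but this is harmless because both sides of~\eqref{weak} involve $V$ only through $\Delta V \in L^2(L)$, which is well-defined by the bounded extension.
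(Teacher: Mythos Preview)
Your proof is correct and follows exactly the paper's approach: bound the linear functional $\u \mapsto \langle \u,\w\rangle_{L^2(L)}$ via the energy estimate~\eqref{ees} and represent it by Fr\'echet--Riesz on $(\overline{\mathcal{H}}^t,\lla.,.\rra)$. One minor slip: in your uniqueness paragraph you assume $\Delta V_1 = \Delta V_2$ rather than deducing it, so what you actually prove there is injectivity of $\Delta$ on $\overline{\mathcal{H}}^t$; the genuine uniqueness of $V$ (and hence of $\v=\Delta V$) already follows from the uniqueness in the Fr\'echet--Riesz theorem you invoked, since any $V'\in\overline{\mathcal{H}}^t$ with $\Delta V'$ solving~\eqref{weak} satisfies $\lla \u, V'\rra = \Phi(\u)$ for all $\u\in\overline{\J}^t$.
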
 \noindent

\begin{Def} We refer to the distinguished solution in Theorem~\ref{thmexistence}
as the {\bf{retarded solution}}, denoted by
\beq \label{vret}
\v^\wedge(L,\w) := \Delta V \:.
\eeq
\end{Def}

Weak solutions of the Cauchy problem are in general not unique. This can be understood immediately from
the fact that in~\eqref{weak} we test only with a specific subspace of jets. More precisely,
let~$\v,\v'$ be two weak solutions. For clarity of our explanation, we first consider the case
that their difference~$\bar{\v}:= \v-\v'$ lies in in~$\underline{\J}_s$.
Applying Lemma~\ref{green} we obtain the homogeneous weak equation
\begin{equation}\label{vanishingL}
\langle \Delta \bar{\v},\u\rangle_{L^2(L)}=0\quad\mbox{for all }\u\in \overline{\J}^{t}.
\end{equation}
If the set~$\overline{\J}^{t}$ were dense in~$L^2(L)$, we could apply the energy
estimates~\eqref{energyestimate} to conclude
$$
\Delta\bar{\v}=0\ \quad\Longrightarrow\quad  \|\bar{\v}\|_{L^2(L)}\le C\|\Delta\bar{\v}\|_{L^2(L)}=0\
\quad\Longrightarrow \quad \v=\v'.
$$ 
However, denseness of~$\overline{\J}^{t}$ is not a sensible assumption. This reflects our general concept that, by
choosing specific subspaces of~$\J$, we restrict attention to the part of information of the EL equations which is relevant to the applications in mind. More generally, dropping the simplifying
assumption~$\v-\v' \in \underline{\J}_s$, a weak solution of~\eqref{weak}
is determined only up to vectors in the orthogonal complement of~$\Delta(\overline{\J}^{t})$,
\[ \v-\v'\in L^2(L)\cap \Delta(\overline{\J}^{t})^\perp \:. \]
The resulting freedom to modify a weak solution is irrelevant to us because it only affects the information
that we disregard.

\subsection{Construction of Global Weak Retarded Solutions}
In this section we want to apply the existence theory in time strips as outlined in the previous section 
in order to construct \textit{global retarded weak solutions} in~$L^2_\loc(M)$ of the equation
\begin{equation}\label{strong2}
\Delta \v=\w\quad\mbox{for~$\w\in L^2_0(M)$}.
\end{equation}
In order to make sense of this equation, let us test \eqref{strong2} with compactly supported jets and then apply Lemma~\ref{green}. This yields the weak equation
\begin{equation}\label{weakequation}
	\langle \Delta \u,\v\rangle = \langle \u,\w\rangle\quad\mbox{for all~$\u\in \Jvary_0$}.
\end{equation}
This equation is well-defined and leads us to the following definition.
\begin{Def}
	Let~$\w\in L_0^2(M)$. A jet~$\v\in L^2_{\loc}(M)$ satisfying \eqref{weakequation} is said to be a {\bf{global weak solution}} of \eqref{strong2}.
\end{Def}

Our next goal is to construct global \textit{retarded} weak solutions. Roughly speaking, we aim at constructing a global weak solution which vanishes in the past of the inhomogeneity~$\w$.  
Our strategy is to consider the weak retarded solution constructed in Theorem~\ref{thmexistence}
in time strips~$L_{s}^{t}$ and to take the limits~$t \to \infty$ and~$s \to-\infty$.
In order to ensure that the limits exist,
we need to make further assumptions on the foliations and on the operator~$\Delta$. 
\begin{Def} \label{defproperlycontained}
A set~$U\subset M$ is said to be {\bf{properly contained}} in a time strip~$L_s^t$, denoted
by~$U\stackrel{\circ}{\subset} L_{s}^{t}$, if
\[ (1-\eta_{s})|_U\equiv \eta_t|_U\equiv 1\:. \]
\end{Def}  

We now introduce the so-called \textit{shielding condition}.
It can be understood as a generalization  of the denseness of~$\overline{\J}^t$ in~$\overline{\mathcal{H}}^t$ to situations when various time strips are involved.

\begin{Def}\label{Defshielding}
Spacetime~$M$ is said to be {\bf{shielded in the future}} with respect to a given foliation
$(\eta_t)_{t \in \R}$ if for every~$L_s^t$ there is~$L_{s_1}^{t_1}\stackrel{\circ}{\supset} L_s^t$
such that, for every~$L_{s_2}^{t_2}\stackrel{\circ}{\supset} L_{s_1}^{t_1}$
the following implication holds for all~$V_2\in\overline{\mathcal{H}}(L_{s_2}^{t_2})$ and~$V_1\in\overline{\mathcal{H}}(L_{s_1}^{t_1})$:
\begin{equation*}
	\begin{split}
		&\langle \Delta V_2+(1-\eta_{s_1})\Delta V_1,\Delta\u\rangle_{L^2(L_{s_2}^{t_1})}=0\quad\mbox{for all }\ \u\in\overline{\J}^{t_1}\\
		&\qquad\qquad\Longrightarrow \quad \Delta V_2+(1-\eta_{s_1})\Delta V_1\equiv 0\quad\mbox{on }\ L_{s_2}^t.
	\end{split}
\end{equation*}
\end{Def} \noindent
This condition has the consequence that the weak solutions constructed in Theorem~\ref{thmexistence} can be extended consistently to global solutions. Moreover, the resulting global solutions vanish in the past of the inhomogeneity.
This is made precise in the following theorem.

\begin{Thm}\label{global}
Assume that spacetime~$M$ is shielded in the future with respect to a foliation~$(\eta_t)_{t \in \R}$. Then the following
statements hold:
\begin{itemize}[leftmargin=2.5em]
\item[\rm{(i)}] 	For any~$\w\in L_0^2(M)$ the following limit 
of retarded solutions~\eqref{vret} exists in the topology of~$L_\loc^2(M)$,
\begin{equation}\label{limitloc}
	-\lim_{s\to -\infty}\lim_{t\to+\infty}\v^\wedge(\w,L_s^t)=: S_\eta^\wedge\w
\end{equation}
\item[\rm{(ii)}]  The jet~$S_\eta^\wedge\w$ is a global weak solution of \eqref{strong2}, i.e.
$$
\langle S_\eta^\wedge\w,\Delta\u\rangle_{L^2(M)}=-\langle \w,\u\rangle_{L^2(M)}\quad\mbox{for all~$\u\in\Jvary_0$}.
$$\\[-2.1em]
\item[\rm{(iii)}] The jet~$S_\eta^\wedge\w$ vanishes in the distant past in the sense that
there exists~$t \in \R$ for which the following implication holds:
\[ \eta_t(x)=1 \qquad \Longrightarrow \qquad \w(x)=0 \quad \text{and} \quad (S_\eta^\wedge \w)(x)=0 \:. \]
\end{itemize}
\end{Thm}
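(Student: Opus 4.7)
The plan is to construct $S_\eta^\wedge\w$ by a double-limit procedure: first send $t\to+\infty$ with $s$ fixed to produce a retarded solution on the half-strip $L_s^\infty$, and then send $s\to-\infty$ to obtain the global solution on $M$. Both limits will be controlled by comparing two retarded solutions furnished by Theorem~\ref{thmexistence} on nested strips and using the shielding condition of Definition~\ref{Defshielding} to show that their difference vanishes on any properly contained sub-region, so that the approximating sequences are in fact eventually constant on every compact set (and a fortiori convergent in $L^2_\loc$).

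For (i), fix $s$ and take $t_1<t_2$. Writing $\v^\wedge(\w,L_s^{t_i})=\Delta V_{t_i}$ with $V_{t_i}\in\overline{\mathcal{H}}(L_s^{t_i})$, the weak equation yields
\[
\bigl\langle \Delta V_{t_2}-\Delta V_{t_1},\,\Delta\u\bigr\rangle_{L^2(L_s^{t_1})}=0
\qquad\text{for every }\u\in\overline{\J}^{t_1},
\]
since both sides reduce to $\langle\u,\w\rangle_{L^2(L_s^{t_1})}$. Invoking the shielding condition with $V_1=0$ and $V_2:=V_{t_2}-V_{t_1}$ (viewed inside $\overline{\mathcal{H}}$ of a sufficiently large strip) forces $\Delta V_{t_2}\equiv\Delta V_{t_1}$ on any properly contained sub-region, so the inner limit exists pointwise-eventually and hence in $L^2_\loc$. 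The outer limit $s\to-\infty$ is handled analogously, but now using the full content of Definition~\ref{Defshielding}: the cut-off combination $\Delta V_2+(1-\eta_{s_1})\Delta V_1$ is exactly what is needed to splice two half-strip solutions corresponding to different starting times into a single object on their union. The minus sign in $-\lim$ is a bookkeeping matter fixed by the sign convention of Theorem~\ref{thmexistence}.

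For (ii), given $\u\in\Jvary_0$, use the compact range of $\L$ and the compact supports of $\u$ and $\w$ to choose $s,t$ so that the supports of $\u$, $\w$, and the relevant Lagrangian-coupling region are properly contained in $L_s^t$; in particular this places $\u$ in $\overline{\J}^t$. Theorem~\ref{thmexistence} then gives $\bigl\langle\Delta\u,\v^\wedge(\w,L_s^t)\bigr\rangle_{L^2(L_s^t)}=\langle\u,\w\rangle_{L^2(M)}$, and passage to the $L^2_\loc$ limit from (i), together with the minus sign in the definition of $S_\eta^\wedge\w$, produces the stated identity. For (iii), the covering property~\eqref{2eq} together with the compactness of $\supp\w$ allows us to pick $t_0\in\R$ so small that $\{\eta_{t_0}=1\}$ lies strictly in the past of $\supp\w$, so automatically $\w\equiv 0$ there. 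On $L_s^{t_0}$ the retarded solution then satisfies the homogeneous weak equation, and a final application of the shielding condition with trivial inhomogeneity forces $\v^\wedge(\w,L_s^t)\equiv 0$ on properly contained sub-regions of $L_s^{t_0}$; passing to the limit gives $(S_\eta^\wedge\w)\equiv 0$ on $\{\eta_{t_0}=1\}$.

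The principal technical obstacle is matching the cascade of time parameters $s,t,s_1,t_1,s_2,t_2$ in Definition~\ref{Defshielding} to the specific pair of retarded solutions being compared at each stage; the cut-off factor $(1-\eta_{s_1})$ is designed exactly to patch together solutions from different strips into one object on their union, and identifying the correct splitting (and verifying that the restrictions of the relevant $V_{t_i}$ to smaller strips, after cut-off, land in the right Hilbert space $\overline{\mathcal{H}}$) is the key combinatorial step throughout.
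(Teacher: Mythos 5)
Your proposal has a genuine gap in part (i). You want to iterate the limit in two stages — first $t\to+\infty$ with $s$ fixed, then $s\to-\infty$ — and to control the inner limit you propose to apply Definition~\ref{Defshielding} with $V_1=0$ and $V_2:=V_{t_2}-V_{t_1}$. This does not fit the hypotheses of the shielding condition: that condition is formulated for a \emph{doubly nested} pair of strips $L_{s_2}^{t_2}\stackrel{\circ}{\supset}L_{s_1}^{t_1}\stackrel{\circ}{\supset}L_s^t$ with $s_2<s_1<s$ and $t<t_1<t_2$, whereas the strips $L_s^{t_1}$ and $L_s^{t_2}$ you wish to compare share the same initial time $s$. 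Proper containment $L_s^{t_1}\stackrel{\circ}{\supset}L_s^t$ fails at the past boundary (it would require $\eta_s\equiv 0$ on $L_s^t$, which is false). In addition, the claim that $V_{t_1}\in\overline{\mathcal{H}}(L_s^{t_1})$ may be ``viewed inside $\overline{\mathcal{H}}$ of a sufficiently large strip'' is not justified: the Hilbert spaces $\overline{\mathcal{H}}(L)$ are completions of $\overline{\J}^t$ with respect to the $L$-dependent norm $\lla\u,\v\rra=\langle\Delta\u,\Delta\v\rangle_{L^2(L)}$, and a Cauchy sequence in $L^2(L_s^{t_1})$ need not remain Cauchy in $L^2$ of a strictly larger strip, since elements of $\overline{\J}^{t_1}$ have no support restriction to the past. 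Concretely, there is no canonical embedding $\overline{\mathcal{H}}(L_s^{t_1})\hookrightarrow\overline{\mathcal{H}}(L_{s_2}^{t_2})$ to which you could appeal.

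The cut-off factor $(1-\eta_{s_1})$ in Definition~\ref{Defshielding} is precisely what makes it possible to compare solutions on strips with \emph{different} initial times, and the paper uses it to treat both limits at once: it fixes $L_s^t$ so large that $\supp\w\stackrel{\circ}{\subset}L_s^t$, picks $L_{s_1}^{t_1}\stackrel{\circ}{\supset}L_s^t$ as promised by the shielding condition, lets $L_{s_2}^{t_2}\stackrel{\circ}{\supset}L_{s_1}^{t_1}$ be arbitrary, and derives
\[
\bigl\langle\v(L_{s_2}^{t_2},\w)-(1-\eta_{s_1})\,\v(L_{s_1}^{t_1},\w),\,\Delta\u\bigr\rangle_{L^2(L_{s_2}^{t_1})}=0\qquad\text{for all }\u\in\overline{\J}^{t_1}\:.
\]
Shielding then gives $\v(L_{s_2}^{t_2},\w)=(1-\eta_{s_1})\v(L_{s_1}^{t_1},\w)$ on $L_{s_2}^{t}$, which simultaneously yields (i) (the two solutions agree on $L_s^t$, so the net is eventually constant on every compact set), (ii), and (iii) (on $\{\eta_{s_1}\equiv 1\}$ the right-hand side vanishes, so the limit function vanishes there). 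Your argument for (iii) via a separate shielding argument on $L_s^{t_0}$ with trivial inhomogeneity suffers from the same mismatch: it is unclear which $V_1,V_2$ you would plug into the definition, as the restriction of a retarded solution from a larger strip is not an element of $\overline{\mathcal{H}}$ of the smaller one. To repair your proof you should switch to the paper's single doubly-nested-strip comparison; the two-stage limit is not incorrect as a heuristic but cannot be executed with the shielding condition as stated.
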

\begin{proof}
We choose~$L_{s}^{t}$ so large that~$\supp\w\stackrel{\circ}{\subset} M$, choose~$L_{s_1}^{t_1}$ to be as in Definition~\ref{Defshielding} and let~$L_{s_2}^{t_2}\stackrel{\circ}{\supset} L_{s_1}^{t_1}$ be arbitrary. 
Finally, let~$\u\in \overline{\J}^{t_1}\subset \overline{\J}^{t_2}$. Then, using the notation of Theorem~\ref{thmexistence} and the fact that (cf. \eqref{Jupper})
\begin{align*}
(1-\eta_{s_2})(1-\eta_{s_1}) &=(1-\eta_{s_1}) \:, &  (1-\eta_{s_1})\w&=\w= (1-\eta_{s_2})\w,\\ 
\eta_{t_1}\u &=\u=\eta_{t_2}\u \:,& \eta_{t_1}\Delta\u&=\Delta\u=\eta_{t_2}\Delta\u \:,
\end{align*}
one readily finds that
\begin{align*}
\langle \v(L_{s_2}^{t_2},\w),\Delta\u\rangle_{L^2(L_{s_2}^{t_1})}&=\langle \v(L_{s_2}^{t_2},\w),\Delta\u\rangle_{L^2(L_{s_2}^{t_2})}=\langle \w,\u\rangle_{L^2(L_{s_2}^{t_2})}=\langle \w,\u\rangle_{L^2(L_{s_1}^{t_1})} \\
&=\langle\v(L_{s_1}^{t_1},\w),\Delta\u\rangle_{L^2(L_{s_1}^{t_1})}= \langle (1-\eta_{s_1})\v(L_{s_1}^{t_1},\w),\Delta\u\rangle_{L^2(L_{s_2}^{t_1})} \:.
\end{align*}
It follows that
\[ \langle \v(L_{s_2}^{t_2},\w)-(1-\eta_{s_1})\v(L_{s_1}^{t_1},\w),\Delta\u\rangle_{L^2(L_{s_2}^{t_1})}=0\quad\mbox{for all }\u\in \overline{\J}^{t_1} \:. \]
From Theorem \eqref{thmexistence} we know that every~$\v(L,\w)$ 
can be written in in the form~$\v(L,\w)=\Delta V$ for suitable~$V\in\mathcal{H}(L)$.
Therefore, the shielding condition in Definition~\ref{Defshielding} yields
\begin{equation}\label{finalidentity}
\v(L_{s_2}^{t_2},\w)-(1-\eta_{s_1})\v(L_{s_1}^{t_1},\w)=0\quad\mbox{on~$L_{s_2}^{t}$}.
\end{equation}
This identity has two consequences. First, it follows that
$
\v(L_{s_2}^{t_2},\w)\equiv \v(L_{s_1}^{t_1},\w)
$
on~$L_s^t$. Since~$s$ and~$t$ are arbitrary,
we conclude that for any compact~$K\subset M$ there are sufficiently large~$\bar{a},\bar{b}$ such that 
$$
\v(L_{a}^{b},\w)|_K= \v(L_{\bar{a}}^{\bar{b}},\w)|_K \mbox{ for all~$a \leq \bar{a}$ and~$b\ge\bar{b}$}.
$$  
Hence the limit \eqref{limitloc} is well-defined. 
The second consequence of \eqref{finalidentity} is that 
$$
\v(L_{s_2}^{t_2},\w)\equiv 0\quad\mbox{on}\quad \{\eta_{s_1}\equiv 1 \}\cap \{\eta_{s_2}\equiv 0\}.
$$
From the arbitrariness of~$s_2$ and~$t_2$ we infer  that also the limit function~$S_\eta^\wedge\w$ vanishes
in the region~$\{\eta_{s_1}\equiv 1 \}$.

To summarize, the limit function~$S_\eta^\wedge\w$ defines a global weak solution of \eqref{strong2}. This follows directly from the fact that every~$\v^\wedge(\w,L_{s}^{t})$ is a weak solution within the corresponding time strip~$L_{s}^{t}$, and the fact that~$s$ and~$t$ can be chosen arbitrarily. 
\end{proof}

We conclude this section by a refinement of condition~(iii) in Theorem~\ref{global}.
We begin with the following elementary observation.
\begin{Lemma}\label{lemmacompact}
For every open and relatively compact subset~$W \subset M$ there exists a compact set~$\mathfrak{K}_\eta(W)$ with the following property:
\[ \eta_{s} \big|_{\mathfrak{K}_\eta(W)}\equiv 0 \qquad \Longrightarrow \qquad
\text{$\eta_{s} \,S_\eta^\wedge\,\w=0$ for all~$\w\in L_0^2(W)$}\:. \]
\end{Lemma}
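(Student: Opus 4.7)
The plan is to identify a single time $s_1$ (depending only on $W$) such that every retarded solution $S_\eta^\wedge \w$ for $\w \in L_0^2(W)$ vanishes on $\{\eta_{s_1} \equiv 1\}$, and then to use the separation property (Definition~\ref{defglobalfoliation}(iii)) to translate the hypothesis $\eta_s|_{\mathfrak{K}_\eta(W)} \equiv 0$ into the pointwise inclusion $\{\eta_s > 0\} \subseteq \{\eta_{s_1} \equiv 1\}$, from which the conclusion follows immediately.

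First I would observe that, since $\overline{W}$ is compact, the covering property of the foliation furnishes $s_0 < t_0$ with $\overline{W} \stackrel{\circ}{\subset} L_{s_0}^{t_0}$, and then the shielding hypothesis produces a strip $L_{s_1}^{t_1} \stackrel{\circ}{\supset} L_{s_0}^{t_0}$. Every $\w \in L_0^2(W)$ has $\supp \w \subset \overline{W} \subset L_{s_0}^{t_0}$, so repeating the argument from the proof of Theorem~\ref{global} verbatim (with $L_s^t$ there replaced by the fixed $L_{s_0}^{t_0}$) yields the uniform vanishing
\[ S_\eta^\wedge \w \equiv 0 \quad \text{on } \{\eta_{s_1} \equiv 1\}, \qquad \text{for every } \w \in L_0^2(W). \]
The key point is that $s_1$ depends on $W$ only through the fixed strip $L_{s_0}^{t_0}$, not on the particular inhomogeneity $\w$.

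Next, I would fix any non-empty open, relatively compact $U \subset M$ with $\eta_{s_1}|_U \equiv 1$, and set $\mathfrak{K}_\eta(W) := \overline{U}$, which is compact. Under the hypothesis $\eta_s|_{\mathfrak{K}_\eta(W)} \equiv 0$, we have in particular $\eta_s|_U \equiv 0$, so the separation property applied to the pair $(s_1, s)$ with the test set $U$ gives $\eta_{s_1}(x)\, \eta_s(x) = \eta_s(x)$ for all $x \in M$, equivalently $\{\eta_s > 0\} \subseteq \{\eta_{s_1} \equiv 1\}$. Combined with the uniform vanishing from the preceding step, this yields $(S_\eta^\wedge \w)(x) = 0$ wherever $\eta_s(x) > 0$, which is exactly $\eta_s\, S_\eta^\wedge \w = 0$ on all of $M$.

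The only real technical point is the existence of a non-empty open, relatively compact $U \subset M$ with $\eta_{s_1}|_U \equiv 1$. This is a mild non-degeneracy property of the foliation: any relatively compact open subset of the interior of $\{\eta_{s_1} = 1\}$ works, and such subsets exist whenever the foliation has genuine temporal evolution below $s_1$ (which can always be arranged by shrinking $s_1$ further, using the shielding hypothesis again if necessary). Granting this, the separation property is precisely what converts the time inequality ``$s \leq s_1$'' into the intrinsic compact-set condition appearing in the statement of the lemma.
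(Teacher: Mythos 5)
Your argument is essentially the paper's argument: you pick a slightly larger time strip $L_{s_1}^{t_1}$ via the shielding condition, observe from the proof of Theorem~\ref{global} that $S_\eta^\wedge\w$ vanishes on $\{\eta_{s_1}\equiv 1\}$ uniformly for $\w\in L_0^2(W)$, and then use the separation property of the foliation to deduce $\{\eta_s>0\}\subseteq\{\eta_{s_1}\equiv 1\}$ from the hypothesis $\eta_s|_{\mathfrak{K}_\eta(W)}\equiv 0$. The implicit non-degeneracy assumption you flag (that $\{\eta_{s_1}\equiv 1\}$ contains a non-empty open set) is also implicit in the paper's proof, which needs $\mathrm{int}\bigl(\mathfrak{K}_\eta(W)\cap\{\eta_{s_1}\equiv 1\}\bigr)\neq\varnothing$.

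The one place where you genuinely deviate is in the \emph{choice} of $\mathfrak{K}_\eta(W)$. You take $\mathfrak{K}_\eta(W):=\overline{U}$ for some open relatively compact $U\subset\{\eta_{s_1}\equiv 1\}$, a set lying entirely in the past of $W$ and not containing it. The paper instead chooses a large compact set $\mathfrak{K}_\eta(W)\supset W$ whose intersection with $\{\eta_{s_1}\equiv 1\}$ has non-empty interior, and then applies the separation property to that intersection in place of your $U$. Your choice is technically admissible for the lemma as literally stated (it only asserts existence), and since your set is smaller the hypothesis is easier to satisfy, so your version is marginally stronger. However, the paper's choice is what makes the downstream material work as intended: the discussion after the lemma and Assumption~\ref{remarkcompact} explicitly think of $\mathfrak{K}_\eta(W)$ as ``a neighborhood of $W$ containing a boundary strip on the Compton scale,'' and Definition~\ref{strongCD} (strong causal disconnection) uses $\mathfrak{K}(K)$ and $\mathfrak{K}(B)$ in a way that presupposes they sit around $K$ and $B$ rather than in their deep pasts. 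So if you want to feed your $\mathfrak{K}_\eta(W)$ into the rest of the paper, you should enlarge it to a compact set containing $W\cup\overline{U}$ — the proof goes through unchanged, since the hypothesis $\eta_s|_{\mathfrak{K}_\eta(W)}\equiv 0$ still implies $\eta_s|_{\overline{U}}\equiv 0$.
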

\begin{proof}
Let~$L_{s}^{t}\stackrel{\circ}{\supset} W$ be arbitrary and let~$L_{s_1}^{t_1}\stackrel{\circ}{\supset}L_{s}^{t}$ be as in Definition~\ref{Defshielding}. We now choose a sufficiently large compact set~$\mathfrak{K}_\eta(W)\supset W$ so that
\[ K:=\mathrm{int}(\mathfrak{K}_\eta(W)\cap \{ \eta_{s_1}\equiv 1\})\neq \varnothing \:. \]
Now, let~$s_2\in\R$ be such that~$\eta_{s_2}|_{\mathfrak{K}_\eta(W)}\equiv 0$. By construction, we have $\eta_{s_1}|_K\equiv 1$ and $\eta_{s_2}|_{K}\equiv 0$.  Then,  applying Definition~\ref{defglobalfoliation}~(iii) to the relatively compact open set $K$, we infer that~$\eta_{s_1}\eta_{s_2}=\eta_{s_2}$, and hence~$\eta_{s_2}(x)>0\Rightarrow\eta_{s_1}(x)=1$. 
Now let~$\w\in L_0^2(M)$ with~$\supp\w\subset W$. Then, from the proof of Theorem~\ref{global} we see that the global solution~$S_\eta^\wedge\w$ vanishes in the region~$\{\eta_{s_1}\equiv 1\}$. From the discussion above we conclude that~$\eta_{s_2}S^\wedge_\eta\w\equiv 0$.
\end{proof}

We point out that the above construction of the compact set~$\mathfrak{K}_\eta(W)$ depends on the chosen foliation. However, it only involves the ``width'' of the surface layers
given by the support of the functions~$\theta_t$. In physical application, this ``width''
is of the order of the Compton length. The set~$\mathfrak{K}_\eta(W)$ can be thought of as a
``neighborhood'' of~$W$ containing a boundary strip on the Compton scale.
With this in mind, by restricting attention to a subfamily of foliations if necessary,
it is sensible to make the following assumption:
\begin{Assumption}\label{remarkcompact}
The compact set~$\mathfrak{K}_\eta(W)$ can be chosen uniformly in the foliation~$\eta$. For this reason, the lower index~$\eta$ will henceforth be omitted.
\end{Assumption}

\subsection{Independence of Foliations and Green's Operators}\label{sectionfoliation}  The global weak solutions constructed in Definition~\ref{global} depend on the choice of foliation~$(\eta_t)_{t \in \R}$.
The goal of this section is to work out additional assumptions under which the
solutions become independent of this choice.

Recall from the definition of energy space~\eqref{energyspace}
that for any vector~$V\in \overline{\mathcal{H}}^t$ there is a sequence~$\u_n\in\overline{\J}^t$ and a jet~$\Delta V\in L^2(L)$ such that~$\u_n\to V$ and~$\Delta\u_n\to \Delta V$ in~$L^2(L)$. Moreover, 
the following implications hold,
\begin{equation}\label{vanishingcond}
\langle \Delta\u,\Delta V\rangle_{L^2(L)}=0\quad\ \text{for all~$\u\in \overline{\J}^t$}
\quad \Longrightarrow \quad   V = 0 \quad \Longrightarrow \quad \Delta V = 0 \:.
\end{equation}
In order to clarify the significance of these implications,
we remark that the condition on the very left combines two properties of the jet~$\Delta V$. On the one hand, restricting to test functions~$\u\in \overline{\J}^t$ which are supported in the interior of~$L$ and formally applying the Green's
formula yields a condition on the interior values of~$\Delta \Delta V$. On the other hand, restricting to test functions which intersect the surface layer integral at initial time yields a condition on the initial values of~$\Delta V$.
These conditions taken together imply that~$\Delta V=0$. This is the content of the implications in~\eqref{vanishingcond}.

Now let~$U,U_1\in L^2_\loc(M)$ and let~$\u_n\in \Jvary_0$ be such that~$\u_n\to U$ and~$\Delta\u_n\to U_1$
in~$L^2_\loc(M)$. 
The function~$U$ has the same structure as the function~$V$ above, except that it is defined in all of
spacetime~$M$. One may then expect that a condition of the form 
$
\langle \Delta u, U_1\rangle_{L^2(M)}=0
$
for all~$\u\in \Jvary_0$,
together with the property that~$U_1$ vanishes identically in the past of some~$\eta_t$ would imply~$U_1=0$.
However, this implication is not straightforward, although it could be arranged to hold by introducing additional structures on spacetime (for example, with the methodology of cutoff operators introduced
in~\cite[Section 6.6]{dirac} one obtains new conservation laws which can be used to achieve this;
for more details see~\cite{localize}).
For the purposes of this paper and for simplicity of presentation, it seems sensible to
simply condense this implication into a new condition.
\begin{Def}\label{completcondi}
Spacetime~$M$ is said to fulfill the {\bf{completeness condition}} (in the future) if the following property holds: \\[0.3em]
Let~$U_1\in L^2_\loc(M)$ and let~$\u_n\in \Jvary_0$ be a Cauchy sequence in~$L^2_\loc(M)$ with the property that~$\Delta\u_n\to U_1$ in~$L^2_\loc(M)$. Then, 
for every two foliations~$\eta^1,\eta^2$ and~$t_1,t_2\in\R$,
\begin{equation*}
\left. 
\begin{split}
	\langle \Delta\u,U_1\rangle_{L^2(M)}=0\  \mbox{ for all }\ \u\in \Jvary_0\\
	U_1\equiv 0\ \mbox{ on }\ \{ \eta^1_{t_1}\equiv 1\}\cap \{ \eta^2_{t_2}\equiv 1\}
\end{split}	
\ \right\}
\ \Longrightarrow\ \  U_1=0.
\end{equation*} 
\end{Def}

We will now show how this completeness assumption can be used to construct
causal Green's operators which are independent of the choice of the foliation.
We begin with the following preparatory observation.
Bearing in mind the construction of global solutions of Theorem~\ref{global} and 
applying a diagonal sequence argument to a sequence of time strips exhausting spacetime,
one readily verifies that
\[ 
\begin{split}
&\text{For any~$\w\in L^2_{0}(M)$ there is a sequence~$\u_n\in\Jvary_0$ which is } \\[-0.2em]
&\text{Cauchy in~$L^2_\loc(M)$ and satisfies~$\Delta\u_n\to S^\wedge_\eta\w$ in~$L^2_\loc(M)$}\:.
 \end{split} \]
We now consider the global weak solutions~$S_1^\wedge\w$ and~$S_2^\wedge\w$  associated with two different foliations~$\eta^1,\eta^2$, respectively. Let~$U_1:=S_1^\wedge\w-S_2^\wedge\w\in L^2_\loc(M)$. By definition of global weak solutions, we know that
\[ 
\la U_1,\Delta\u\rangle_{L^2(M)}=\langle S_1^\wedge\w,\Delta\u\rangle_{L^2(M)} - \langle S_2^\wedge\w,\Delta\u\rangle_{L^2(M)}= 0\quad\mbox{for all~$\u\in \Jvary_0$} \:. \]
Moreover, from Theorem~\ref{global}~(iii) there exist~$t_1$ and~$t_2$ such that 
$
U_1=0
$
on the intersection~$\{ \eta^1_{t_1}\equiv 1\}\cap \{\eta_{t_2}^2\equiv 1 \}$. The completeness condition in Definition~\ref{completcondi} implies that~$U_1\equiv 0$, so that~$S^\wedge_1\w = S^\wedge_2\w$ as desired. In summary, we have the following result.
\begin{Prp}\label{equalityS}
The global weak solution of Theorem~\ref{global} does not depend on the foliation. The index~$\eta$ will be dropped accordingly.
\end{Prp}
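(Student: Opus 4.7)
The plan is to directly exploit the completeness condition (Definition~\ref{completcondi}), using essentially the argument already sketched in the paragraph preceding the proposition. Given two foliations $\eta^1, \eta^2$ and an inhomogeneity $\w \in L^2_0(M)$, I would form the difference
\[ U_1 := S_{\eta^1}^\wedge \w - S_{\eta^2}^\wedge \w \;\in\; L^2_\loc(M) \]
and verify that the two hypotheses of Definition~\ref{completcondi} are satisfied, so that $U_1 \equiv 0$.

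First I would check the orthogonality hypothesis: by linearity and the fact that $S_{\eta^1}^\wedge \w$ and $S_{\eta^2}^\wedge \w$ are both global weak solutions of~\eqref{strong2}, for every $\u \in \Jvary_0$ one has $\langle U_1, \Delta \u \rangle_{L^2(M)} = \langle \w, \u \rangle_{L^2(M)} - \langle \w, \u \rangle_{L^2(M)} = 0$. Second, I would check the support hypothesis: by Theorem~\ref{global}~(iii), applied separately to each foliation, there are $t_1, t_2 \in \R$ with $S_{\eta^1}^\wedge \w$ vanishing on $\{\eta^1_{t_1} \equiv 1\}$ and $S_{\eta^2}^\wedge \w$ vanishing on $\{\eta^2_{t_2} \equiv 1\}$, so $U_1$ vanishes on $\{\eta^1_{t_1} \equiv 1\} \cap \{\eta^2_{t_2} \equiv 1\}$.

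The step I expect to be the main technical subtlety is verifying that $U_1$ has the approximability property required to even enter the hypothesis of Definition~\ref{completcondi}, namely that there is a Cauchy sequence $\u_n \in \Jvary_0$ in $L^2_\loc(M)$ with $\Delta \u_n \to U_1$ in $L^2_\loc(M)$. For this I would invoke the preliminary observation made just before the proposition: applying a diagonal-sequence argument to time strips exhausting $M$, each $S_{\eta^i}^\wedge \w$ is itself the $L^2_\loc$-limit of $\Delta \u^{(i)}_n$ for a Cauchy sequence $\u^{(i)}_n \in \Jvary_0$. Setting $\u_n := \u^{(1)}_n - \u^{(2)}_n$ then gives a Cauchy sequence in $L^2_\loc(M)$ with $\Delta \u_n \to U_1$, as needed.

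Having verified both hypotheses, the completeness condition yields $U_1 = 0$, hence $S_{\eta^1}^\wedge \w = S_{\eta^2}^\wedge \w$ for every $\w \in L^2_0(M)$, which is the claim. Since the whole argument runs on just two ingredients already established, namely the vanishing-in-the-distant-past property of Theorem~\ref{global}~(iii) and the approximation property for global solutions, the proof will essentially amount to a short assembly of these facts followed by an application of Definition~\ref{completcondi}.
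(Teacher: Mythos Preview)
Your proposal is correct and follows essentially the same argument as the paper: form the difference $U_1 = S_{\eta^1}^\wedge \w - S_{\eta^2}^\wedge \w$, verify the orthogonality condition from the global weak equation, verify the vanishing condition on $\{\eta^1_{t_1}\equiv 1\}\cap\{\eta^2_{t_2}\equiv 1\}$ via Theorem~\ref{global}~(iii), invoke the diagonal-sequence approximation property to meet the hypothesis of Definition~\ref{completcondi}, and conclude $U_1=0$. The only cosmetic difference is that you make the construction of the approximating sequence for $U_1$ explicit by taking $\u_n=\u_n^{(1)}-\u_n^{(2)}$, whereas the paper leaves this implicit; note also the sign convention in Theorem~\ref{global}~(ii), which gives $\langle S_{\eta^i}^\wedge\w,\Delta\u\rangle_{L^2(M)}=-\langle\w,\u\rangle_{L^2(M)}$, though of course the difference still vanishes.
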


The global solution~$S_\eta^\wedge\w$ is said to be \textit{retarded} in the sense that it vanishes in the past of the inhomogeneity~$\w$. In a similar way, reversing the direction of time (and adjusting Definitions~\ref{Defshielding} and~\ref{completcondi} accordingly), one can construct corresponding \textit{advanced} solutions~$S_\eta^\vee\w$ which vanish instead in the future of~$\w$. 

By restricting to arbitrary time strips and using Theorem~\ref{global} one sees that the mappings~$S_\eta^\vee$ and~$S_\eta^\wedge$ depend \textit{linearly} on the inhomogeneity.
\begin{Def} \label{defretarded}   The {\bf retarded} and {\bf advanced Green's operators} are defined, respectively, as the linear mappings
\[ S^\wedge \::\: L^2_{0}(M) \rightarrow L_\loc^2(M)\,,\qquad 
S^\vee \::\: L^2_{0}(M) \rightarrow L_\loc^2(M).\:. \]
Their difference
\beq \label{Gdef}
G := S^\wedge -S^\vee \::\: L^2_{0}(M)\rightarrow L_\loc^2(M)
\eeq
is referred to as the {\bf causal fundamental solution}.
\end{Def} \noindent
By construction, the causal fundamental solution~$G$ maps compactly supported, square-integrable jets to weak solutions of the \textit{homogeneous equation}, i.e.
$$
\langle G\w,\Delta\u\rangle_{L^2(M)}=0\quad\mbox{for all~$\u\in\Jvary_0$}.
$$

The goal of the next section is to give the following statement a precise mathematical meaning: \textit{The action of the retarded and advanced Green's operators is causal}.

\subsection{Causal Structure of the Linearized Fields and of Spacetime}
So far we have worked under the assumption that a global foliation fulfilling the hyperbolicity conditions of Definition~\ref{globfoliate} exists. Let~$\Theta$ denote the class of \textit{all} such global foliations. This allows us to introduce the following notion of \textit{causality}.
\begin{Def}
Let~$x,y\in M$ be two spacetime points. We say that:
\begin{itemize}[leftmargin=2.5em]
\item[\rm{(i)}] $x$ {\bf{chronologically precedes}}~$y$ and denote it by~$x\ll y$, if 
\[ \forall\, (\eta_t)_{t \in \R} \in\Theta\ \ \forall\, t\in\R:\quad\eta_t(x)<1\ \Longrightarrow\ \eta_t(y)=0 \:. \]
\item[\rm{(ii)}] $x$ {\bf{causally precedes}}~$y$, and denote it by~$x\prec y$, if
\[ \forall\, (\eta_t)_{t \in \R} \in\Theta\ \ \forall\, t\in\R:\quad\eta_t(x)=0\ \Longrightarrow\ \eta_t(y)<1 \:. \]
\end{itemize}
\end{Def}
One immediately verifies that~$\ll$ induces a \textit{transitive} relation on~$M$. This is, however, in general not true for the causal relation~$\prec$. On the other hand, while the relation~$\prec$ is \textit{reflexive}, the chronological relation~$\ll$ is not, i.e.
\begin{equation}\label{reflexive}
x \not\ll x\qquad \text{but} \qquad x\prec x.
\end{equation}
For each relation, we can now introduce a corresponding notion of future and past cone. The two different future cones are depicted in Figure~\ref{fig1}, which also illustrates~\eqref{reflexive}. 
\begin{figure}
\centering
\def\svgwidth{\columnwidth}
\scalebox{0.8}{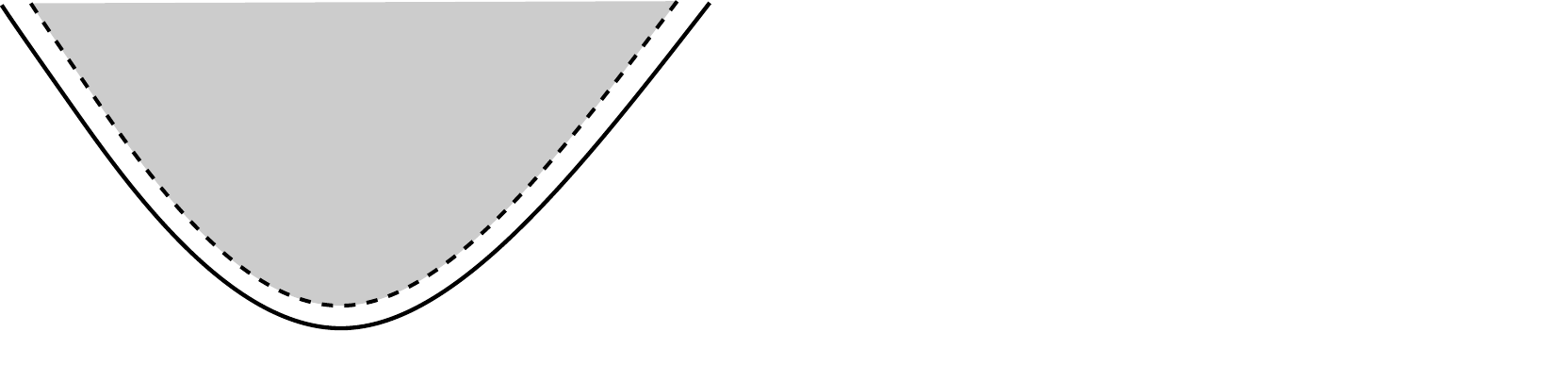}
\caption{Chronological and causal future cones.}
\label{fig1}
\end{figure}%

\begin{Def}
The {\bf{future}} and {\bf{past chronological cones}} are defined by
$$
I^\vee(x):=\left\{y\in M\:|\: x\ll y\right\} \quad\mbox{and}\quad
\quad I^\wedge(x):= \left\{y\in M\:|\: y\ll x \right\}.
$$
The {\bf{future}} and {\bf{past causal cones}} are defined by
$$
J^\vee(x):=\left\{y\in M\:|\: x\prec y\right\} \quad\mbox{and}\quad
\quad J^\wedge(x):= \left\{y\in M\:|\: y\prec x \right\}.
$$
The unions
$$
I(x):=I^\vee(x)\cup I^\wedge(x) \quad\mbox{and}\quad J(x):= J^\vee(x)\cup J^\wedge(x)
$$
are referred to as the {\bf{chronological}} and {\bf{causal cone}} of~$x$, respectively.
\end{Def} \noindent
It follows immediately from the definitions that
$$
x\in I^\vee(y)\ \mbox{ iff }\ y\in I^\wedge(x)\quad\mbox{and}\quad x\in J^\vee(y)\ \mbox{ iff }\ y\in J^\wedge(x) \:.
$$
Statement \eqref{reflexive} can be rephrased in terms of future and past cones as follows,
$$
x\in J^\vee(x) \qquad\mbox{but}\qquad x \not\in I^\vee(x)
$$
(and similarly for the past cones).
In a similar way, one can define future and past cones generated by a compact subset~$K\Subset M$ by
$$
J^\vee(K):=\bigcup_{x\in K}J^\vee(x)\:,\quad J^\wedge(K):=\bigcup_{x\in K}J^\wedge(x)
\qquad \text{and} \qquad J(K):=J^\vee(K)\cup J^\wedge(K) \:.
$$ 
Let~$B\Subset M$ be another compact set. Then, it follows by direct inspection that
\begin{equation}\label{causdisconn}
B\subset M\setminus J(K)\qquad\mbox{if and only if}\quad K\subset M\setminus J(B) \:.
\end{equation}
 Condition \eqref{causdisconn} provides us with a sensible notion of {\bf{causally disconnected}} compact sets. 
By construction, this means in particular that for any~$x\in K$ and~$y\in B$ one has~$x\not\prec y$, i.e. there exists a foliation~$(\eta_t)_{t \in \R}$ and a time~$t$ such that (see the left of Figure~\ref{figurestrongCD})
\begin{figure}
\centering
\def\svgwidth{\columnwidth}
\scalebox{0.7}{\hspace{-1.5em}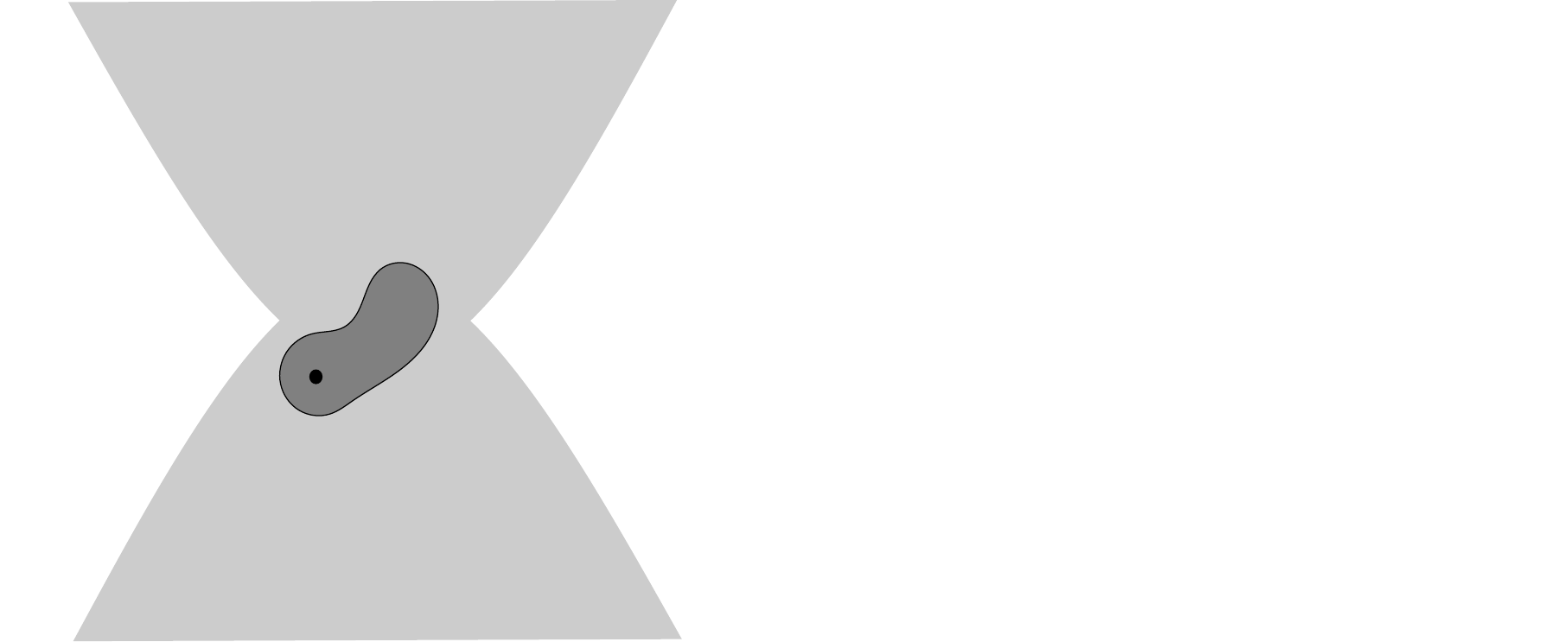}
\caption{Causally and strongly causally disconnected sets.}
\label{figurestrongCD}
\end{figure}
$$
\eta_t(x)=0\quad\mbox{and}\quad\eta_t(y)=1
$$
(the same is true if the roles of~$x$ and~$y$ are interchanged). However, the cutoff function~$\eta_t$ depends on the points~$x$ and~$y$. Moreover, the support of~$\theta_t$ may intersect the sets~$K$ or~$B$ at other points than~$x$ or~$y$ (see again the left of Figure~\ref{figurestrongCD}). In general, it is not clear whether such a separation can be carried out uniformly on~$K$ and~$B$. Moreover, having the canonical commutation relations in mind, we need to take into consideration that the Green's operators vanish in the past/future of the inhomogeneity and are foliation independent.

Referring to Lemma~\ref{lemmacompact} and Remark~\ref{remarkcompact}, we now give the following  stronger definition of causal disconnection (see the right of Figure~\ref{figurestrongCD}).
\begin{Def}\label{strongCD}
Two compact sets~$K,B \Subset M$ are said to be {\bf{strongly causally disconnected}} if they are causally disconnected and if there is~$(\eta_t)_{t \in \R} \in\Theta$ and~$t\in\R$ such that
\[ \eta_{t}|_{\mathfrak{K}(K)}\equiv 0\quad\mbox{and}\quad 	\eta_{t}|_{\mathfrak{K}(B)}\equiv 1 \:, \]
and similarly if the roles of~$K$ and~$B$ are exchanged. This is denoted by~$B\perp K$.
\end{Def}

The next result follows immediately by combining Definition~\ref{strongCD} with Lemma~\ref{lemmacompact} and Remark~\ref{remarkcompact} (with obvious changes for the advanced Green's operator).
\begin{Prp}\label{Propositionspacelikesepvanish}
Let~$\u,\v\in L^2_0(M)$ be such that~$\supp\u\perp \supp\v$. Then
$$
S^\vee\u\,|_{\supp\v}=S^\wedge\u\,|_{\supp\v}=G\,\u\,|_{\supp\v} \equiv 0 \:.
$$
In particular,
\[ \langle \u, S^\vee \v\rangle_{L^2(M)}=	\langle \u, S^\wedge \v\rangle_{L^2(M)}=	\langle \u, G\, \v\rangle_{L^2(M)}=0 \:. \]
\end{Prp}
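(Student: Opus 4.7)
The strategy is simply to apply Lemma~\ref{lemmacompact} (and its time-reversed analogue for the advanced operator) to the foliations furnished by the strong causal disconnection hypothesis, and then to deduce the pairing identities by a symmetry argument.

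First I would unpack Definition~\ref{strongCD} applied to $K := \supp \u$ and $B := \supp \v$ in order to obtain a foliation $(\eta_t)_{t \in \R}\in \Theta$ and a time $s \in \R$ with
\[ \eta_s|_{\mathfrak{K}(\supp\u)} \equiv 0 \qquad \text{and} \qquad \eta_s|_{\mathfrak{K}(\supp\v)} \equiv 1 \:. \]
Choosing an open, relatively compact neighborhood $W \supset \supp\u$ small enough so that $\mathfrak{K}(W) \subset \{\eta_s \equiv 0\}$ (made legitimate by Assumption~\ref{remarkcompact} together with the fact that $\mathfrak{K}$ only adds a boundary strip on the Compton scale), Lemma~\ref{lemmacompact} applied to $\u \in L^2_0(W)$ then yields $\eta_s \cdot S^\wedge \u \equiv 0$ on $M$. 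Since $\supp \v \subset \mathfrak{K}(\supp \v)$ and $\eta_s \equiv 1$ on the latter set, this immediately implies $S^\wedge \u |_{\supp \v} \equiv 0$.

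The analogous statement for $S^\vee$ follows by time reversal. The symmetric clause of Definition~\ref{strongCD} (``and similarly if the roles of~$K$ and~$B$ are exchanged''), together with the advanced counterpart of Lemma~\ref{lemmacompact} as recorded in Section~\ref{sectionfoliation}, furnishes a (possibly different) foliation $(\eta'_t)$ and time $s'$ such that $\eta'_{s'}|_{\mathfrak{K}(\supp\u)} \equiv 1$ and $\eta'_{s'}|_{\mathfrak{K}(\supp\v)} \equiv 0$, from which $(1-\eta'_{s'})\,S^\vee \u \equiv 0$ and hence $S^\vee \u|_{\supp\v} \equiv 0$. Subtracting, $G \u|_{\supp\v} \equiv 0$ by the definition~\eqref{Gdef}. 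For the ``in particular'' part, observe that the relation $\perp$ is symmetric (built into Definition~\ref{strongCD}), so the first half of the proposition applies with $\u$ and $\v$ interchanged and gives $S^\vee \v|_{\supp\u} \equiv 0$; therefore
\[ \langle \u, S^\vee \v\rangle_{L^2(M)} = \int_{\supp\u}\langle \u(x),(S^\vee\v)(x)\rangle_x\,d\rho(x)=0 \:, \]
and the corresponding identities for $S^\wedge$ and $G$ are obtained identically.

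The only delicate point I see is the matching of compact ``target sets'': Definition~\ref{strongCD} controls $\eta_s$ on $\mathfrak{K}(\supp\u)$, whereas Lemma~\ref{lemmacompact} requires control on $\mathfrak{K}(W)$ for an open relatively compact $W \supset \supp\u$. This is not circular because Assumption~\ref{remarkcompact} makes the assignment $W \mapsto \mathfrak{K}(W)$ foliation-independent and its ``width'' is fixed by $\supp \theta_t$; one simply shrinks $W$ around $\supp\u$ so that $\mathfrak{K}(W)$ still sits inside $\{\eta_s \equiv 0\}$. This is precisely what the adjective \emph{strongly} in Definition~\ref{strongCD} is designed to guarantee, and once this geometric observation is in place the proof reduces to the direct chain of implications above.
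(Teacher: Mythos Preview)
Your approach is exactly the one the paper intends: it states that the result ``follows immediately by combining Definition~\ref{strongCD} with Lemma~\ref{lemmacompact} and Remark~\ref{remarkcompact} (with obvious changes for the advanced Green's operator),'' and you have simply unpacked that sentence. The technical wrinkle you flag about $\mathfrak{K}(\supp\u)$ versus $\mathfrak{K}(W)$ is a minor notational looseness in the paper itself (Lemma~\ref{lemmacompact} takes open relatively compact $W$, while Definition~\ref{strongCD} writes $\mathfrak{K}(K)$ for compact $K$); the intended reading is precisely your workaround, so no genuine gap arises.
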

Proposition~\ref{Propositionspacelikesepvanish} provides us with a precise mathematical formulation of the causality condition anticipated at the end of Section~\ref{sectionfoliation}. One could proceed further, for example by tentatively defining the {\bf{strong causal cone}} generated by a compact~$K$ as
\[ 
J_S(K):=M\setminus \bigcup\{B\Subset M\:|\: B\perp K \} \:. \]
The support of the causal fundamental solution would then be contained in it. Namely,
\begin{Prp}
$
\supp G\,\w\subset J_S(\supp\w)
$
for all~$\w\in L_0^2(M)$.
\end{Prp}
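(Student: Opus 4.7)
The plan is to argue by contrapositive: starting from $x\notin J_S(\supp\w)$, I would show that $G\w$ vanishes almost everywhere on some open neighborhood of $x$, whence $x\notin \supp G\w$.

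Unfolding the definition $J_S(\supp\w)=M\setminus\bigcup\{B\Subset M\,|\,B\perp\supp\w\}$, the assumption $x\notin J_S(\supp\w)$ furnishes a compact $B\Subset M$ with $x\in B$ and $B\perp \supp\w$. The next step is to produce an open neighborhood $U$ of $x$ whose closure still satisfies $\bar U \perp \supp\w$. This rests on a monotonicity property of strong causal disconnection: if $B'\subset B$ is compact, then $B\perp\supp\w$ should imply $B'\perp\supp\w$. Unpacking Definition~\ref{strongCD}, this reduces to a monotonicity statement for the auxiliary compact set $\mathfrak{K}(\cdot)$ from Lemma~\ref{lemmacompact} and Assumption~\ref{remarkcompact}, which follows naturally from its construction; one also has to realize $x$ as an interior point of some compact $B'\Subset M$ with $B'\perp\supp\w$, which is arranged by a slight thickening if necessary.

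With such a $U$ in hand, the conclusion follows directly from Proposition~\ref{Propositionspacelikesepvanish}. For every jet $\v\in L^2_0(M)$ with $\supp\v\subset \bar U$, the monotonicity above gives $\supp\v\perp \supp\w$, so Proposition~\ref{Propositionspacelikesepvanish} (applied with $\u$ replaced by $\w$) yields
\[
G\w\big|_{\supp\v}\equiv 0 \:.
\]
Letting $\v$ range over $\Jvary_0$-jets whose supports exhaust $U$, one concludes that $G\w$ vanishes almost everywhere on $U$, and since $U$ is an open neighborhood of $x$, this delivers $x\notin\supp G\w$ as desired.

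The main obstacle is the set-theoretic step in the second paragraph. The relation $\perp$ is defined only between pairs of compact sets via the auxiliary compact set $\mathfrak{K}(\cdot)$, whose construction in Lemma~\ref{lemmacompact} is formulated for \emph{open} relatively compact subsets, so some bookkeeping is needed to verify that the separating properties of the foliation guaranteed by Definition~\ref{strongCD} are inherited by compact subsets of $B$ and by slightly thickened compact sets containing $x$ in their interior. One also needs that $\Jvary_0$ contains enough jets to realize arbitrary compact supports inside $U$, a mild richness condition on the subspace $\Jvary$. Once these points are in place, the argument reduces to a short invocation of Proposition~\ref{Propositionspacelikesepvanish}.
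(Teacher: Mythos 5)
The paper states this proposition without proof --- it appears as a tentative observation, with the authors immediately adding that they ``will not proceed in this direction'' --- so there is no reference argument to compare against. Your overall structure (contrapositive via Proposition~\ref{Propositionspacelikesepvanish}) is the natural one and the reduction is correct. One small simplification: the worry about $\Jvary_0$ being rich enough is unnecessary, since Proposition~\ref{Propositionspacelikesepvanish} is stated for arbitrary $\u,\v \in L^2_0(M)$, not only for jets in $\Jvary_0$; once an open $U \ni x$ with $\bar U \perp \supp\w$ is in hand, one may simply take $\v = \mathbf{1}_{\bar U}$ and read off $G\w|_{\bar U} \equiv 0$ directly, with no exhaustion needed.

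The step you describe as a ``slight thickening'' and ``bookkeeping,'' however, is the real mathematical content and should not be understated. To pass from $x \in B$ with $B \perp \supp\w$ to an open $U \ni x$ with compact closure satisfying $\bar U \perp \supp\w$, one needs two things that the paper's definitions do not provide: (i) that $M \setminus J(\supp\w)$ is open near $x$, so that the thickened set remains causally disconnected from $\supp\w$ --- but the cones $J(K)$ are unions of causal cones $J^\vee(y), J^\wedge(y)$ defined by quantifying over all foliations, and nothing forces them to be closed; and (ii) that the auxiliary compact $\mathfrak{K}(\cdot)$ of Lemma~\ref{lemmacompact} and Assumption~\ref{remarkcompact} is stable under small thickening of its argument, so that the foliation separation $\eta_t|_{\mathfrak{K}(B')} \equiv 1$ survives --- but Lemma~\ref{lemmacompact} only asserts existence of \emph{some} $\mathfrak{K}_\eta(W)$ and is silent about its dependence on $W$ (and is moreover formulated for open rather than compact sets, so already Definition~\ref{strongCD} is implicitly invoking a convention). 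Since the inclusion $\supp G\w \subset J_S(\supp\w)$ requires $G\w$ to vanish a.e.\ on an \emph{open} neighborhood of each point outside $J_S(\supp\w)$, Proposition~\ref{Propositionspacelikesepvanish} alone does not suffice without these regularity properties. This gap is arguably inherent in the paper's own proof-free claim, but your write-up should name these as unproven structural hypotheses on $J(\cdot)$ and $\mathfrak{K}(\cdot)$ rather than dismiss them as bookkeeping.
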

\noindent Similar properties can be shown for the advanced and retarded Green's operators. 

Although these causal properties seem worth being analyzed in more depth,
here we will not proceed in this direction,
because the content of Proposition~\ref{Propositionspacelikesepvanish} suffices for the purposes of this paper.

\subsection{Adjoint Properties of the Green's Operators}\label{sectionadjoint}
The goal of this section consists in proving that the advanced and retarded Green's operators restricted to
a suitable scalar product space are the adjoints of each other.
To this aim, we need additional assumptions, which we now introduce.
We say that a jet in~$L_\loc^2(M)$ is {\em{spatially compact}} if its restriction to any time strip has compact support, no matter what foliation we refer to. This will henceforth be denoted by a lower index~``$\sc$''.
For example,
\[ \Jvary_\sc := \{ \v \in \Jvary \:|\: \text{$\v$ spatially compact} \} \:. \]

\begin{Def}\label{asymptoticallypartitioned}
Spacetime~$M$ is said to be {\bf{asymptotically partitioned}} if there is an exhaustion by
relatively compact sets~$B_n$ and a family of linear operators
$$
P_n:\Jvary_\sc\to \Jvary_0
$$
with the property that, for all~$\v\in \Jvary_\sc$,
\begin{itemize}[leftmargin = 2.5em]
\item[\rm{(i)}] $P_n\v\equiv \v$ in~$B_{n}$ and~$P_n\v\equiv 0$ outside~$B_{n+1}$.\\[-0.8em]
\item[\rm{(ii)}] $\|P_n\v\|_x\le \|\v\|_x$ and~$P_n\v\to \v$ in~$L^2_\loc(M)$.
\end{itemize}
\end{Def}
\noindent As a direct consequence of point (ii), we see that
\[ 
\Jvary_0\mbox{ is dense in } \Jvary_\sc\mbox{ with respect to the topology of~$L^2_\loc(M)$ \:.} \]
In the example when~$\Jvary$ is chosen as the space of all smooth functions on a manifold,
the mapping~$P_n$ can be introduced as the multiplication with a bump function. In our setting, however, the
set~$\Jvary$ will in general not be closed under multiplication by smooth functions.
This is why we need to impose this property as a separate condition.

We now introduce the following jet space,
\beq \label{JGr}
(\Jvary_0)^*:=\left\{\v\in L^2_{0}(M)\:|\: S^\vee\v,\,S^\wedge\v \in \Jvary_\sc\right\}.
\eeq
On this space, the Green's operators are indeed adjoints of each other
with respect to the standard $L^2$-scalar product:

\begin{Prp} \label{propadjoingS}
If spacetime~$M$ is asymptotically partitioned, then
$$
\langle S^\wedge \u,\v\rangle_{L^2(M)}=\langle  \u,S^\vee\v\rangle_{L^2(M)}\qquad\mbox{for all~$\u,\v\in (\Jvary_0)^*$}\,.
$$
\end{Prp}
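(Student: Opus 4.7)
The plan is to use the defining weak equation
\begin{equation*}
\langle S^\wedge \u, \Delta \w \rangle_{L^2(M)} = -\langle \u, \w \rangle_{L^2(M)} \qquad \text{for all } \w \in \Jvary_0
\end{equation*}
from Theorem~\ref{global}(ii), and to extend it to the ``illegal'' choice $\w = S^\vee \v \in \Jvary_\sc$ via the localizers $P_n$ supplied by Definition~\ref{asymptoticallypartitioned}. Concretely, I would set $\w_n := P_n(S^\vee \v) \in \Jvary_0$, so that by Definition~\ref{asymptoticallypartitioned}(ii) one has $\w_n \to S^\vee \v$ in $L^2_\loc(M)$. Since $\supp \u$ is compact, this immediately yields
\begin{equation*}
\langle S^\wedge \u, \Delta \w_n \rangle_{L^2(M)} = -\langle \u, \w_n \rangle_{L^2(M)} \;\xrightarrow[n\to\infty]{}\; -\langle \u, S^\vee \v \rangle_{L^2(M)}.
\end{equation*}

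The heart of the argument is then to show that this same limit equals $-\langle S^\wedge \u, \v\rangle_{L^2(M)}$, from which the desired identity follows. For this I would invoke the compact range hypothesis (Definition~\ref{compactrange}): the pointwise expression~\eqref{eqlinlip2} for $\Delta \w_n(x)$ only probes $\w_n$ on the neighbourhoods $\Omega(x),\Omega'(x)$ associated with $K=\{x\}$, on which $\w_n$ agrees with $S^\vee \v$ as soon as $B_n \supset \Omega(x)\cup\Omega'(x)$. Combined with the advanced counterpart of Theorem~\ref{global}(ii) in pointwise form, this gives $\Delta \w_n(x) = -\v(x)$ off a compact-range fattening of $\supp(S^\vee \v)\setminus B_n$. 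The main technical obstacle I anticipate is precisely this pointwise identification: $S^\vee \v$ is a priori only an $L^2_\loc$-jet on which $\Delta$ is not directly defined, so one has to combine the locality encoded in the compact range with the surface-layer regularity of $\Jvary_\sc$ to lift the weak equation $\langle S^\vee \v, \Delta \u'\rangle_{L^2(M)} = -\langle \v, \u'\rangle_{L^2(M)}$ to a pointwise statement on this good region.

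To close the argument, I would show that the residual error region lies inside the vanishing set of $S^\wedge \u$. By the advanced counterpart of Theorem~\ref{global}(iii) one has $\supp(S^\vee \v) \subset \{\eta_{t_1} > 0\}$ for some $t_1 \in \R$; by spatial compactness of $S^\vee \v$, the set $\supp(S^\vee \v) \cap \{\eta_{-T} < 1\}$ is compact for every $T > 0$, so $\supp(S^\vee \v)\setminus B_n$ recedes into $\{\eta_{s_n}=1\}$ with $s_n \to -\infty$ as $n\to\infty$. Choosing $t_*$ provided by Theorem~\ref{global}(iii) such that $S^\wedge \u \equiv 0$ on $\{\eta_{t_*}=1\}$, for all sufficiently large $n$ the compact-range fattening of $\supp(S^\vee \v)\setminus B_n$ still sits inside $\{\eta_{t_*}=1\}$. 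Hence $\langle S^\wedge \u, \Delta \w_n + \v \rangle_{L^2(M)} = 0$ for all such $n$, which combined with the convergence of the first display gives $\langle S^\wedge \u, \v \rangle_{L^2(M)} = \langle \u, S^\vee \v \rangle_{L^2(M)}$, as claimed.
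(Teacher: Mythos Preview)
Your overall strategy (localize one Green's operator by~$P_n$, plug into the weak equation, pass to the limit) is close to the paper's, but there is a real gap in the middle of your argument, and it stems from a misreading of the hypotheses.

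First, your stated ``main technical obstacle'' is not an obstacle at all. By the very definition~\eqref{JGr} of~$(\Jvary_0)^*$, the jet~$S^\vee\v$ lies in~$\Jvary_\sc\subset\Jvary\subset\Jtest\subset\J^2$, so~$\Delta(S^\vee\v)$ is perfectly well-defined pointwise. You do not need to ``lift'' any regularity here.

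The actual gap is your claim that~$\Delta\w_n(x)=-\v(x)$ on the good region, which you justify by invoking ``the advanced counterpart of Theorem~\ref{global}(ii) in pointwise form''. That pointwise form does not follow from the assumptions of the proposition. The weak equation together with Green's formula only yields
\[
\big\langle \Delta(S^\vee\v)+\v,\u'\big\rangle_{L^2(M)}=0\qquad\text{for all }\u'\in\Jvary_0,
\]
i.e.\ $\Delta(S^\vee\v)=-\v+E$ with~$E\in(\Jvary_0)^\perp$. Since~$\Jvary_0$ is not assumed dense in~$L^2$, the error~$E$ need not vanish. (The pointwise identity is precisely the content of Lemma~\ref{lemmainverseDelta}, which is proved \emph{later} and requires the extra assumptions~\eqref{invarianceJvary}.) Consequently your computation of~$\langle S^\wedge\u,\Delta\w_n\rangle$ carries an uncontrolled term~$\langle S^\wedge\u,E\rangle$ that your geometric argument about the location of~$\supp(S^\vee\v)\setminus B_n$ does not address.

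The paper sidesteps this issue by a symmetric trick: it never needs~$\Delta(S^\vee\v)=-\v$. Instead it localizes~$S^\wedge\u$ first, uses the weak equation for~$S^\vee\v$ with test jet~$P_n(S^\wedge\u)$, applies Green's formula to swap~$\Delta$ onto~$S^\vee\v$, passes to the limit, applies Green's formula once more to swap~$\Delta$ back onto~$S^\wedge\u$, and then localizes~$S^\vee\v$ and uses the weak equation for~$S^\wedge\u$. The quantities~$\Delta(S^\vee\v)$ and~$\Delta(S^\wedge\u)$ appear only as intermediate pairings, and their pointwise values are never compared to~$\v$ or~$\u$.
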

\begin{proof}
Let~$\u,\v \in (\Jvary_0)^*$. By definition, these jets
are mapped by the Green's operators to elements of~$\Jvary_\sc$.
We choose~$N$ so large that~$\supp\v\subset B_n$ for all~$n\ge N$.
Then, using the definition of global weak solution, we have
\[ \langle S^\wedge\u,\v\rangle_{L^2(M)} =\langle P_n( S^\wedge\u),\v\rangle_{L^2(M)}=\langle \Delta(P_n\, S^\wedge\u),S^\vee\v\rangle_{L^2(M)} \:. \]
Using that the jet~$P_n\, S^\wedge\u$ is compactly supported and that the Lagrangian has compact range, 
we can apply the Green's formula (with vanishing boundary terms) to obtain
\[ \langle S^\wedge\u,\v\rangle_{L^2(M)}
= \langle P_n( S^\wedge\u),\Delta(S^\vee\v)\rangle_{L^2(M)} \:. \]

Taking the limit~$n\to \infty$, using Definition~\ref{asymptoticallypartitioned}~(ii) (choosing a subsequence if necessary) and applying Lebesgue's dominated convergence theorem, we obtain
$$
\langle S^\wedge\u,\v\rangle_{L^2(M)}=\langle S^\wedge\u,\Delta(S^\vee\v)\rangle_{L^2(M)}=\langle \Delta(S^\wedge\u),S^\vee\v\rangle_{L^2(M)},
$$
where in the last step we again used the Green's formula. Again applying Lebesgue's dominated
convergence theorem gives
\begin{equation*}
\begin{split}
\langle \Delta(S^\wedge\u),S^\vee\v\rangle_{L^2(M)} & = \lim_{n\to\infty }\langle \Delta(S^\wedge\u),P_n (S^\vee\v)\rangle_{L^2(M)} \\
&=\lim_{n\to\infty }\langle \u,P_n (S^\vee\v)\rangle_{L^2(M)}
=\langle \u,S^\vee\v\rangle_{L^2(M)} \:.
\end{split}
\end{equation*}
Putting all together, we get the claim.
\end{proof}

We next analyze the kernel of the Green’s operators. Note that the inhomogeneity~$\w$ in the global weak equation~\eqref{weakequation} can be
changed arbitrarily by jets in the orthogonal complement of the test space. More precisely, the right side
of~\eqref{weakequation}
vanishes identically for any~$\w\in (\Jvary_0)^\perp\cap L_0^2(M)$, where the symbol~$\perp$ refers to the standard scalar product of~$L^2(M)$. This suggests that also the Green’s
operators should vanish for such jets. This is indeed the case, as shown
in the next lemma. 
\begin{Lemma}\label{lemmakernel}
The Green's operators vanish on the orthogonal complement of~$\Jvary_0$, i.e.
\[ (\Jvary_0)^\perp\cap L_{0}^2(M)\subset \ker S^\wedge\cap \ker S^\vee \:. \]
\end{Lemma}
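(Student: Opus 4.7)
The plan is to retrace the Fréchet-Riesz construction of the retarded solution in each time strip, and show that the bounded linear functional which defines that solution vanishes identically when $\w\in(\Jvary_0)^\perp$. Fix such a $\w$ and choose a time strip $L=L_s^t$ so large that $\supp\w \stackrel{\circ}{\subset} L$. By Definition~\ref{defproperlycontained} this means $\eta_s\equiv 0$ and $\eta_t\equiv 1$ on $\supp\w$, and therefore $\eta_{[s,t]}\equiv 1$ on $\supp\w$.

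For any $\u\in\overline{\J}^{\,t}\subset\Jvary_0$ the weighted pairing then collapses to the full $L^2$-pairing:
\[
\la \w,\u\ra_{L^2(L)} = \int_M \la \w,\u\ra_x\,\eta_{[s,t]}(x)\,d\rho(x) = \int_{\supp\w} \la \w,\u\ra_x\,d\rho(x) = \la \w,\u\ra_{L^2(M)} = 0 ,
\]
where the middle equalities use that the integrand is supported in $\supp\w$ and that $\eta_{[s,t]}\equiv 1$ there, and the final equality is the orthogonality hypothesis. The functional $\u\mapsto \la \w,\u\ra_{L^2(L)}$ is bounded with respect to the energy norm $\tb\cdot\tb$ on $\overline{\J}^{\,t}$, since the energy estimate~\eqref{ees} combined with Cauchy-Schwarz yields $|\la\w,\u\ra_{L^2(L)}|\le \|\w\|_{L^2(L)}\,\|\u\|_{L^2(L)}\le \Gamma\|\w\|_{L^2(L)}\,\tb\u\tb$; hence, vanishing on the dense subspace $\overline{\J}^{\,t}$ forces the continuous extension to vanish on all of $\overline{\mathcal{H}}^{\,t}$.

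By the Fréchet-Riesz representation used in Theorem~\ref{thmexistence}, the unique vector $V\in\overline{\mathcal{H}}^{\,t}$ representing this zero functional is $V=0$, and consequently $\v^\wedge(\w,L)=\Delta V=0$. Since $L$ could be taken arbitrarily large around $\supp\w$, the limiting procedure of Theorem~\ref{global} gives $S^\wedge\w=0$ in $L^2_\loc(M)$. An entirely analogous argument with the direction of time reversed shows that the advanced solution also satisfies $S^\vee\w=0$, and together these establish the desired inclusion $(\Jvary_0)^\perp\cap L^2_0(M)\subset \ker S^\wedge\cap \ker S^\vee$.

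The only subtle step is the identification $\la\w,\u\ra_{L^2(L)}=\la\w,\u\ra_{L^2(M)}$, which is what lets us transfer the hypothesis (formulated in terms of the unweighted $L^2(M)$ inner product) into a statement about the Hilbert space $\overline{\mathcal{H}}^{\,t}$ on which the construction of the retarded solution takes place; this is secured by the proper containment $\supp\w\stackrel{\circ}{\subset}L$. No additional estimates beyond those already available are required.
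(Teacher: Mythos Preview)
Your proof is correct and follows essentially the same route as the paper's own argument: choose a time strip properly containing $\supp\w$, observe that the functional $\u\mapsto\la\w,\u\ra_{L^2(L)}$ vanishes on the dense subspace $\overline{\J}^{\,t}$ (using $\la\w,\u\ra_{L^2(L)}=\la\w,\u\ra_{L^2(M)}=0$), conclude that the Fr\'echet--Riesz representative~$V$ is zero and hence $\v^\wedge(\w,L)=\Delta V=0$, and finally pass to the limit. Your write-up simply makes a few of the intermediate steps (the collapse of the weighted pairing and the boundedness of the functional) more explicit than the paper does.
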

\begin{proof} We only prove the inclusion in~$\ker S^\wedge$; the argument for~$\ker S^\vee$ is analogous.
Let~$\w \in L^2_0(M)$ be orthogonal to all jets in~$\Jvary_0$.
In view of the definition of~$S^\wedge \w$ as a limit (see Theorem \eqref{global}~(i)),
it clearly suffices to show that the weak solution~$\v(L_s^t,\w)$ vanishes in
sufficiently large time strips. To this end, we return to the existence proof of Theorem~\ref{global}.
We choose~$L_s^t$ so large that~$\supp\w\stackrel{\circ}{\subset}L_s^t$.
Then for any~$\u \in \overline{\J}^t$,
\[ \la \w, \u \ra_{L^2(L_s^t)} = \la \w , \u \ra_{L^2(M)} = 0 \:. \]
Therefore, the linear functional~$\la \w , .\, \ra_{L^2(L_s^t)}$ vanishes on a dense subset
of the Hilbert space~$\overline{\mathcal{H}}(L_s^t)$, implying that it is represented by the zero vector~$0=V \in \overline{\mathcal{H}}(L_s^t)$.
As a consequence, also the weak solution~$\v(L_s^t,\w)=\Delta V$ vanishes.
\end{proof}
\begin{Remark}
The reason why the set~$(\Jvary_0)^\perp\cap L_{0}^2(M)$ is non-trivial in general lies in the fact that no denseness assumption of~$\Jvary$ in~$\Jtest$ has been made.
\end{Remark}

As a consequence of the last lemma, the bilinear form in Proposition~\ref{propadjoingS} is independent of the representatives and extends to a bilinear form on the quotient space
\[ \J_0^* := \bigslant{ (\Jvary_0)^*}{(\Jvary_0)^\perp} \]
To see this, let~$\u\sim \u'$ and~$\v\sim \v'$ be different representatives of two arbitrary equivalence classes of~$\J_0^*$. Thanks to Lemma~\ref{lemmakernel} we know that~$S^\wedge\u=S^\wedge\u'$ and~$S^\vee\v=S^\vee\v'$. Therefore, using Proposition~\ref{propadjoingS},
$$
\langle S^\wedge\u,\v\rangle_{L^2(M)}=\langle S^\wedge\u',\v\rangle_{L^2(M)}=\langle \u',S^\vee\v\rangle_{L^2(M)}=\langle \u',S^\vee\v'\rangle_{L^2(M)}=\langle S^\wedge\u',\v'\rangle_{L^2(M)}\,.
$$ 
We conclude that the product~$\langle S^\wedge \u,\v\rangle_{L^2(M)}$ is well-defined on~$\J_0^*$,
as it does not depend on the chosen representatives. A similar argument applies to the advanced Green's operator.
This result will be the starting point of next section.

\subsection{The Symplectic Form and the Time Slice Property}\label{sectionSFandTSA}
Having the canonical commutation relations in mind, we now introduce the {\bf{pre-symplectic form}}~$G$ by
\begin{equation}\label{preform}
G \::\: \J_0^*\times \J_0^*\ni ([\u],[\v])\mapsto \langle \u, G \v\rangle_{L^2(M)}\in \R\:.
\end{equation}
Using~\eqref{Gdef} together with the adjoint properties of the Green's operators in Proposition~\ref{propadjoingS},
one sees that this functional is antisymmetric. Moreover, from Proposition~\ref{Propositionspacelikesepvanish}
one concludes that~$G$ vanishes on jets whose supports are strongly causally disconnected.
Thus
\begin{itemize}[leftmargin=2.5em]
\item[{\rm (i)}] $G([\u],[\v])=0$ if~$\supp\u\perp\supp\v$\\[-1em]
\item[{\rm (ii)}] $G([\u],[\v])=-G([\v],[\u])$.
\end{itemize}
The reason why~$G$ is only a \textit{pre}-symplectic form is that it may be degenerate in the sense that
its kernel~$N$ defined by
\[ 
N:=\{[\u] \in \J_0^*\:|\: G([\u],\,.\,) = 0\} \:. \]
may be non-trivial.  
Therefore, in order to obtain a \textit{non-degenerate} symplectic form, we need to divide out
the subspace~$N$.

\begin{Def}\label{classobs}
The space of {\bf{classical fields}} is the pair~$(\mathcal{E}(M),G)$, where
$$
\mathcal{E}(M):= \bigslant{\J_0^*}{\big\{ [\u]\:|\: \u\in N \big\}} \:,
$$
and~$G:\mathcal{E}(M)\times\mathcal{E}(M)\rightarrow\C$ is the {\bf{symplectic form}} obtained from \eqref{preform}.
\end{Def}

In the setting of hyperbolic PDEs in globally hyperbolic spacetimes, the {\em{time slice property}}
states that, given an open neighborhood of a Cauchy surface, every vector in the symplectic
space~$\mathcal{E}(M)$ has a representative supported in this neighborhood.
We conclude this section by analyzing in which sense and under which assumptions
the time slice property holds for linearized fields in the setting of causal variational principles.

\begin{Def}\label{cutoff}
Let~$L_s^t$ be a time strip in a given foliation~$(\eta_t)_{t \in \R}$. A {\bf{cutoff operator}} in~$L_s^t$ is a linear operator~$\check{\pi}: \Jvary_\sc\to \Jvary_\sc$ such that
$$
\eta_{s} \,(1-\check{\pi}) = 0 = (1-\eta_{t}) \,\check{\pi} \:.
$$
\end{Def}
Let~$\v\in (\Jvary_0)^*$. By definition, $S^\wedge \v$ and~$S^\vee \u$ belong to~$\Jvary_\sc$. Moreover, from the support properties of the Green's operators we know that these two weak solutions are past and future compact, respectively. From Definition~\ref{cutoff}, it follows that
\begin{equation}\label{cutsupport}
\check{\pi}(S^\wedge \v)\in \Jvary_0\ \mbox{ and }\ (1-\check{\pi})(S^\vee \v)\in \Jvary_0.
\end{equation}
The proof of our time slice property (see Theorem~\ref{timesliceaxiom} below)
will require a few additional assumptions on the space~$\Jvary$. 
Inspired by differential operators and smooth sections on globally hyperbolic Lorentzian manifolds we impose the following conditions:
\beq \label{invarianceJvary}
\begin{split}
{\rm{(a)}} &\quad \Jvary\cap (\Jvary_0)^\perp=\{0\} \\
{\rm{(b)}} &\quad \Delta(\Jvary)\subset\Jvary \qquad \text{and hence} \qquad \Delta(\Jvary_\sc)\subset \Jvary_\sc \:,
\qquad\qquad\qquad\qquad
\end{split}
\eeq
where the last inclusion follows from the fact that the Lagrangian has compact range. 
As a direct consequence, one has the following result.
\begin{Lemma}\label{lemmainverseDelta}
For every~$\v\in (\Jvary_0)^*$ there exist~$E\in (\Jvary_0)^\perp\cap L_0^2(M)$ such that
\begin{equation*}
\begin{split}
&\Delta (S^\wedge\v)=-\v+E=\Delta (S^\vee\v)\qquad\text{and}\qquad
\Delta (G \v)=0.
\end{split}
\end{equation*}
In particular, $E=0$ whenever~$\v\in\Jvary$.
\end{Lemma}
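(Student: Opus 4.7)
The plan is to set $E := \Delta(S^\wedge\v) + \v$ and $E' := \Delta(S^\vee\v) + \v$, and to show that both lie in $(\Jvary_0)^\perp \cap L_0^2(M)$ and coincide; this will simultaneously give $\Delta(S^\wedge\v) = -\v + E = \Delta(S^\vee\v)$ and, by taking the difference, $\Delta(G\v) = E - E' = 0$. The jets $\Delta(S^\wedge\v)$ and $\Delta(S^\vee\v)$ are well-defined pointwise via the integral formula~\eqref{eqlinlip2} because $S^\wedge\v,\,S^\vee\v \in \Jvary_\sc$ by definition of $(\Jvary_0)^*$ and because the Lagrangian has compact range; moreover, assumption~(b) of~\eqref{invarianceJvary} puts them back into $\Jvary_\sc$, so in particular they are spatially compact.

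The first key step is to verify $E,E' \in (\Jvary_0)^\perp$. For $\u \in \Jvary_0$, the weak equation of Theorem~\ref{global}~(ii) reads $\langle S^\wedge\v,\Delta\u\rangle_{L^2(M)} = -\langle\v,\u\rangle_{L^2(M)}$, and I want to transfer $\Delta$ to the other factor. Lemma~\ref{green} does this only for pairs inside $\Jvary_0$, so I would approximate $S^\wedge\v$ by $P_n(S^\wedge\v) \in \Jvary_0$ using the asymptotic partitioning of Definition~\ref{asymptoticallypartitioned}, apply Green's formula to the pair $(\u,P_n(S^\wedge\v))$ on a time strip large enough that both $\supp\u$ and $\supp(\Delta\u)$ are properly contained (so that the surface layer boundary terms~$\sigma^s$, $\sigma^t$ vanish), and pass to the limit. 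The compact range of~$\L$ ensures that $\Delta(P_n(S^\wedge\v))$ coincides with $\Delta(S^\wedge\v)$ on $\supp\u$ once $B_n$ contains a large enough compact neighborhood of $\supp\u$, and since $\Delta\u$ is compactly supported the convergence $P_n(S^\wedge\v) \to S^\wedge\v$ in $L^2_\loc(M)$ suffices to take the limit on the other side. This yields $\langle \Delta(S^\wedge\v),\u\rangle_{L^2(M)} = -\langle\v,\u\rangle_{L^2(M)}$, i.e.\ $E \in (\Jvary_0)^\perp$; the same reasoning handles $E'$.

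The second step is to combine assumptions~(a) and~(b) with support considerations. The difference $E - E' = \Delta(G\v)$ lies in $\Jvary_\sc \subset \Jvary$ by~(b) and in $(\Jvary_0)^\perp$ by Step~1, so~(a) forces $E - E' = 0$; this delivers both $\Delta(G\v) = 0$ and the identity $E = E'$. Compact support of~$E$ now follows by playing the two representations against each other: $E = \Delta(S^\wedge\v) + \v$ vanishes in the distant past, since $S^\wedge\v$ does (retarded property) and the compact range of~$\L$ propagates this vanishing to $\Delta(S^\wedge\v)$, while $\v$ is compactly supported; symmetrically, $E = E' = \Delta(S^\vee\v) + \v$ vanishes in the distant future. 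Together with spatial compactness inherited from $\Jvary_\sc$, this gives $E \in L_0^2(M)$. Finally, when $\v \in \Jvary$ assumption~(b) upgrades $E$ itself to $\Jvary$, and $\Jvary \cap (\Jvary_0)^\perp = \{0\}$ by~(a) forces $E = 0$.

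The main obstacle I anticipate is the weak-to-strong passage in Step~1, i.e.\ reconciling the pointwise action of the integral operator~$\Delta$ on $S^\wedge\v$ with the $L^2_\loc$-limit of $\Delta(P_n(S^\wedge\v))$, since Lemma~\ref{green} does not apply directly to a pair with one non-compactly-supported argument. Here the compact range of~$\L$ in Definition~\ref{compactrange} is indispensable: it localizes both $\Delta$ evaluated at a fixed point and the inner product with a compactly supported test jet to a compact neighborhood, turning the asymptotic convergence of $P_n(S^\wedge\v)$ into exact pointwise equality on $\supp\u$ for $n$ large enough and ensuring that the surface-layer remainder terms in Green's formula vanish on a sufficiently enlarged time strip.
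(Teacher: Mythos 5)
Your proposal is correct and follows essentially the same route as the paper: set $E^\wedge := \Delta(S^\wedge\v)+\v$ and $E^\vee := \Delta(S^\vee\v)+\v$, use the weak equation together with Green's formula to place both in $(\Jvary_0)^\perp$, observe that $\Delta(G\v)=E^\wedge-E^\vee\in\Jvary_\sc\cap(\Jvary_0)^\perp=\{0\}$ by assumptions (a) and (b), and infer compact support of $E$ from the retarded/advanced vanishing combined with compact range. The only refinement you add is spelling out the $P_n$-approximation needed to justify the Green's-formula step for the non-compactly-supported jet $S^\wedge\v$, which the paper leaves implicit (it effectively reuses the same mechanism as in the proof of Proposition~\ref{propadjoingS}); that extra care is welcome and does not change the argument.
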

\begin{proof}
Note that~$S^\wedge\u\in\Jvary$ for every~$\u\in (\Jvary_0)^*$, and hence~$\Delta S^\wedge\u$ is well-defined. In particular, this jet belongs to~$\Jvary\subset L^2_\loc(M)$.

From the definition of global weak solution and the Green's formula, we get 
$$
\langle \Delta S^\wedge\v,\u\rangle_{L^2(M)}=\langle  S^\wedge\v,\Delta\u\rangle_{L^2(M)}=\langle  \v,\u\rangle_{L^2(M)}\quad\mbox{for every~$\u\in\Jvary_0$}.
$$
Thus, $\Delta S^\wedge\v+\v=:E^\wedge\in (\Jvary_0)^\perp$.
Next, from the assumption of compact range and the support property of the retarded Green's operator
we infer that~$E^\wedge$ must vanish sufficiently far in the past of~$\v$. A similar argument applies to the advanced Green's operator, yielding an analogous jet~$E^\vee\in (\Jvary_0)^\perp$ which vanishes sufficiently far in the future of~$\v$. Now, note that~$G \v\in \Jvary_\sc$ and hence, using~(b) in~\eqref{invarianceJvary} and the definition of weak global solution,
  $$
  \Jvary_\sc\ni \Delta(G \v)=E^\vee-E^\wedge\quad\mbox{and}\quad \langle \Delta G \v,\u\rangle_{L^2(M)}=0\quad\mbox{for all $\u\in\Jvary_0$},
  $$
  where the last identity follows again from the Green's formula. At this point, using~(a)
  in~\eqref{invarianceJvary}, we obtain~$E^\vee-E^\wedge\in\Jvary_\sc\cap(\Jvary_0)^\perp=\{0\}$ and hence~$E^\vee=E^\wedge=:E.$ Because~$E^\wedge$ and~$E^\vee$ vanishes in the past and the future of~$\v$, respectively, we conclude that~$E$ has compact support. The last statement is clear from \eqref{invarianceJvary}.
\end{proof}

We are now ready to prove the time slice property. As an additional technical condition, we need
to assume that the causal Green's operators map~$\Jvary_0$ to~$\Jvary_\sc$,
\[ S^\vee,\,S^\wedge \::\: \Jvary_0 \rightarrow \Jvary_\sc \:. \]
In view of the definition of~$(\Jvary_0)^*$ in~\eqref{JGr}, this condition can be
written in the shorter form
\beq \label{condc}
{\rm{(c)}} \quad \Jvary_0\subset (\Jvary_0)^* \:. \qquad\qquad\qquad\qquad\qquad\qquad\qquad\qquad
\qquad\qquad\qquad\qquad
\eeq

We say that $L_{s_0}^{t_0}$ is \textit{$\Delta$-contained} in $L^{t_1}_{s_1}$ if the width of $L_{s_1}^{t_1}\setminus L_{s_0}^{t_0}$ is larger than the range of the operator $\Delta$. More precisely, we demand that, for every $\u\in \Jvary$,
\beq \label{deltacontained}
\eta_{s_0}\u\equiv 0 \;\Longrightarrow\; \eta_{s_1}\Delta\u\equiv 0\qquad\mbox{and}\qquad(1-\eta_{t_0})\u\equiv 0 \;\Longrightarrow\; (1-\eta_{t_1})\Delta\u\equiv 0 \:.
\eeq
\begin{Thm} {\bf{(Time Slice Property)}} \label{timesliceaxiom}
Assume that the conditions~{\rm{(a)--(c)}} in~\eqref{invarianceJvary} and~\eqref{condc} hold.
Let~$L_0,L_1$ be time strips with respect to the same foliation such that
\begin{itemize}[leftmargin=2em]
\item[{\rm (i)}] $L_0$ admits a cutoff operator~$\check{\pi}$,\\[-1em]
\item[{\rm (ii)}] $L_0$ is $\Delta$-contained in $L_1$ (see~\eqref{deltacontained}). \\[-1.2em]
\end{itemize}
Then for every~$[\v]\in \mathcal{E}(M)$ there exists~$\v'\sim\v$ with~$\supp \v'\stackrel{\circ}{\subset} L_1$.
\end{Thm}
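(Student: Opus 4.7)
The plan is to adapt the classical time-slice construction to the present weak setting. Given a representative $\v\in(\Jvary_0)^*$ of the class $[\v]$, I will propagate $\v$ to the global solution $u:=G\v$, cut it off inside $L_0$ using $\check{\pi}$, and apply $\Delta$ once more to produce a candidate source supported in $L_1$. Concretely: by the hypotheses, $u\in\Jvary_\sc$ and $\Delta u=0$ by Lemma~\ref{lemmainverseDelta}. Write
\[
u = \check{\pi} u + (1-\check{\pi})u,
\]
where $\check{\pi} u\in\Jvary_\sc$ is \emph{future-compact} (since $(1-\eta_{t_0})\check{\pi} u\equiv 0$) and $(1-\check{\pi})u\in\Jvary_\sc$ is \emph{past-compact} (since $\eta_{s_0}(1-\check{\pi})u\equiv 0$). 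The candidate is
\[
\v' := \Delta(\check{\pi} u) = -\Delta\big((1-\check{\pi})u\big),
\]
with the second identity using $\Delta u=0$ and $\v'\in\Jvary_\sc$ by invariance~(b) in~\eqref{invarianceJvary}.

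Next I verify $\supp\v'\stackrel{\circ}{\subset}L_1$. By the cutoff conditions and the $\Delta$-containedness assumption~\eqref{deltacontained}: from $(1-\eta_{t_0})\check{\pi} u\equiv 0$ we obtain $(1-\eta_{t_1})\v'\equiv 0$, so $\v'$ vanishes in the future of $t_1$. Symmetrically, from $\eta_{s_0}(1-\check{\pi})u\equiv 0$ we get $\eta_{s_1}\Delta((1-\check{\pi})u)\equiv 0$, which via the identity above forces $\eta_{s_1}\v'\equiv 0$. Hence $\v'$ is properly contained in $L_1$. Spatial compactness of $u\in\Jvary_\sc$ on the strip $L_1$ then makes $\v'$ compactly supported, and by condition~(c) of~\eqref{condc} we have $\v'\in\Jvary_0\subset(\Jvary_0)^*$.

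It remains to show $[\v']=[\v]$ in $\mathcal{E}(M)$. Combining Proposition~\ref{propadjoingS} with $G=S^\wedge-S^\vee$ gives
\[
G([\u],[\w]) = \langle\u,G\w\rangle_{L^2(M)} = -\langle G\u,\w\rangle_{L^2(M)} \qquad\text{for all } \u,\w\in(\Jvary_0)^*,
\]
so the assertion $[\v-\v']\in N$ reduces to $G\v'=G\v=u$. Since $\v'\in\Jvary\cap(\Jvary_0)^*$, Lemma~\ref{lemmainverseDelta} yields $\Delta S^\wedge\v'=-\v'=\Delta((1-\check{\pi})u)$, so $S^\wedge\v'-(1-\check{\pi})u\in\Jvary_\sc$ is a past-compact weak solution of $\Delta\alpha=0$. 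Symmetrically, $S^\vee\v'+\check{\pi} u$ is a future-compact weak solution of $\Delta\alpha=0$. A uniqueness argument for past-/future-compact weak solutions of the homogeneous equation---drawing on the energy estimate of Proposition~\ref{prpees} applied to a time strip where $\alpha$ has vanishing initial surface-layer data, the shielding hypothesis of Definition~\ref{Defshielding}, and the fact that $\Jvary_\sc\cap(\Jvary_0)^\perp\subset\Jvary\cap(\Jvary_0)^\perp=\{0\}$ by invariance~(a)---forces both differences to vanish. Consequently $S^\wedge\v'=(1-\check{\pi})u$ and $S^\vee\v'=-\check{\pi} u$, so $G\v'=(1-\check{\pi})u+\check{\pi} u=u$, as required.

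The principal obstacle is this final uniqueness step. The weak formulation was deliberately designed to tolerate non-trivial null spaces, so one must combine all three invariance conditions in~\eqref{invarianceJvary} with the shielding hypothesis to promote ``weak vanishing against $\Jvary_0$'' into genuine vanishing inside $\Jvary_\sc$; everything else is bookkeeping with the cutoff and $\Delta$-containedness.
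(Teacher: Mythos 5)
Your construction of $\v' := \Delta(\check{\pi}(G\v))$ and the support argument are essentially what the paper does (and, incidentally, your sign choice is the consistent one; the paper's definition of the cutoff jet as $\Delta((1-\check{\pi})(G\v))$ does not quite match its later displayed algebra, though this is just a typo). However, the final step---showing $[\v']=[\v]$---contains a genuine gap, which you yourself flag as ``the principal obstacle'' without closing it.

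You reduce the equivalence to the identity $G\v' = G\v$, and then try to deduce $S^\wedge\v'=(1-\check{\pi})u$ and $S^\vee\v'=-\check{\pi}u$ by ``uniqueness of past-/future-compact weak solutions.'' But this is precisely the kind of uniqueness that the weak framework deliberately refuses to provide: the discussion after Theorem~\ref{thmexistence} emphasizes that weak solutions are determined only up to the orthogonal complement of $\Delta(\overline{\J}^{t})$, and nothing in conditions~(a)--(c), the shielding hypothesis, or the energy estimate upgrades ``$\langle\Delta\u,\alpha\rangle=0$ for $\u\in\Jvary_0$ and $\alpha$ past-compact'' to ``$\alpha=0$.'' Indeed, the putative identity $G\v'=G\v$ is equivalent (after passing through your own decomposition $\v'-\v=\Delta\w-E$) to $G\Delta\w=0$ for $\w\in\Jvary_0$, and this is not available without strong-solution uniqueness. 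So the step ``forces both differences to vanish'' is unjustified, and the sketched combination of hypotheses does not yield it.

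The fix is to aim lower. You already derived, implicitly, that
\[
\v'-\v \;=\; \Delta\w - E \qquad\text{with}\qquad
\w := \check{\pi}(S^\wedge\v)+(1-\check{\pi})(S^\vee\v)\in\Jvary_0,\quad E\in(\Jvary_0)^\perp\cap L^2_0(M).
\]
Since $E$ represents zero in $\J_0^*$, it suffices to check $[\Delta\w]\in N$ directly: for any $\u_0\in(\Jvary_0)^*$,
\[
G\big([\Delta\w],[\u_0]\big)=\langle \Delta\w, G\u_0\rangle_{L^2(M)}=0,
\]
because $G\u_0$ is a global weak solution of the homogeneous equation and $\w\in\Jvary_0$ is an admissible test jet. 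No uniqueness of weak solutions, no identification of $S^\wedge\v'$ with $(1-\check{\pi})u$, is needed. Everything else in your write-up stands, but this last step should replace the uniqueness argument.
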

\begin{proof}
The following is an adaptation of the proof of \cite[Chapter 3, Theorem 3.3.1]{brunettibook}. 
Let~$\v\in (\Jvary_0)^*$. 
From Lemma~\ref{lemmainverseDelta} we know that~$\Delta(G \v)=0$. Hence~$\Delta(\check{\pi}(G \v))+\Delta((1-\check{\pi})(G\v))=0$.
We introduce the jet~$\u$ by (cf.~\eqref{invarianceJvary})
$$
\u:=\Delta((1-\check{\pi})(G\v))=-\Delta(\check{\pi}(G\v))\in\Jvary_\sc \:.
$$
Using that~$\Delta$ has compact range, we infer that the left-hand side vanishes sufficiently far in the past of~$L_0$, while the right-hand side vanishes sufficiently far in the future of~$L_0$.
Moreover, it follows from~(b) in~\eqref{invarianceJvary} that~$\u\in \Jvary_\sc$.
We thus conclude that~$\u\in\Jvary_0$.
More precisely, from the definition of cutoff operator and assumption~(ii) we conclude that~$\u$ is properly supported within~$L_1$, i.e.
$
\supp\u\stackrel{\circ}{\subset} L_1.
$
Moreover, by assumption, $\u\in (\Jvary_0)^*$.
It remains to show that~$\u\sim \v$. 
From Lemma~\ref{lemmainverseDelta}  and the definition of~$\u$ we obtain
\begin{equation*}
\begin{split}
\quad\u-\v&=-\Delta(\check{\pi}(G\v))+\Delta(S^\vee\v) + E \\
&=-\Delta(\check{\pi}(S^\vee\v))+\Delta(\check{\pi}(S^\wedge\v))+\Delta(\check{\pi}(S^\vee\v))+\Delta((1-\check{\pi})(S^\vee\v)) + E \\
&=\Delta\big(\check{\pi}(S^\wedge\v)+(1-\check{\pi})(S^\vee\v)\big) + E =\Delta\w + E
\end{split}
\end{equation*}
with~$E\in (\Jvary_0)^\perp\cap L_0^2(M)$,
where~$\w:=\check{\pi}(S^\wedge\v)+(1-\check{\pi})(S^\vee\v)\in \Jvary_0$, as follows from \eqref{cutsupport}. Let~$\u_0$ be the representative of an equivalence class of~$\J_0^*$. Then, by definition of global weak solutions, we have 
\[ G([\Delta\w],[\u_0])=\langle \Delta\w,G\u_0\rangle_{L^2(M)}=0 \:. \]
In other words, $\Delta\w\in N$. 
The claim follows from the definition of~$\mathcal{E}(M)$.
\end{proof}

\section{The Algebra of Fields} \label{secalgebra}
In this section we give an application of the results obtained in the previous sections. More precisely, motivated by the so-called algebraic approach to quantum field theory~\cite{brunettibook}, we show how to associate with the linearized fields individuated by a causal variational principle a distinguished algebra of fields.
This should play the r\^{o}le of the building block for the quantization of the underlying system. Therefore, it must encode information both on the dynamics and on the canonical commutation relations. 

\subsection{Construction of the Algebra}
The starting point of our construction is~$\mathcal{E}(M)$, the collection of classical fields as per Definition~\ref{classobs}. We complexify
\[ \mathcal{E}(M)^\C := \mathcal{E}(M) \otimes \C \]
and form the corresponding universal tensor algebra 
\[ \mathcal{T}(M):=\bigoplus\limits_{n=0}^\infty \big( \mathcal{E}(M)^\C \big)^{\otimes n} \:, \]
where we set~$(\mathcal{E}(M)^\C)^{\otimes 0}:=\mathbb{C}$. This is a $*$-algebra if endowed with a $*$-structure out of the natural extension of complex conjugation to the tensor product.

We observe that we have already encoded the information on the underlying dynamics, because
$\mathcal{E}(M)$ is built in terms of a quotient between~$\J_0^*$ and the subspace~$N$ which
includes the kernel of the causal fundamental solution~$G:=S^\wedge-S^\vee$. In order to codify the
counterpart in this setting of the canonical commutation relation we proceed as follows:

\begin{Def}\label{CCRalgebra}
The {\bf algebra of fields} is the $*$-algebra built as the quotient
\[ \mathcal{A}(M):= \bigslant{\mathcal{T}(M)}{\mathcal{I}(M)} \:, \]
where~$\mathcal{I}(M)$ is the $*$-ideal generated by elements of the form
\[ [\u]\otimes[\v]-[\v]\otimes[\u]-i G \big( [\u],[\v] \big) \:\mathbb{I} \:, \]
where~$\mathbb{I}$ is the identity of~$\mathcal{T}(M)$ and~$G$ as defined in~\eqref{preform}. 
\end{Def} \noindent
The algebra of fields is also referred to as the {\em{algebra of observables}}.

\subsection{Properties of the Algebra} \label{secalgebraprop}
Definition~\ref{CCRalgebra} is clearly reminiscent of the usual construction of the algebra of fields of a bosonic real scalar field theory, see~\cite{brunettibook}. 
We now show that many, but a-priori not all, of the standard properties hold in this setting. 
To this end, we show that the properties of the classical dynamics of the linearized fields
translate to corresponding properties of the algebra.

\begin{Prp} {\bf{(causality)}} Let~$\u, \v \in (\Jvary_0)^*$ whose supports are strongly causally disconnected
(see Definition~\ref{strongCD}),
\[ \supp \u \perp \supp \v \:. \]
Then the corresponding field operators commute,
\[ \big[ [\u], [\v] \big] = 0 \:. \]
\end{Prp}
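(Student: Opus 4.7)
The plan is to simply unwind the definitions and invoke the causality result for the causal fundamental solution already established in Proposition~\ref{Propositionspacelikesepvanish}. By Definition~\ref{CCRalgebra}, passing to the quotient $\mathcal{A}(M) = \mathcal{T}(M)/\mathcal{I}(M)$ turns the generating relations of $\mathcal{I}(M)$ into the canonical commutation relations
\[ \big[ [\u], [\v] \big] = [\u]\otimes[\v] - [\v]\otimes[\u] = i\, G\big([\u],[\v]\big)\,\mathbb{I} \]
in $\mathcal{A}(M)$. Thus the task reduces to showing that the pre-symplectic form $G([\u],[\v])$ vanishes when the supports of representatives $\u,\v \in (\Jvary_0)^*$ are strongly causally disconnected.

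Next I would invoke the defining formula~\eqref{preform} of the (pre-)symplectic form, which expresses $G([\u],[\v])$ as the $L^2$-pairing $\langle \u, G\v\rangle_{L^2(M)}$, where on the right $G = S^\wedge - S^\vee$ is the causal fundamental solution of Definition~\ref{defretarded}. The support condition $\supp \u \perp \supp \v$ then puts us exactly in the hypothesis of Proposition~\ref{Propositionspacelikesepvanish}, which yields
\[ \langle \u, G\v\rangle_{L^2(M)} = 0 \:. \]
Combining these two steps, the class $G([\u],[\v])$ vanishes, hence the commutator $\big[[\u],[\v]\big]$ equals the zero class in $\mathcal{A}(M)$.

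I do not foresee a genuine obstacle here: all of the content is already contained in the earlier sections. The only point worth checking carefully is that the pre-symplectic form descends unambiguously to $\mathcal{E}(M)^\C$ (i.e.\ that picking different representatives of $[\u]$ and $[\v]$ does not alter the value of $G$), but this has already been verified in Section~\ref{sectionadjoint} via Lemma~\ref{lemmakernel} and in the discussion preceding Definition~\ref{classobs}, so the argument simply quotes these facts and concludes.
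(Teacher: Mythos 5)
Your proposal is correct and follows essentially the same route as the paper: reduce to generators, use the quotient relation to identify the commutator with $iG([\u],[\v])\mathbb{I}$, and then invoke the vanishing of the pre-symplectic form under strong causal disconnection (which in the paper is recorded as property (i) following~\eqref{preform}, itself a consequence of Proposition~\ref{Propositionspacelikesepvanish}). You merely trace one further step back to the $L^2$-pairing and the Green's-operator causality, which is exactly what the paper leaves implicit.
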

\Proof
 Since~$\mathcal{A}(M)$ and in turn~$\mathcal{T}(M)$ are generated by~$\mathcal{E}(M)^\C$, 
 it suffices to work at the level of generators. In particular, for any~$[\u],[\v]\in\mathcal{E}(M)^\C$,
the commutator is given by~$\left[[\u],[\v]\right]=iG([\u],[\v])\mathbb{I}$.
 Yet, in view of~\eqref{preform}, $G([\u],[\v])=0$ if~$[\u]\perp[\v]$.
\QED

\begin{Prp} {\bf{(time slice property)}} Referring to the assumptions of Theorem~\ref{timesliceaxiom},
we assume that~$\Jvary_0\subset (\Jvary_0)^*$ and let~$L_0,L_1$ be time strips
with respect to the same foliation having the properties~{\rm{(i)}} and~{\rm{(ii)}} on page~\pageref{timesliceaxiom}.
Then the algebra of fields is generated by jets which are properly contained in~$L_1$
(see Definition~\ref{defproperlycontained}),
\[ \mathcal{A}(M) = \Big\la \Big\{ [\v] \text{ with } \v \in (\Jvary_0)^* \text{ and }\supp \v \stackrel{\circ}{\subset} L_1 \Big\} \Big\ra\:. \]
In other words, there exists a $*$-isomorphism between~$\mathcal{A}(L_1)$ and~$\mathcal{A}(M)$.
\end{Prp}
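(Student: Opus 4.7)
The plan is to reduce the statement directly to Theorem~\ref{timesliceaxiom}. By Definition~\ref{CCRalgebra}, the algebra $\mathcal{A}(M)$ is the quotient of the universal tensor algebra $\mathcal{T}(M)$ over $\mathcal{E}(M)^\C$ by the CCR ideal $\mathcal{I}(M)$. Consequently, as a unital $*$-algebra $\mathcal{A}(M)$ is generated by the image of $\mathcal{E}(M)^\C$, i.e.\ by the classes $[\v]$ with $\v \in (\Jvary_0)^*$ (the $*$-structure being induced by complex conjugation on the $\C$-linear extension, which preserves support). Hence, to prove the claim it suffices to show that every generator $[\v]$ of $\mathcal{A}(M)$ admits a representative $\v'$ with $\supp \v' \stackrel{\circ}{\subset} L_1$.

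The first step is precisely Theorem~\ref{timesliceaxiom}: since the hypotheses (a)--(c) of~\eqref{invarianceJvary} and~\eqref{condc} together with properties~(i), (ii) for $L_0, L_1$ are in force, each $[\v] \in \mathcal{E}(M)$ has a representative $\v' \sim \v$ with $\supp \v' \stackrel{\circ}{\subset} L_1$. Extending by complex linearity to $\mathcal{E}(M)^\C$, every generator of $\mathcal{A}(M)$ can be rewritten as $[\v']$ with $\supp \v' \stackrel{\circ}{\subset} L_1$. Since the set
\[ \big\{ [\v] \::\: \v \in (\Jvary_0)^*,\ \supp \v \stackrel{\circ}{\subset} L_1 \big\} \]
is closed under complex conjugation (and thus under the $*$-structure), the $*$-subalgebra it generates contains every generator of $\mathcal{A}(M)$, which proves the displayed identity.

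For the final sentence, I would interpret $\mathcal{A}(L_1)$ as the $*$-algebra obtained by running Definition~\ref{CCRalgebra} using only the subspace of $\mathcal{E}(M)^\C$ represented by jets properly contained in~$L_1$, equipped with the symplectic form inherited from~\eqref{preform}. The inclusion of generators induces a canonical unital $*$-homomorphism $\Phi \colon \mathcal{A}(L_1) \to \mathcal{A}(M)$, and the displayed identity above shows that $\Phi$ is surjective. Injectivity amounts to checking that the CCR relations used to define $\mathcal{A}(L_1)$ are not strictly weaker than the restriction of those defining $\mathcal{A}(M)$; but the symplectic form $G([\u],[\v]) = \la \u, G\v\ra_{L^2(M)}$ is literally the same on both sides, and equivalence in $\mathcal{E}(M)$ for classes with a representative in $L_1$ coincides with equivalence in $\mathcal{E}(L_1)$—again by Theorem~\ref{timesliceaxiom}, which allows any partner used to test $[\u]-[\v] \in N$ to be chosen with support in $L_1$.

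The main obstacle I expect is precisely this injectivity/compatibility of the null spaces $N$ for $L_1$ and for $M$: one must argue that $[\u] - [\v] \in N \subset \J_0^*$ (testing against all of $\J_0^*$) is equivalent to $[\u] - [\v]$ pairing trivially only with representatives supported in~$L_1$. Theorem~\ref{timesliceaxiom} gives exactly this reduction, so the apparent asymmetry disappears and the argument closes. Everything else is bookkeeping at the level of generators.
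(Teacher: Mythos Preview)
Your proposal is correct and follows essentially the same approach as the paper, which simply states that the result ``is a direct consequence of Theorem~\ref{timesliceaxiom} and of the properties of the symplectic form~$G$.'' You have merely unpacked this one-line proof, and your additional care with the injectivity of~$\Phi$ (matching the null spaces $N_{L_1}$ and $N$ via Theorem~\ref{timesliceaxiom} and the antisymmetry of~$G$) is a legitimate elaboration of details the paper leaves implicit.
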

\Proof
This is a direct consequence of Theorem~\ref{timesliceaxiom} and of the properties of the symplectic form~$G$ defined by~\eqref{preform}.
\QED

We finally remark that symmetries of the classical system as expressed in terms of groups 
of symplectomorphisms acting on the linearized fields extend to $*$-auto\-mor\-phisms of the algebra
of fields. These constructions are straightforward, and we omit them here
and refer instead to~\cite[Chapters~3 and~5]{brunettibook}.

\subsection{Construction of Distinguished Quasi-Free States} \label{secrepresent}
Following the standard rationale at the heart of the algebraic approach to quantum field theory, in addition to individuating an algebra of fields, the quantization procedure is complete only after one selects an algebraic state, namely in the case in hand a positive and normalized linear functional 
\begin{equation}\label{algstate}
\omega \::\:
\mathcal{A}(M)\to\mathbb{C}\;\;\textrm{such that}\;\;\omega(\mathbb{I})=1\;\;\;\textrm{and}\;\;\;\omega(a^*a)\geq 0\;\;\;\forall a\in\mathcal{A}(M)\:.
\end{equation}

The renown GNS theorem~\cite{brunettibook} entails that, to any given pair~$(\mathcal{A}(M),\omega)$, one can associate a unique (up to $*$-isomorphisms) triple~$(\mathcal{D}_\omega,\pi_\omega,\Omega_\omega)$. Here~$\mathcal{D}_\omega$ is a dense subset of a Hilbert space, $\pi_\omega:\mathcal{A}(M)\to\mathcal{L}(\mathcal{D}_\omega)$ a $*$-representation of the algebra in terms of linear operators, while~$\Omega_\omega$ is a unit norm, cyclic vector such that, for any~$a\in\mathcal{A}(M)$, it holds~$\omega(a)=(\Omega_\omega,\pi_\omega(a)\Omega_\omega)$.

Among the plethora of algebraic states a distinguished r\^{o}le is played by the quasi-free/Gaussian ones. In order to characterize them, let us observe that a generic element~$a\in\mathcal{A}(M)^\C$ can be decomposed as 
\[ a=a_0\, \mathbb{I}+a_1\,[\u]+a_2\, [\u_1]\otimes[\u_2]+\cdots \:, \]
where~$a_0,a_1,\dots\in\mathbb{C}$ while~$[\u],[\u_1],\dots\in\mathcal{E}(M)^\C$. Hence, given any algebraic state~$\omega$, linearity entails that one can associate with it the so-called {\em n-point functions}:
\[ \omega_n([\u_1],\dots,[\u_n])=\omega([\u_1]\otimes\dots\otimes[\u_n]),\quad[\u_1]\dots[\u_n]\in\mathcal{E}(M)^\C \:. \]
The next definition follows~\cite[Def.~5.2.22 in Chapter~5]{brunettibook} and~\cite{kay-wald}.

\begin{Def} \label{defquasifree}
Let~$\omega:\mathcal{A}(M)\to\mathbb{C}$ be an algebraic state as per~\eqref{algstate}. We call it {\bf quasi-free/Gaussian} if the associated odd $n$-point functions are all vanishing, while the even ones are such that 
$$\omega_{2n}([\u_1]\dots[\u_{2n}])=\sum\limits_{\sigma\in\mathcal{P}}\omega_2([\u_{\sigma(1)}],[\u_{\sigma(2)}])\dots\omega_2([\u_{\sigma(2n-1)}],[\u_{\sigma(2n)}]),$$
where~$\mathcal{P}$ denotes all possible permutations of the set~$\{1,\dots,2n\}$ into a collection of elements~$\{\sigma(1),\dots,\sigma(2n)\}$ such that~$\sigma(2k-1)<\sigma(2k)$ for all~$k=1,\dots,n$.
\end{Def}

The net advantage of working with Gaussian states is that the associated GNS representation yields a Hilbert space of Fock type, see~\cite{kay-wald}, allowing thus a close connection with the applications to high energy physics. Yet this does not suffice as one can readily infer that several pathological situations can still occur. While in standard quantum field theory this is resolved by introducing the notion of Hadamard states, a counterpart of such concept in this setting is still beyond our grasp. Nonetheless we can outline the construction of a distinguished algebraic quasi-free state making use of a complex structure on the linearized fields
as first obtained in~\cite[Section~6.3]{fockbosonic}. This complex structure gives rise to
a splitting of the complexified solutions space into two subspaces, referred to as the
holomorphic and anti-holomorphic components. Mimicking the procedure for the frequency splitting
in linear quantum field theory will give the desired quasi-free state.

Before beginning, we recall that in Section~\ref{secosi} we introduce two bilinear forms on the linearized fields:
the surface layer inner product~\eqref{jipdef} and the symplectic form~\eqref{sympdef}.
The symplectic form is conserved in the sense that, if~$\u$ and~$\v$ are linearized solutions,
then~$\sigma^t(\u,\v)$ is time independent. This conservation law is expressed by the
Green's formula in Lemma~\ref{green}. The surface layer inner product, on the other hand,
in general does {\em{not}} satisfy a general conservation law (for details see~\cite{osi}).
Moreover, in general it depends on the choice of the foliation. Nevertheless, for a given foliation
and at any given time, the surface layer inner product can be used to endow the
linearized solutions with a complex structure. This also gives rise to a corresponding quasi-free state,
as we now work out.

We next introduce the complex structure following the procedure in~\cite[Section~6.3]{fockbosonic}.
We first extend the surface layer inner product~\eqref{jipdef} to jets with spatially compact support,
\[ (.,.)^t \::\: \Jvary_\sc \times \Jvary_\sc \rightarrow \R \:. \]
Similar as in the hyperbolicity conditions in Definition~\ref{globfoliate} we assume that,
the restriction of this bilinear form to solutions of the linearized field equations
is positive semi-definite.
Dividing out the null space and forming the completion, we obtain
a real Hilbert space denoted by~$\h^\R$.
Next, we assume that~$\sigma$ is a bounded bilinear functional on this Hilbert space.
Then we can represent it relative to the scalar product by
\[ 
\sigma^t(u, v) = (u,\, \mathscr{T}\, v)^t \:, \]
where~$\mathscr{T}$ is a uniquely determined bounded operator on~$\h^\R$.
Since the symplectic form is anti-symmetric and the scalar product is symmetric, it is obvious that
\[ \mathscr{T}^* = -\mathscr{T} \]
(where the adjoint is taken with respect to the scalar product~$(.,.)$).
Finally, we assume that~$\mathscr{T}$ is invertible.
Then setting
\beq \label{Idef}
J := -(-\mathscr{T}^2)^{-\frac{1}{2}}\: \mathscr{T}
\eeq
defines a complex structure on the real Hilbert space~$\h^\R$.

We next complexify the vector space~$\Jlin_\sc$ and denote
its complexification by~$\J^\C$. We also extend~$J$ to a complex-linear operator
on~$\J^\C$. The fact that~$J^*=-J$ and~$J^2=-\1$ implies that~$J$ has
the eigenvalues~$\pm i$. Consequently, $\J^\C$ splits into a direct sum of the
corresponding eigenspaces, which we refer to as the {\em{holomorphic}} and 
{\em{anti-holomorphic subspaces}}, i.e.\
\[ 
\J^\C = \J^\hol \oplus \J^\ah \qquad \text{with} \qquad
\J^\hol := \chi^\hol \:\J^\C \:,\;\;\;
\J^\ah := \chi^\ah \:\J^\C \:, \]
where we set
\[ 
\chi^\hol = \frac{1}{2}\: (\1 - i J) \qquad \text{and} \qquad
\chi^\ah = \frac{1}{2}\: (\1 + i J) \:. \]
We also complexify the inner product~$(.,.)$ and the symplectic form to {\em{sesquilinear}}
forms on~$\J^\C$ (i.e.\ anti-linear in the first and linear in the second argument).
Moreover, we introduce a positive semi-definite inner product~$(.|.)$ by
\[ 
(.|.) = (\,.\,,(-\mathscr{T}^2)^\frac{1}{2}\, .\,) = \sigma( \,.\,, J \,.\, ) \::\: \J^\C \times \J^\C \rightarrow \C \:. \]
This positive semi-definite inner product gives rise to a Hilbert space structure.
In order to work out the similarities and differences to quantum theory, it is best to
form the Hilbert space as the completion of the holomorphic subspace, i.e.\
\[ 
\h := \overline{\J^\hol}^{(.|.)} \:. \]
We denote the induced scalar product on~$\h$ by~$\la.|. \ra$. Then~$(\h, \la.|.\ra)$
is a complex Hilbert space.

In order to define a quasi-free state, according to Definition~\ref{defquasifree} it suffices
to specify the two-point function~$\omega_2$. We define it by
\beq \label{w2def}
\omega_2 \big([\u], [\v] \big) := i \sigma^t \big( G \u, \chi^\hol G \v \big) \:.
\eeq

In order to show compatibility with the canonical commutation relations, we need to
get a connection between~$G$ and the symplectic form~$\sigma^t$ defined by~\eqref{sympdef}.
We first note that for the latter bilinear form to be well-defined, it suffices
to assume that the jets have spatially compact support. We thus obtain a bilinear form
\[ \sigma^t(.,.) \::\: \Jvary_\sc \times \Jvary_\sc \rightarrow \R \:. \]
\begin{Prp} \label{prpsigma}
For any jets~$\u, \v \in (\Jvary_0)^*$,
\[ G \big( [\u], [\v] \big) = \sigma^t \big( G \u, G \v \big) \:, \]
where the last surface layer integral can be computed in any foliation~$(\eta_t)_{t \in \R}$ at any time~$t$.
\end{Prp}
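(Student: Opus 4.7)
\emph{Proof strategy.} The plan is to extract the identity from Green's formula of Lemma~\ref{green} applied to the jets $S^\wedge\u$ and $G\v$, combined with the facts $\Delta(G\v)=0$ and $\Delta(S^\wedge\u)=-\u+E$ with $E\in(\Jvary_0)^\perp\cap L^2_0(M)$ from Lemma~\ref{lemmainverseDelta}, and the support properties of the causal Green's operators from Theorem~\ref{global}(iii). The idea is that Green's formula on a suitable time strip converts the bulk pairing $\langle\u,G\v\rangle_{L^2(M)}$ into a boundary surface layer integral; choosing the strip wisely kills one boundary contribution and lets the other be recognized as $\sigma^t(G\u,G\v)$.

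In detail, I would fix an admissible foliation $(\eta_t)_{t\in\R}$, the time $t\in\R$ (taken sufficiently large at first, so that $S^\vee\u$ vanishes on a neighborhood of $\{\theta_t>0\}$; general $t$ will be recovered at the end), and then pick $s$ so negative that $S^\wedge\u$ vanishes on a neighborhood of $\{\theta_s>0\}$. A cutoff approximation based on the operators $P_n$ from Definition~\ref{asymptoticallypartitioned} extends Green's formula to the spatially compact jets $S^\wedge\u,G\v\in\Jvary_\sc$, yielding
\begin{equation*}
\langle S^\wedge\u,\Delta(G\v)\rangle_{L^2(L_s^t)}=\langle\Delta(S^\wedge\u),G\v\rangle_{L^2(L_s^t)}-\sigma^t(S^\wedge\u,G\v)+\sigma^s(S^\wedge\u,G\v).
\end{equation*}
The left-hand side vanishes since $\Delta(G\v)=0$, and $\sigma^s(S^\wedge\u,G\v)=0$ by the choice of $s$. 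Substituting $\Delta(S^\wedge\u)=-\u+E$ and using $\supp\u,\supp E\subset L_s^t$ reduces the interior term to $-\langle\u,G\v\rangle_{L^2(M)}+\langle E,G\v\rangle_{L^2(M)}$; the latter pairing vanishes because $E\in(\Jvary_0)^\perp$ is compactly supported and $G\v\in\Jvary_\sc$ can be approximated on $\supp E$ by elements of $\Jvary_0$ via $P_n$, using Definition~\ref{asymptoticallypartitioned}(ii). This yields
\begin{equation*}
\sigma^t(S^\wedge\u,G\v)=\pm\langle\u,G\v\rangle_{L^2(M)}=\pm G([\u],[\v]),
\end{equation*}
the sign being determined by the conventions of Lemma~\ref{green}. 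Finally, the choice of $t$ guarantees $\sigma^t(S^\vee\u,G\v)=0$ (the pointwise vanishing of $S^\vee\u$ on $\{\theta_t>0\}$ and the compact range of $\L$ making the integrand identically zero), so the decomposition $G\u=S^\wedge\u-S^\vee\u$ gives $\sigma^t(G\u,G\v)=\sigma^t(S^\wedge\u,G\v)$, establishing the identity for this $t$ and this foliation. The restriction on $t$ and the dependence on the foliation are then removed either by observing that the right-hand side $G([\u],[\v])$ is intrinsically foliation- and time-independent, or equivalently by a second application of Green's formula to the two weak solutions $G\u$ and $G\v$, whose interior term vanishes thanks to $\Delta(G\u)=\Delta(G\v)=0$; combined with Proposition~\ref{equalityS}, this yields conservation across times and foliations.

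\emph{Main obstacle.} The delicate technical point is the extension of Green's formula of Lemma~\ref{green} from $\Jvary_0$ to the spatially compact jets $S^\wedge\u$ and $G\v$. Approximating by $P_n S^\wedge\u,\,P_n G\v\in\Jvary_0$ and passing to the limit $n\to\infty$ requires controlling the commutator $[\Delta,P_n]$, which, by virtue of the compact range of $\L$ (Definition~\ref{compactrange}), is supported in a shell of bounded width near $\partial B_n$. Its contributions should vanish in the $L^2_\loc$-limit thanks to Definition~\ref{asymptoticallypartitioned}(ii) and Lebesgue's dominated convergence theorem, the hyperbolicity bounds of Definition~\ref{globfoliate} providing the integrability needed to justify these limits uniformly in~$n$.
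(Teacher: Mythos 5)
Your proposal follows essentially the same route as the paper: Green's formula in a time strip $L_s^t$, with $s$ chosen so that the boundary term vanishes and $t$ chosen so that the advanced solution drops out of the surface layer integral, reducing everything to $\la \u, G\v\ra_{L^2(M)}$. The small variations — you pair $S^\wedge\u$ with $G\v$ and use $\Delta (G\v)=0$ together with Lemma~\ref{lemmainverseDelta}, where the paper pairs $S^\wedge\u$ with $S^\wedge\v$ and invokes the adjoint relation Proposition~\ref{propadjoingS}; and you explicitly track the $E$-term and the extension of Lemma~\ref{green} from $\Jvary_0$ to $\Jvary_\sc$, which the paper passes over silently — do not change the substance, and your $\pm$ caveat correctly flags a sign that the paper's own displayed computation does not literally match against the convention of Lemma~\ref{green}.
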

\Proof We follow the method in~\cite[proof of Proposition~5.10]{linhyp}.
We let~$\u, \v \in (\Jvary_0)^*$ be arbitrary representatives.
Since~$G \u, G \v \in \Jvary_\sc$ are linearized solutions, the Green's formula~\eqref{eqgreen}
implies that the symplectic form~$\sigma^t$ is time independent. Choosing~$t$ sufficiently large,
the advanced Green's operators applied to~$\u$ and~$\v$ vanish in the surface layer at time~$t$. Therefore,
\[ \sigma^t \big( G \u, G \v \big) = \sigma^t \big( S^\wedge \u, S^\wedge \v \big) \:. \]
Now we apply again the Green's formula~\eqref{eqgreen} in a time strip~$L=L^t_s$, where we choose~$s$ so large
that the jets~$S^\wedge \u$ and~$S^\wedge \v$ vanish in the surface layer at time~$s$. We thus obtain
\begin{align*}
\sigma^{t} &\big( S^\wedge\u, S^\wedge\v \big) = \la S^\wedge\u, \Delta S^\wedge \v\ra_{L^2(L)} - \la \Delta S^\wedge\u, S^\wedge \v\ra_{L^2(L)} \\
&= -\la S^\wedge \u, \v\ra_{L^2(L)} + \la \u, S^\wedge \v \ra_{L^2(L)}
= \la \u, \big( S^\wedge -S^\vee \big) \v\ra_{L^2(L)} = \la \u, G \v\ra_{L^2(L)} \:.
\end{align*}
In the last integral, we can extend the integration range to all of~$M$, giving~$G([\u], [\v])$.
\QED

\begin{Prp}
The two-point function~\eqref{w2def} defines a quasi-free quantum state.
\end{Prp}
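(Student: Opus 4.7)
The plan is to verify three things: (a) that $\omega_2$ descends to a well-defined bilinear form on $\mathcal{E}(M)^\C\times \mathcal{E}(M)^\C$ that is compatible with the CCR ideal~$\mathcal{I}(M)$, so that the quasi-free extension prescribed in Definition~\ref{defquasifree} yields a linear functional on $\mathcal{A}(M)$; (b) that this functional is normalized; and (c) that it is positive. Normalization is immediate from $\omega_0=1$, and the quasi-free extension is a genuine linear functional on $\mathcal{A}(M)$ provided (a) holds, so the real content lies in (a) and (c).

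For step (a), I would first observe that the kernel of $[\u]\mapsto G\u$ is precisely the subspace $N$ which is divided out to form $\mathcal{E}(M)$, so $\omega_2$ in~\eqref{w2def} descends to the quotient. To check compatibility with $\mathcal{I}(M)$, expand $\chi^\hol=\tfrac12(\1-iJ)$ and use the sesquilinear extension of $\sigma^t$ to write
\[
\omega_2\big([\u],[\v]\big) \;=\; \tfrac{1}{2}\,\sigma^t(G\u, J\,G\v) \;+\; \tfrac{i}{2}\,\sigma^t(G\u, G\v)\:.
\]
Then $\omega_2([\u],[\v])-\omega_2([\v],[\u])$ has imaginary part $i\,\sigma^t(G\u,G\v)$, which by Proposition~\ref{prpsigma} equals $iG([\u],[\v])$; the real part $\tfrac12[\sigma^t(G\u,JG\v)-\sigma^t(G\v,JG\u)]$ must be shown to vanish. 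This reduces, via $\sigma^t(\cdot,\cdot)=(\cdot,\mathscr{T}\cdot)^t$, to self-adjointness of $\mathscr{T}J$ with respect to $(\cdot,\cdot)^t$; and $(\mathscr{T}J)^*=J^*\mathscr{T}^*=(-J)(-\mathscr{T})=J\mathscr{T}=\mathscr{T}J$, the last step since $J$ is a function of $\mathscr{T}$ by~\eqref{Idef}. This proves $\omega_2([\u],[\v])-\omega_2([\v],[\u])=iG([\u],[\v])$, which is exactly the relation imposed by the generating elements of $\mathcal{I}(M)$, so the quasi-free extension annihilates $\mathcal{I}(M)$ and hence descends to $\mathcal{A}(M)$.

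For step (c), the symmetric part $\mu([\u],[\v]):=\tfrac12\sigma^t(G\u,JG\v)$ coincides (up to the factor $\tfrac12$) with the restriction of $(\cdot\,|\,\cdot)=\sigma(\cdot,J\cdot)$ to the image of $G$, and is positive semi-definite by the hypothesis made when constructing~$\h^\R$. Moreover, using $J^2=-\1$ and the skew-adjointness of $J$ with respect to the real scalar product $(\cdot,\cdot)^t$, one checks that $J$ preserves $\mu$, giving the key compatibility $|\sigma^t(G\u,G\v)|^2\le 4\,\mu([\u],[\u])\,\mu([\v],[\v])$. This is exactly the condition under which the quasi-free two-point function $\omega_2=\mu+\tfrac{i}{2}\sigma^t$ is the two-point function of the vacuum state in the Fock representation built on the one-particle Hilbert space~$\h$: to make this rigorous, I would realize the symmetric Fock space $\mathcal{F}(\h)$ together with its vacuum vector $\Omega$, introduce the field map $[\u]\mapsto \Phi([\u]):=a(\chi^\ah G\u)+a^*(\chi^\hol G\u)$ extended to $\mathcal{T}(M)$ and verify that it descends to a $*$-representation $\pi$ of $\mathcal{A}(M)$ (the commutation relations $[\Phi([\u]),\Phi([\v])]=iG([\u],[\v])\mathbb{I}$ are the standard CCR computation for $a,a^*$, combined with step~(a)). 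The quasi-free state associated with $\omega_2$ then equals $\omega(A)=\langle\Omega\,|\,\pi(A)\Omega\rangle$, and positivity~\eqref{algstate} follows since $\omega(a^*a)=\|\pi(a)\Omega\|^2\ge 0$.

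The main obstacle is the Fock space realization in step~(c): one must verify that the one-particle space $\h$ (built as the holomorphic subspace with inner product $(\cdot\,|\,\cdot)$) correctly encodes the splitting $G\u=\chi^\hol G\u+\chi^\ah G\u$, and that the creation-annihilation operators defined from this splitting satisfy the CCR algebra obtained in step~(a). Alternatively, one may bypass the explicit Fock construction by invoking the classical result (see e.g.~\cite[Chapter~5]{brunettibook} or~\cite{kay-wald}) that a two-point function of the form $\mu+\tfrac{i}{2}\sigma$ on a symplectic space, with $\mu$ a real inner product dominating $\sigma$ in the above sense, always defines a positive quasi-free state; all hypotheses of that theorem have been verified in steps~(a) and~(c). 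In either approach, the quasi-free property in the sense of Definition~\ref{defquasifree} is built in by construction.
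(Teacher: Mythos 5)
Your proposal is correct and follows essentially the same route as the paper: decompose $\omega_2$ into its real and imaginary parts via $\chi^\hol=\tfrac12(\1-iJ)$, show that the real part is positive semi-definite and that the imaginary part equals $\tfrac12 G([\u],[\v])$ via Proposition~\ref{prpsigma}, and then invoke the standard criterion for quasi-free states from~\cite{brunettibook}. You spell out a few details the paper compresses — in particular the symmetry of the real part (via self-adjointness of $\mathscr{T}J$) and the Cauchy--Schwarz-type dominance inequality required by Proposition~5.2.23(b) of~\cite{brunettibook} — but these are elaborations of the same argument rather than a different approach.
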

\Proof Our task is to verify the positivity statement in~\eqref{algstate} and the
compatibility with the canonical commutation relations. Before beginning, we point out
that the operator~$\mathscr{T}$ maps the real Hilbert space~$\h$ to itself.
As a consequence, the same is true for the operator~$J$ in~\eqref{Idef}.
Therefore, for~$\u, \v \in (\Jvary_0)^*$ and~$g$, we can decompose~\eqref{w2def} into its real and
imaginary parts,
\begin{align*}
\omega_2 \big([\u], [\v] \big)
&=i \sigma^t \big( G \u, \chi^\hol\, G \v \big) 
= \frac{i}{2} \sigma^t \big( G \u, (\1 - i J) G \v \big) \\
&= -\frac{1}{2}\:\sigma^t \big( G \u, (-\mathscr{T}^2)^{-\frac{1}{2}}\: \mathscr{T} \,G \v \big) + \frac{i}{2}\: \sigma^t \big( G \u, G \v \big) \\
&= \frac{1}{2}\:\big( G \u \,\big|\, (-\mathscr{T}^2)^{\frac{1}{2}}\,G \v \big) + \frac{i}{2}\: \sigma^t \big( G \u, G \v \big) \:.
\end{align*}
In particular, we conclude that the real part is positive semi-definite, and that the imaginary part satisfies the relation
\[ \im \omega_2 \big([\u], [\v] \big) = \frac{1}{2}\:\sigma^t \big( G \u, G \v \big) = \frac{1}{2}\:G \big( [\u], [\v] \big) \:, \]
where in the last step we applied Proposition~\ref{prpsigma}.
Now the result follows from Proposition~5.2.23~(b) in the
textbook~\cite{brunettibook}.
\QED

We conclude this paper with a discussion of the significance and uniqueness properties of the
constructed quasi-free state. As already mentioned at the beginning of this section, the
surface layer inner product~\eqref{jipdef} in general does depend on~$t$ and the choice of the
foliation. However, for non-interacting systems like the so-called {\em{linear systems
in Minkowski space}} as introduced in~\cite[Section~6.1]{fockbosonic},
the surface layer inner product is indeed conserved for linearized solutions.
As a consequence, the quasi-free state defined above becomes independent of time and
of the choice of foliation. This result is compatible with the fact that in Minkowski space,
there is a unique vacuum state obtained by frequency splitting.
Indeed, the explicit analysis in~\cite{action} shows that for Dirac systems in Minkowski space,
the complex structure defined by~\eqref{Idef} indeed gives back frequency splitting
(in the sense that holomorphic jets have positive frequency, and anti-holomorphic jets have
negative frequency).
In this setting, the quasi-free quantum field theories constructed in the present paper
can be regarded as quantum fields involving an ultraviolet regularization for the degrees
of freedom as described by the corresponding causal variational principle.

In nonlinear interacting systems or in systems not defined in Minkowski space,
we do not get one distinguished state, but instead a whole family of states
parametrized by time and the chosen foliation.
In analogy to the class of Hadamard states, it can be regarded as a family
of physically sensible states. More precisely, taking into account the ultraviolet regularization,
our quasi-free states should be of the {\em{regularized Hadamard form}} as introduced
and analyzed in~\cite{reghadamard}. However, making this connection precise, one would
have to consider families of causal variational principles parametrized by
the regularization length~$\varepsilon>0$ and analyze the asymptotic behavior of the
linearized solutions for small~$\varepsilon$. This analysis seems an interesting project for the future.

\Thanks{{{\em{Acknowledgments:}} We are grateful to the referee for valuable comments.
C.D.\ is grateful to the Vielberth Foundation, Regensburg,
for generous support.

\providecommand{\bysame}{\leavevmode\hbox to3em{\hrulefill}\thinspace}
\providecommand{\MR}{\relax\ifhmode\unskip\space\fi MR }
\providecommand{\MRhref}[2]{%
  \href{http://www.ams.org/mathscinet-getitem?mr=#1}{#2}
}
\providecommand{\href}[2]{#2}

\end{document}